\newcommand{\N}{\mathbb{N}}
\newcommand{\R}{\mathbb{R}}
\newcommand{\cuad}{\begin{flushright}\vspace{-2ex}$\Box$\vspace{-2ex}\end{flushright}}
\newenvironment{Prf}[1][\unskip]{%
\noindent
{\large\textbf{Proof of #1}}\newline
\vspace{-2ex}\noindent{}\newline}\cuad
\newtheorem{thm}{Theorem}
\newtheorem{prop}[thm]{Proposition}
\newtheorem{lem}[thm]{Lemma}
\newtheorem{remark}{Remark}
\newtheorem{ex}{Example}
\begin{document}

\pagenumbering{arabic}
%\pagestyle{myheadings}
%\singlespace
%\onehalfspace

%%%%%%%%%%%%%%%%%%%%%%%%%%%%%%%%%%%%%%%%%%%%%%%%%%%%%%%%%%%%%%%%%%%
\title{Robust estimation in controlled branching processes: Bayesian estimators via disparities}
\author{M. Gonz\'alez, C. Minuesa, I. del Puerto, A.N. Vidyashankar}

\maketitle

\begin{abstract}
This paper is concerned with Bayesian inferential methods for data from controlled branching processes that account for model robustness through the use of disparities. Under regularity conditions, we establish that estimators built on disparity-based posterior, such as expectation and maximum a posteriori estimates, are consistent and efficient under the posited model. Additionally, we show that the estimates are robust to model misspecification and presence of aberrant outliers. To this end, we develop several fundamental ideas relating minimum disparity estimators to Bayesian estimators built on the disparity-based posterior, for dependent tree-structured data. We illustrate the methodology through a simulated example and apply our methods to a real data set from cell kinetics.
\end{abstract}

\vspace{.2cm}

\noindent {\bf Keywords: }{Branching process. Controlled process. Disparity measures. Robustness. Bayesian inference}

\vspace{.5cm}

%%%%%%%%%%%%%%%%%%%%%%%%%%%%%%%%%%%%%%%%%%%%%%%%%%%%%%%%%%%%%%%%%%%
\section{Introduction}\label{Dposterior:sec:introduction}
%%%%%%%%%%%%%%%%%%%%%%%%%%%%%%%%%%%%%%%%%%%%%%%%%%%%%%%%%%%%%%%%%%%

Branching processes are routinely used to model population evolution in a variety of  scientific disciplines such as cell biology, population demography, biochemical processes, genetics, epidemiology, and actuarial sciences (see, for instance \cite{Devroye}, \cite{Haccou}, \cite{procedings-workshop}, \cite{Kimmel} and \cite{procedings-workshop2}). While several variants of branching processes are available, a particularly useful variant is the controlled branching process (CBP) with random control functions, on which this work is focused.

These processes are discrete time and discrete state stochastic processes -like classical Galton-Watson processes (GWPs)- describing the growth of a population across generations using probability distributions for reproduction. However, unlike the GWPs, the number of progenitors is determined by a random mechanism which is referred to as control functions. A good reference for the  probabilistic theory and inferential issues developed until now for the CBPs is the recent monograph \cite{CBPs}. The role of control functions is to specify the number of progenitors in each generation enabling to model a variety of random migratory movements such as immigration and emigration. These CBPs allow for modeling real-life phenomenon with great flexibility. Nevertheless, such a great flexibility of CBPs comes with a cost; namely, they require specification of multiple distributions such as the offspring distribution and the control distributions. It is well-known (see \cite{art-MDE}) that specifying simultaneously the offspring distribution and control distributions for data analysis is challenging and is prone to misspecifications. Divergence-based methods have been used to provide methodologies for inference in these settings that are robust to presence of aberrant outliers and efficient when the posited model is correct, see \cite{Sriram-2000} for the GWP and  the more general approach given in \cite{art-MDE} for the CBP. While these papers deal with the problem from the frequentist standpoint, divergence-based methods in the Bayesian framework in the context of branching processes with strong theoretical guarantees do not exist. This paper is focussed on addressing this problem.

Bayesian inference in the presence of model misspecification has received much attention in recent years. In the context of classification problems, \cite{JT08} studied the behaviour of the so-called Gibbs posterior under a variety of conditions on the risk functions and hence allowing for potential model misspecification. On the other hand, \cite{Hooker-Vidyshankar-2014} provided an alternative approach for Bayesian inference under model misspecification using divergences. Indeed, in their work they evaluate the effect of model misspecifications by studying the asymptotic behaviour of the posterior estimates under posited model and misspecified model settings. Alternatively, \cite{Ghosh-Basu-2016,Ghosh-Basu-2017} used power divergences to derive robust Bayesian methods. More recently, \cite{MD15} developed an alternative coarsening approach which essentially amounts to assuming that the observed data are within an $\epsilon-$ neighborhood of posited model for a suitably defined neighborhood and $\epsilon$; a variety of examples illustrating their approach is provided in the aforesaid work.

The current paper is concerned with the Bayesian estimation of the offspring distribution, which is assumed to belong to a parametric family, via disparities using CBP data. Considering the sample given by the entire family tree, roughly speaking our proposed method consists in replacing the log-likelihood in the expression of the posterior distribution   with an appropriately scaled disparity measure, and considering  the expectation and the mode of the resulting function we get Bayes estimators,  known as EDAP and MDAP estimators (see Section \ref{Dposterior:sec:method} for definition), respectively. We establish their asymptotic properties such as consistency and asymptotic normality and their robustness properties. Indeed, we show that such estimators are efficient when the posited model is correct.  These results are the first ones for dependent tree-structured data. Additionally, when the control function yields the population size of a generation as progenitors, one obtains results for classical GWPs. We also present two examples in which we use the statistical software and programming environment R. In the first example, we deal with a real data set of a oligodendrocyte cell population, firstly used in \cite{Yakovlev-Stoimenova-Yanev-2008}. The second example is a simulation study to illustrate the goodness of this methodology, paying especial attention to the robustness properties.

Besides this Introduction, the paper is structured as follows: Section \ref{Dposterior:sec:model} describes the CBP model and states the hypotheses considered throughout the manuscript. Section \ref{Dposterior:sec:method} is concerned with the description of Bayes estimators under disparity measures and provides sharp probabilistic bounds for differences between the EDAP and MDAP estimators and the frequentist minimum disparity estimators. Section \ref{Dposterior:sec:asymptotic-properties} is devoted to the asymptotic properties of the estimators while Section \ref{Dposterior:sec:robustness-properties} deals with robustness properties. Section \ref{Dposterior:sec:examples} contains the examples briefly described above and the contributions of the paper are summarized in Section \ref{Dposterior:sec:conclusion}. Finally, the proofs of the results are presented in the Supplementary material.

%%%%%%%%%%%%%%%%%%%%%%%%%%%%%%%%%%%%%%%%%%%%%%%%%%%%%%%%%%%%%%%%%%%
\section{The probability model}\label{Dposterior:sec:model}
%%%%%%%%%%%%%%%%%%%%%%%%%%%%%%%%%%%%%%%%%%%%%%%%%%%%%%%%%%%%%%%%%%%

In mathematical terms, a \emph{controlled branching process (CBP) with random control functions} is a process $\{Z_n\}_{n\in\N_0}$ defined recursively as:

\begin{equation}\label{Dposterior:def:model}
Z_0=N,\quad Z_{n+1}=\sum_{j=1}^{\phi_n(Z_{n})}X_{nj},\quad n\in\N_0,
\end{equation}
where $\N_0=\N\cup\{0\}$, $N\in\N$, $\{X_{nj}:\ n\in\N_0;\ j\in\N\}$ and $\{\phi_n(k):n,k\in\N_0\}$ are two independent families of non-negative integer valued random variables defined on a probability space $(\Omega,\mathcal{A},P)$. The empty sum in \eqref{Dposterior:def:model} is considered to be 0. The random variables $X_{nj}$, $n\in\N_0$, $j\in\N$, are assumed to be independent and identically distributed (i.i.d.) and $\{\phi_n(k)\}_{k\in\N_0}$, $n\in\N_0$, are independent stochastic processes with equal one-dimensional probability distributions. Intuitively, this process models an evolving population in which each individual reproduces, independently of each other and of the previous generation population, according to the same probability distribution. Similar to the classical GWP in the reproduction mechanism, the number of reproducing individuals in the $n$-th generation, $\phi_n(Z_n)$, is  however a random function $\phi_n(\cdot)$ of the generation size $Z_n$ rather than $Z_n$ itself. In the case when $\phi_n(x) \equiv x$, one obtains the classical GWP.  As in the GWP, the quantity $X_{nj}$ can be interpreted as the number of offspring produced by the $j$-th progenitor in the $n$-th generation.  The terminology, number of progenitors of the $n$-th generation will sometimes be used to describe $\phi_n(Z_n)$. We denote the offspring distribution by $p=\{p_k\}_{k\in\N_0}$, $p_k=P[X_{01}=k]$, $k\in\N_0$. Moreover, in relation to the moments of the process, we denote by
\begin{equation*}
m=E[X_{01}],\quad \mbox{and}\quad \sigma^2=Var[X_{01}],
\end{equation*}
the  offspring mean and variance (assumed to be finite), while by
\begin{equation*}
\varepsilon(k)=E[\phi_0(k)],\quad \mbox{and}\quad \sigma^2(k)=Var[\phi_0(k)],\quad k\in\N_0,
\end{equation*}
the mean and variance function of the random control functions.

In this paper, we focus on offspring distributions that are parametric; that is, we assume that $p_k(\theta)= p_k=P[X_{01}=k]$, for $k\in\N_0$, $\theta\in\Theta$ and $\Theta \subseteq \R$ with a non-empty interior.  Furthermore, we denote the parametric family by ${\mathcal{F}}_{\Theta}=\{p(\theta): \theta \in \Theta\}$, where $p(\theta)=\{p_k(\theta): k\in\N_0\}$, is the offspring distribution for each $\theta \in \Theta$. Hence, if the offspring distribution generating the CBP is parametric, then we assume that there exists an interior point $\theta_0 \in \Theta$ such that $p_k=p_k(\theta_0)$, for all $k\in\N_0$, and hence, we write  $m=m(\theta_0)$ and $\sigma^2=\sigma^2(\theta_0)$ for $m$ and $\sigma^2$,  respectively. We notice here that the assumption $\Theta \subseteq \R$ can be relaxed substantially at the cost of more cumbersome notations. Furthermore, we also assume that the parametric model satisfies the following condition; namely that

\begin{equation}\label{Dposterior:eq:identif}
p_k(\theta_1)=p_k(\theta_2),\quad \forall k\in\N_0\qquad \Rightarrow\qquad \theta_1=\theta_2.
\end{equation}
We  will  frequently  refer  to  the  above  condition  as  identifiability  condition  in  the  rest  of  the
manuscript.

The main asymptotic results that we describe in the paper will require that the generation sizes diverge to infinity with positive probability. While the assumption of supercriticality of the offspring distribution is sufficient in the ordinary GWP case, one needs a slight modification of such a condition for the CBP. Previously, we need the following assumption that we make throughout the paper.

\begin{enumerate}[label=(H\arabic*),ref=(H\arabic*),start=1]
\item The offspring and control distributions satisfy the following two conditions:
\begin{enumerate}
\item [(a)] $p_0(\theta) > 0$, or $P[\phi_n(k) = 0] > 0$, $k\in\N$, $\theta\in\Theta$.
\item [(b)] $\phi_n(0) = 0$ almost surely (a.s.).
\end{enumerate}
\end{enumerate}

Under these conditions, it can be seen that $k=1,2,\ldots$ are transient states while 0 is an absorbing state. As a consequence, the classical GWP duality holds; namely, $P[Z_n\to 0]+P[Z_n\to\infty]=1$ (see \cite{Yanev-75}).  A summary on sufficient conditions to guarantee that the non-extinction set has positive probability can be found in Chapter 3 in \cite{CBPs}. Assumption \ref{Dposterior:cond:consistencyMLE} in Section \ref{Dposterior:sec:method} fixes the framework we need in relation to this issue.

In this paper, we describe a new methodology for robust Bayesian inference for the parameters of the offspring distribution via the use of disparities based on the observations from a CBP tree; that is, the sample is given by the entire family tree up to generation $n$; i.e.,
\begin{equation*}
\mathcal{Z}_n^*=\{Z_l(k): 0\leq l\leq n-1; k\in\N_0\},\quad \mbox{ with }\quad Z_l(k)=\sum_{i=1}^{\phi_l(Z_l)}I_{\{X_{li}=k\}},
\end{equation*}
where $I_A(\cdot)$ represents the indicator function of the set $A$. Notice that $Z_l(k)$ represents the number of individuals in generation $l$ who have exactly $k$ offspring. The proposed methods are a principled approach to address model misspecification, within  a Bayesian framework, which is frequently encountered when working with complex tree-structured data. Finally, the results presented in the manuscript are the first ones for a tree-structured Markov chain data. Before we state the main results of the paper, we introduce a few additional notations.  Let

\begin{equation*}
Y_{l}(k)=\sum_{j=0}^{l} Z_j(k),\quad Y_l=\sum_{j=0}^l Z_j,\quad \mbox{and}\quad \Delta_l=\sum_{j=0}^l \phi_j(Z_j),\quad l,k\in\N_0.
\end{equation*}
We observe that  $Y_{l}(k)$ represents the total number of progenitors who have exactly $k$ offspring up to generation $l$. Furthermore, $Y_l$ and  $\Delta_l$  represent the total number of individuals and the total number of progenitors until the $l$-th generation.

%%%%%%%%%%%%%%%%%%%%%%%%%%%%%%%%%%%%%%%%%%%%%%%%%%%%%%%%%%%%%%%%%%%
\section{Bayesian estimators using disparity measures}\label{Dposterior:sec:method}
%%%%%%%%%%%%%%%%%%%%%%%%%%%%%%%%%%%%%%%%%%%%%%%%%%%%%%%%%%%%%%%%%%%

Let $\pi(\cdot)$ denote a prior density on $\Theta$.
% We assume throughout the paper that support of $\pi(\cdot)$ is $\Theta$ rendering $\pi(\theta) >0$ for all $\theta \in \Theta$.
Then, using the Bayes Theorem and Markov property it can be seen that the posterior density of $\theta$ given the sample $\mathcal{Z}_n^*$ is
\begin{equation}\label{Dposterior:eq:posterior-tree}
\pi(\theta | \mathcal{Z}_n^*)\propto e^{-\Delta_{n-1} KL(\hat{p}_n,\theta)}\pi(\theta),
\end{equation}
where $KL(q,\theta)$ is the Kullback-Leibler divergence between a probability distribution $q$ defined on $\N_0$ and $p(\theta)$; that is,
\begin{equation*}
KL(q,\theta)=\sum_{k=0}^\infty \log\left(\frac{q_k}{p_k(\theta)}\right)q_k,
\end{equation*}
and $\hat{p}_n=\{\hat{p}_{k,n}\}_{k\in\N_0}$ is the non-parametric maximum likelihood estimator (MLE) of the offspring distribution, $p$, based on the sample $\mathcal{Z}_n^*$ (see \cite{art-EM}); that is,
\begin{equation}\label{Dposterior:eq:offspring-dist-MLE}
\hat{p}_{k,n}=\frac{Y_{n-1}(k)}{\Delta_{n-1}},\qquad k\in\N_0,\ n\in\N.
\end{equation}
%The reader is referred to \cite{art-EM} for further details about this estimator and the likelihood function of $\theta$ given the sample $\mathcal{Z}_n^*$.

Two summary Bayes estimators for the parameter $\theta$ are given by the posterior mean or expectation a posteriori (EAP) and the posterior mode or maximum a posteriori (MAP). These are defined as:
\begin{equation*}
\theta_n^*=\int_\Theta \theta \pi(\theta|\mathcal{Z}_n^*) d\theta,\quad \text{ and }\quad \theta_n^+=\arg\max_{\theta\in\Theta}\pi(\theta|\mathcal{Z}_n^*),\quad n\in\N,
\end{equation*}
respectively. However, these estimators fail to yield robust estimates for the offspring parameter $\theta_0$ as is shown in the example below.

\vspace*{1ex}

\begin{ex}\label{Dposterior:ex:LD}
We consider a CBP starting with $Z_0=1$ individual and for each $k\in\N_0$ the control variable $\phi_n(k)$ follows Poisson distribution with parameter equal to $\lambda k$, with $\lambda=0.3$. In practice, these control functions are suitable to describe an environment with an expected emigration. For the offspring distribution, we consider a geometric one with parameter $\theta_0=0.3$, but which is affected by the presence of outliers that can occur at the point $L=11$ with probability 0.15. The offspring mean and variance are $m=m(\theta_0)=(1-\theta_0)/\theta_0=2.333$ and $\sigma^2=\sigma^2(\theta_0)=(1-\theta_0)/\theta_0^2=7.778$, respectively.

Using the statistical software \texttt{R}, we have simulated the first 45 generations of such a process, $z_{45}^*$. The evolution of the number of individuals and progenitors is shown in Figure \ref{Dposterior:fig:cont-pop-post-KL} (left), where a growth in both groups is observed. An estimate of the posterior density function of $\theta$ upon the sample $z^*_{45}$ and using a beta distribution with parameters 1/2 and 1/2 as a non-informative prior distribution (see \cite{Berger-Bernardo-1992}) is plotted in Figure \ref{Dposterior:fig:cont-pop-post-KL} (centre), where one notices the poor estimation for the offspring parameter given by such a function. To illustrate this fact, in Figure \ref{Dposterior:fig:cont-pop-post-KL} (right) we have also plotted the evolution of the EAP and MAP estimates for $\theta_0$ through the generations.

\begin{figure}[H]
\centering\includegraphics[width=0.3\textwidth]{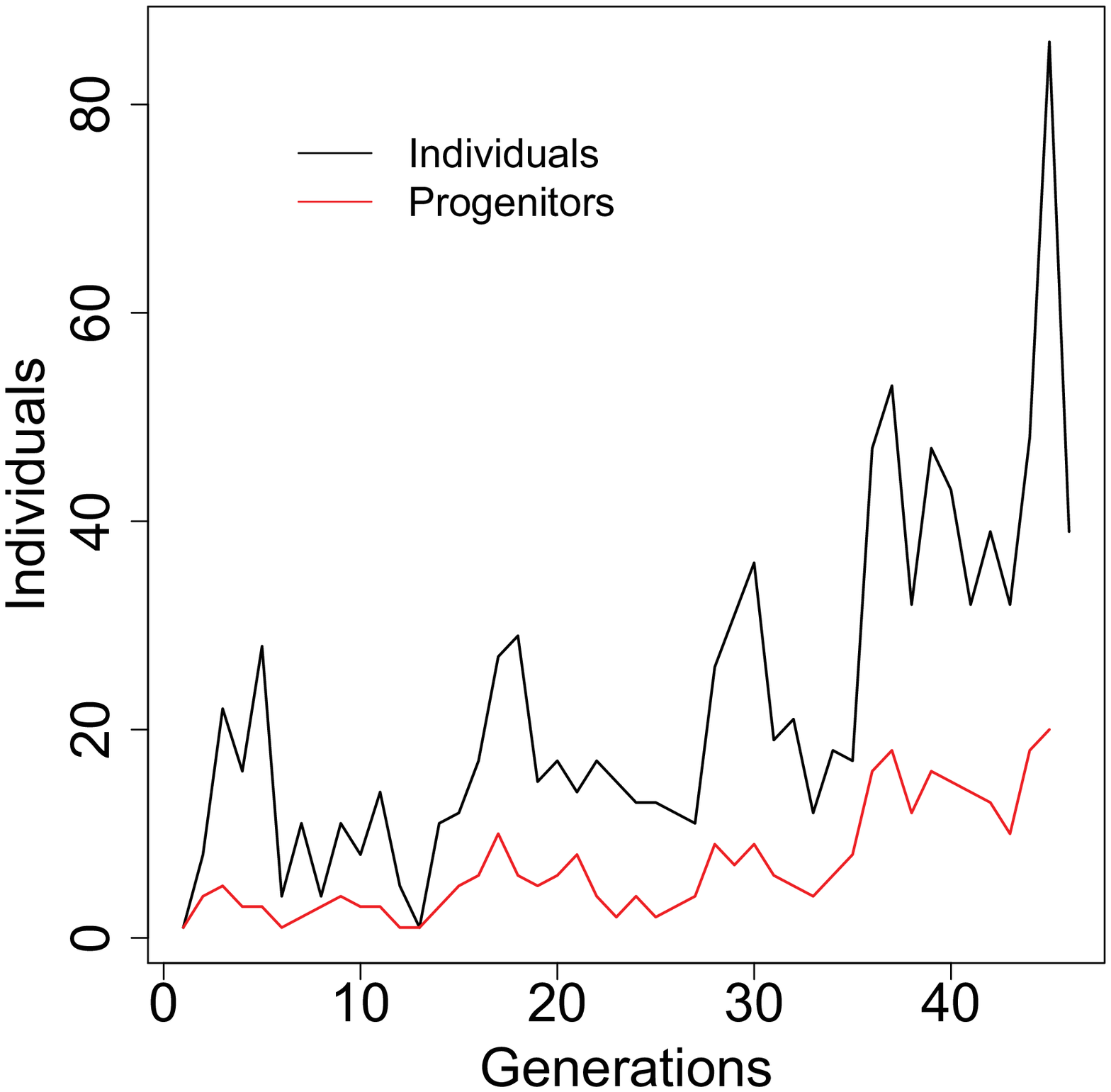}\hspace*{0.03\textwidth}
\includegraphics[width=0.3\textwidth]{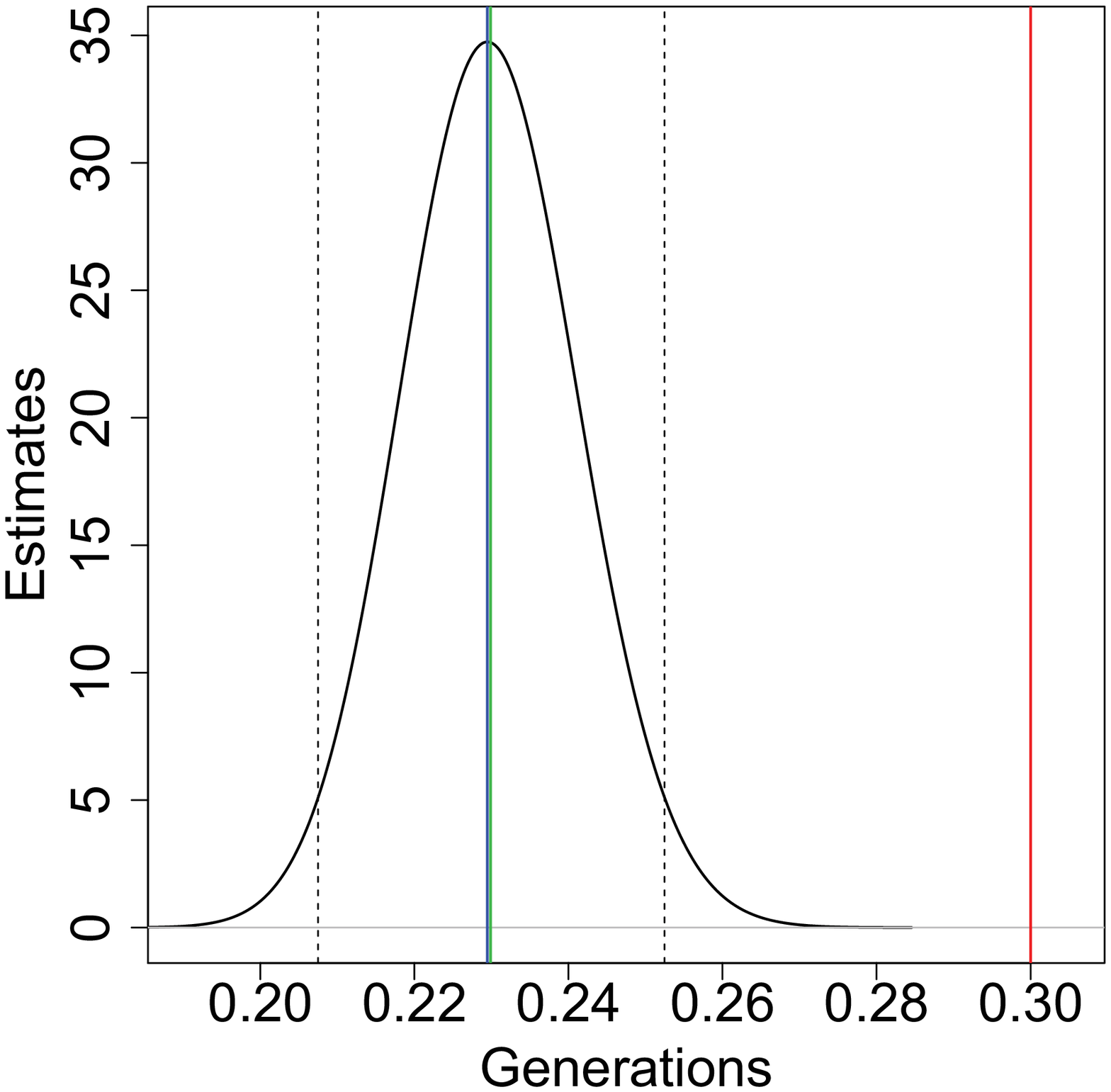}\hspace*{0.03\textwidth}
\centering\includegraphics[width=0.3\textwidth]{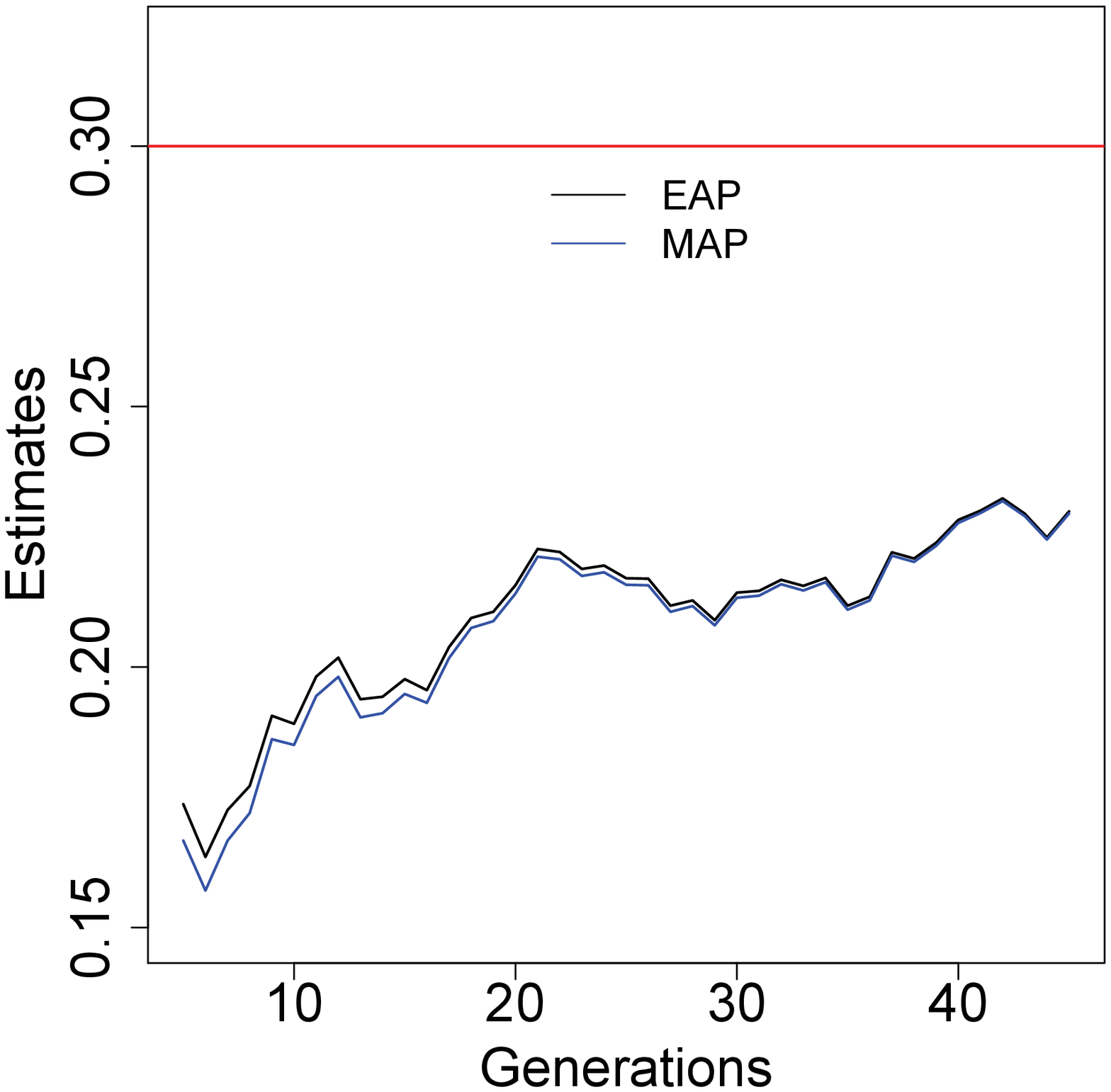}
\caption{Left: evolution of the number of individuals (black line) and progenitors (red line). Centre: estimate of the posterior density of $\theta$ given the sample $z_{45}^*$, together with EAP estimate (blue line) and MAP estimate (green line), high posterior density (HPD) interval (dashed line) and true value of $\theta_0$ (red line). Right: evolution of EAP estimates (black line) and MAP estimates (blue line) of $\theta_0$ through the generations, where the horizontal red line represents the true value of the  parameter.}\label{Dposterior:fig:cont-pop-post-KL}
\end{figure}
\end{ex}

\vspace*{1ex}

Situations similar to the above example clearly showcase the need of robust estimators against outlier contamination. Since outlier contamination can frequently be described using a mixture model, the  robustness of the  Bayes estimators to outliers can be cast in the general framework of model misspecification. The general disparity-based approach, as described in \cite{Hooker-Vidyshankar-2014} and  adapted to the family tree data ${\mathcal{Z}}_n^*$ generated by the CBP, facilitates such an analysis. As in \cite{Hooker-Vidyshankar-2014}, the equation \eqref{Dposterior:eq:posterior-tree} suggests defining a density function by replacing the Kullback-Leibler divergence with a suitable disparity measure $D$ defined on $\Gamma\times\Theta$, where $\Gamma$ is the set of all probability distributions defined on the non-negative integers. A \emph{disparity measure} is defined (see  \cite{lindsay}) using a strictly convex and thrice differentiable function $G:[-1,\infty)\rightarrow \R$ satisfying $G(0)=0$ as follows:
\begin{eqnarray*}
D: \Gamma\times\Theta&\rightarrow& [0,\infty]\nonumber\\
(q,\theta)&\mapsto&D(q,\theta)=\sum_{k=0}^\infty G(\delta(q,\theta,k))p_k(\theta),\label{Dposterior:eq:disparity}
\end{eqnarray*}
where $q=\{q_k\}_{k\in\N_0}$, and $\delta(q,\theta,k)$ denotes the ``Pearson residual at $k$'', that is,
\begin{equation*}\label{Dposterior:eq:Pearson-residual}
\delta(q,\theta,k)=\left\{
                     \begin{array}{ll}
                       \frac{q_k}{p_k(\theta)}-1, & \hbox{if $p_k(\theta)>0$;} \\
                       0, & \hbox{otherwise.}
                     \end{array}
                   \right.
\end{equation*}

The resulting function after replacing the Kullback-Leibler divergence with a disparity-based quantity is referred to as \emph{$D$-posterior density function} at $\hat{p}_n$ and is defined as:
\begin{equation}\label{Dposterior:eq:D-posterior-tree}
\pi_D^n(\theta|\hat{p}_n)=\frac{e^{-\Delta_{n-1}D(\hat{p}_n,\theta)}\pi(\theta)}{\int_\Theta e^{-\Delta_{n-1}D(\hat{p}_n,\theta)}\pi(\theta)d\theta}.
\end{equation}
More generally, the previous definition can be extended for an arbitrary probability distribution $q\in\Gamma$, yielding the $D$-posterior density function at $q$ that is given by
\begin{equation}\label{Dposterior:def:D-posterior-density}
\pi_D^n(\theta|q)=\frac{e^{-\Delta_{n-1}D(q,\theta)}\pi(\theta)}{\int_\Theta e^{-\Delta_{n-1}D(q,\theta)}\pi(\theta)d\theta}.
\end{equation}
Observe that \eqref{Dposterior:def:D-posterior-density} is well defined for each $q\in\Gamma$ if and only if $supp(\pi)\cap\{\theta\in\Theta: D(q,\theta)<\infty\}$ has non-null Lebesgue measure, where $supp(\pi)$ denotes the support of the prior density function. In particular, if the disparity measure $D$ is bounded on $\Gamma\times\Theta$, then $supp(\pi)\cap\{\theta\in\Theta:D(q,\theta)<\infty\}=supp(\pi)$, for each $q\in\Gamma$, and the density function in \eqref{Dposterior:def:D-posterior-density} is well defined. Taking into consideration this fact, we focus our attention on the following set:
\begin{equation*}\label{Dposterior:eq:set-Gamma-tilde}
\widetilde{\Gamma}=\{q\in\Gamma: supp(\pi)\cap\{\theta\in\Theta: D(q,\theta)<\infty\} \text{ has non-null Lebesgue measure}\}.
\end{equation*}
Analogous to the classical Bayes estimators, one can summarize the information from the $D$-posterior using the  estimators \emph{expectation a $D$-posteriori (EDAP}) and \emph{maximum a $D$-posteriori (MDAP}); these are defined as follows:
\begin{itemize}
\item \emph{Expectation a $D$-posteriori (EDAP}):
\begin{equation}\label{Dposterior:def:EDAP-estimator}
\theta_n^{*D}=\int_\Theta \theta \pi_D^n(\theta|\hat{p}_n) d\theta,\quad n\in\N.
\end{equation}
\item \emph{Maximum a $D$-posteriori (MDAP)}:
\begin{equation}\label{Dposterior:def:MDAP-estimator}
\theta_n^{+D}=\arg\max_{\theta\in\Theta}\pi_D^n(\theta|\hat{p}_n),\quad n\in\N.
\end{equation}
\end{itemize}
We next focus on  a few useful examples of non-negative disparity measures and the corresponding $D$-posterior density functions and refer to \cite{cressie-84} and \cite{lindsay} for additional examples.
\begin{ex}
\begin{enumerate}
  \item [(a)] The Kullback-Leibler divergence with the parametric family $\mathcal{F}_\Theta$, defined above, is provided by the function $G(\delta)=(\delta+1)\log(\delta+1)-\delta$. Consequently, the $D$-posterior distribution and the EDAP and MDAP estimators using this disparity coincide with the posterior distribution and the EAP and MAP estimators, respectively.
  \item [(b)] The disparity determined by the function $G(\delta)=2[(\delta+1)^{1/2}-1]^2$ is the \emph{squared Hellinger distance} (or simply Hellinger distance), denoted by $HD(q,\theta)$, for each $q\in\Gamma$, and $\theta\in\Theta$. The $D$-posterior distribution using this disparity is referred to as $HD$-posterior distribution and the EDAP and MDAP estimators are denoted by $\theta_n^{*HD}$ and $\theta_n^{+HD}$, respectively, for each $n\in\N$.
   \item [(c)] The disparity defined by using the function $G(\delta)=e^{-\delta}-1+\delta$ is known as \emph{negative exponential disparity} and denoted by $NED(q,\theta)$ for each $q\in\Gamma$, and $\theta\in\Theta$. The $D$-posterior distribution using this disparity is referred to as $NED$-posterior distribution and the EDAP and MDAP estimators are denoted by $\theta_n^{*NED}$ and $\theta_n^{+NED}$, respectively, for each $n\in\N$.
\end{enumerate}
\end{ex}

%\vspace*{3ex}

For the study of robustness properties of the EDAP and MDAP estimators, which we address in Section \ref{Dposterior:sec:robustness-properties}, we introduce the \emph{EDAP functions}, defined for each $n\in\N$ as follows:
\begin{eqnarray*}\label{Dposterior:def:EDAP-functional}
\overline{T}_n: \Gamma \times \Omega &\rightarrow & \Theta\nonumber\\
(q,\omega) &\mapsto & \overline{T}_n(q)(\omega)=\frac{\int_\Theta\theta e^{-\Delta_{n-1}(\omega)D(q,\theta)}\pi(\theta)d\theta}{\int_\Theta e^{-\Delta_{n-1}(\omega)D(q,\theta)}\pi(\theta)d\theta},
\end{eqnarray*}
and the \emph{MDAP functions}, defined as:
\begin{eqnarray*}\label{Dposterior:def:MDAP-functional}
\widetilde{T}_n: \Gamma\times\Omega &\rightarrow & \Theta\nonumber\\
(q,\omega) &\mapsto &\widetilde{T}_n(q)(\omega)=\arg\min_{\theta\in\Theta}\ (\Delta_{n-1}(\omega)D(q,\theta)-\log (\pi(\theta))),
\end{eqnarray*}
whenever this minimum exists. With these definitions, $\theta_n^{*D}(\omega)=\overline{T}_n(\hat{p}_n)(\omega)$, and $\theta_n^{+D}(\omega)=\widetilde{T}_n(\hat{p}_n)(\omega)$. Note that these functions depend on the total number of progenitors, and  one has different EDAP and MDAP functions for each disparity measure and for each prior distribution; however, in order to ease the notation we will not make explicit this relation.

Despite the similarities between EDAP and MDAP functions defined in the context of branching process and those in the case of i.i.d. random variables described in \cite{Hooker-Vidyshankar-2014}, there is a key difference between both methods that needs to be studied, which is the random nature of EDAP and MDAP functions in the present context (see, for instance \cite{Brown-Purves}). This randomness then leads to questions concerning the existence and measurability of these functionals. To this end, we consider as $\sigma$-field on $\Gamma\times\Omega$ the product of the $\sigma$-fields on $\Gamma$ and $\Omega$, where $\Gamma$ is taken with the Borel $\sigma$-field induced by the topology generated by the $l_1$-metric. In the following, let denote the $l_r$-norm, $r\geq 1$, by $||\cdot||_r$, that is, if $h=\{h_k\}_{k\in\N_0}$ is a sequence of real numbers, then $||h||_r=\left(\sum_{k=0}^\infty |h_k|^r\right)^{1/r}$; when $||h||_r<\infty$, $h$ is said to belong to $l_r$. We also make the following assumption throughout the paper:
\begin{enumerate}[label=(H\arabic*),ref=(H\arabic*),start=2]
\item The prior density $\pi(\cdot)$ posses finite absolute first moment; i.e.\label{Dposterior:mean:prior}
\begin{equation*}
\int_{\Theta} |\theta|\pi(\theta) d\theta < \infty.
\end{equation*}
\end{enumerate}

Our first proposition is concerned with the existence and continuity of the EDAP functional. We note here that the existence is immediate under \ref{Dposterior:mean:prior}.
\begin{prop}\label{Dposterior:prop:existence-EDAP}
Under the assumption that $D(q,\cdot)$ is a continuous function on $\Theta$ for every $q\in\widetilde{\Gamma}$, then, for each $n\in\N$ fixed:
\begin{enumerate}[label=(\roman*),ref=\emph{(\roman*)}]
\item $\overline{T}_n(q)$ exists finitely with probability one and $\overline{T}_n(q)$ is a random variable.\label{Dposterior:prop:existence-EDAP-functional}
\item If $D(\cdot,\theta)$ is continuous in $\widetilde{\Gamma}$ with respect to the $l_1$-metric for each $\theta\in\Theta$, then $\overline{T}_n(\cdot)$ is almost surely continuous on $\widetilde{\Gamma}$ with respect to the $l_1$-metric; that is, $q_j \to q$ in $l_1$, then $\overline{T}_n(q_j)\to \overline{T}_n(q)$, as $j\to\infty$, with probability one. Moreover, $\overline{T}_n$ is a random variable.\label{Dposterior:prop:existence-continuity-EDAP}
\end{enumerate}
%\end{enumerate}
\end{prop}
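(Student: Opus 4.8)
The plan is to prove the two parts separately, treating $\omega$ as fixed (so $\Delta_{n-1}(\omega)$ is a fixed non-negative integer) and then addressing measurability.

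For part \ref{Dposterior:prop:existence-EDAP-functional}, first I would fix $q\in\widetilde\Gamma$ and $\omega$ in the full-probability set where $\Delta_{n-1}(\omega)<\infty$. Write $\Delta=\Delta_{n-1}(\omega)$. The denominator of $\overline{T}_n(q)(\omega)$ is $\int_\Theta e^{-\Delta D(q,\theta)}\pi(\theta)\,d\theta$; since $q\in\widetilde\Gamma$ the integrand is strictly positive on a set of positive Lebesgue measure intersected with $supp(\pi)$, so the denominator is strictly positive, and it is bounded above by $1$ because $D\ge 0$ and $\pi$ integrates to one. For the numerator, the bound $|\theta|e^{-\Delta D(q,\theta)}\le|\theta|$ together with \ref{Dposterior:mean:prior} shows $\int_\Theta|\theta|e^{-\Delta D(q,\theta)}\pi(\theta)\,d\theta\le\int_\Theta|\theta|\pi(\theta)\,d\theta<\infty$, so the ratio exists finitely. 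This establishes existence with probability one. Measurability of $\overline{T}_n(q)$ as a function of $\omega$ then follows because $\omega$ enters only through $\Delta_{n-1}(\omega)$, which is a random variable, and both numerator and denominator are deterministic measurable (indeed continuous) functions of the scalar $\Delta$; composing a Borel function with a random variable gives a random variable.

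For part \ref{Dposterior:prop:existence-continuity-EDAP}, the approach is a dominated-convergence argument. Fix $\omega$ with $\Delta=\Delta_{n-1}(\omega)<\infty$ and take $q_j\to q$ in $l_1$ with all $q_j\in\widetilde\Gamma$. By the assumed $l_1$-continuity of $D(\cdot,\theta)$, for each fixed $\theta$ we have $D(q_j,\theta)\to D(q,\theta)$, hence $e^{-\Delta D(q_j,\theta)}\to e^{-\Delta D(q,\theta)}$ and likewise for the integrand weighted by $\theta$. The dominations $e^{-\Delta D(q_j,\theta)}\pi(\theta)\le\pi(\theta)$ and $|\theta|e^{-\Delta D(q_j,\theta)}\pi(\theta)\le|\theta|\pi(\theta)$, with the right-hand sides integrable by hypothesis (normalization and \ref{Dposterior:mean:prior} respectively), let me pass to the limit in both numerator and denominator. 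Since the limiting denominator is strictly positive (as $q\in\widetilde\Gamma$), the ratio converges, giving $\overline{T}_n(q_j)(\omega)\to\overline{T}_n(q)(\omega)$. As this holds for every $\omega$ in the full-probability event $\{\Delta_{n-1}<\infty\}$, the almost sure continuity follows. Finally, joint measurability of $\overline{T}_n$ on $\Gamma\times\Omega$ is obtained by noting that $\overline{T}_n$ is a Carath\'eodory-type map: continuous in $q$ for fixed $\omega$ (just shown) and measurable in $\omega$ for fixed $q$ (part \ref{Dposterior:prop:existence-EDAP-functional}); such maps are jointly measurable with respect to the product $\sigma$-field when $\Gamma$ is separable, which holds since $l_1$ is separable.

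The main obstacle I anticipate is not the convergence itself but justifying the passage to the limit rigorously, namely verifying that the strict positivity of the limiting denominator genuinely transfers from $q\in\widetilde\Gamma$ and that the dominating functions are integrable uniformly in $j$; the $\theta$-weighted numerator is where assumption \ref{Dposterior:mean:prior} is essential, and the strict positivity of the denominator is where membership in $\widetilde\Gamma$ does the work. The joint-measurability conclusion is a comparatively standard consequence of the Carath\'eodory property once separability of $l_1$ is invoked, so I would state it briefly rather than belabor it.
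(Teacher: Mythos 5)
Your proof is correct and follows essentially the same route as the paper's: existence via non-negativity of $D$ together with \ref{Dposterior:mean:prior}, continuity by dominated convergence with the dominations $e^{-\Delta_{n-1}D(q_j,\theta)}\pi(\theta)\le\pi(\theta)$ and $|\theta|e^{-\Delta_{n-1}D(q_j,\theta)}\pi(\theta)\le|\theta|\pi(\theta)$, and joint measurability from the Carath\'eodory property (the paper cites Theorem 2 of Gowrisankaran for precisely this step, invoking a generalized dominated convergence theorem where you use the plain one, which suffices here). Your added explicitness about the strict positivity of the denominator for $q\in\widetilde{\Gamma}$ and about measurability in $\omega$ via composition with the integer-valued $\Delta_{n-1}$ merely spells out what the paper leaves implicit.
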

The proof is given in Section \ref{Dposterior:ap:method} of Supplementary material.\vspace*{0.5cm}

For the MDAP function, we will consider the following subclass of the family $\widetilde{\Gamma}$. For each $\omega\in\Omega$, let $\Gamma_{n,\omega}^+$ be a subclass $\Gamma_{n,\omega}^+\subseteq\widetilde{\Gamma}$ which satisfies the following condition: there exists a compact set $C_{n,\omega}^+\subseteq\Theta$ such that for every $q\in\Gamma_{n,\omega}^+$,
\begin{equation*}\label{Dposterior:eq:cond-gamma-plus}
\inf_{\theta\in\Theta\backslash C_{n,\omega}^+} g_n(q,\omega,\theta) > g_n(q,\omega,\theta^+),
\end{equation*}
for some $\theta^+\in C_{n,\omega}^+$, and with $g_n(q,\omega,\theta)=\Delta_{n-1}(\omega)D(q,\theta)-\log (\pi(\theta))$, for each $q\in\Gamma$ and $\theta\in\Theta$.

\begin{prop}\label{Dposterior:prop:existence-MDAP} Let $n\in\N$ fixed and $\Theta$ be a complete and separable subset of $\R$. Assume that for every $q\in\Gamma_{n,\omega}^+$, $D(q,\cdot)$ is a continuous function on $\Theta$, then:
\begin{enumerate}[label=(\roman*),ref=\emph{(\roman*)}]
\item $\widetilde{T}_n(q)$ exists finitely with probability one. In addition, if $\widetilde{T}_n(q)(\omega)$ is unique for each $\omega\in\Omega$, then  $\widetilde{T}_n(q)$ is a random variable.\label{Dposterior:prop:existence-MDAP-functional}
\item The function $\widetilde{T}_n(\cdot)$ is continuous in $q$; that is, $\widetilde{T}_n(q_j)\to \widetilde{T}_n(q)$ with probability one as $j\to\infty$,  as $q_j\to q$ in the sense that $\sup_{\theta\in\Theta} |D(q_j,\theta)-D(q,\theta)|\to 0$. Moreover, $\widetilde{T}_n$ is a random variable.\label{Dposterior:prop:existence-continuity-MDAP-functional}
\end{enumerate}
\end{prop}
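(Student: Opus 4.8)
The plan is to separate the two variables: for a fixed $q$ I first settle existence and $\omega$-measurability of the minimizer, and only afterwards treat continuity in $q$ together with the resulting joint measurability. For part~\ref{Dposterior:prop:existence-MDAP-functional}, fix $q\in\Gamma_{n,\omega}^+$ and work on the almost sure event $\{\Delta_{n-1}(\omega)<\infty\}$. On this event $g_n(q,\omega,\cdot)=\Delta_{n-1}(\omega)D(q,\cdot)-\log\pi(\cdot)$ is real-valued and lower semicontinuous on $\Theta$, since $D(q,\cdot)$ is continuous by hypothesis and $-\log\pi$ is lower semicontinuous (as holds for an upper semicontinuous prior density). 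The defining property of $\Gamma_{n,\omega}^+$ furnishes a compact set $C_{n,\omega}^+$ and a point $\theta^+\in C_{n,\omega}^+$ with $\inf_{\theta\in\Theta\setminus C_{n,\omega}^+}g_n(q,\omega,\theta)>g_n(q,\omega,\theta^+)$; hence every exterior point strictly exceeds $\inf_{C_{n,\omega}^+}g_n(q,\omega,\cdot)$, so $\inf_{\Theta}g_n(q,\omega,\cdot)=\inf_{C_{n,\omega}^+}g_n(q,\omega,\cdot)$, and the generalized Weierstrass theorem (a lower semicontinuous function attains its infimum on a compact set) yields a finite minimizer, proving that $\widetilde{T}_n(q)(\omega)$ exists finitely with probability one. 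For measurability under uniqueness, observe that $(\omega,\theta)\mapsto g_n(q,\omega,\theta)$ is a Carath\'eodory function (affine, hence measurable, in $\omega$ and continuous in $\theta$), so it is jointly measurable by separability of $\Theta$; the minimal value $\omega\mapsto\inf_\theta g_n(q,\omega,\theta)$ is measurable as an infimum over a countable dense subset of $\Theta$, the $\arg\min$ correspondence is measurable, and the Kuratowski--Ryll-Nardzewski selection theorem provides a measurable selection that, by uniqueness, equals $\widetilde{T}_n(q)$.

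For part~\ref{Dposterior:prop:existence-continuity-MDAP-functional}, fix $\omega$ on the same almost sure event, write $\theta_j=\widetilde{T}_n(q_j)(\omega)$ and $\theta^*=\widetilde{T}_n(q)(\omega)$, and set $M=g_n(q,\omega,\theta^*)$. Since $\Delta_{n-1}(\omega)$ is a finite constant, the hypothesis $\sup_\theta|D(q_j,\theta)-D(q,\theta)|\to 0$ upgrades to uniform convergence $\sup_\theta|g_n(q_j,\omega,\theta)-g_n(q,\omega,\theta)|\le\Delta_{n-1}(\omega)\sup_\theta|D(q_j,\theta)-D(q,\theta)|\to 0$. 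Let $\eta=\inf_{\theta\in\Theta\setminus C_{n,\omega}^+}g_n(q,\omega,\theta)-M>0$. Taking $j$ large enough that the uniform gap is below $\eta/4$, the strict separation transfers: for $\theta\notin C_{n,\omega}^+$ one gets $g_n(q_j,\omega,\theta)\ge M+3\eta/4$, whereas $g_n(q_j,\omega,\theta_j)\le g_n(q_j,\omega,\theta^*)\le M+\eta/4$, which forces $\theta_j\in C_{n,\omega}^+$ for all large $j$. The minimizers thus eventually lie in the fixed compact set $C_{n,\omega}^+$. Along any subsequence $\theta_{j_k}\to\bar\theta$ one has $g_n(q,\omega,\bar\theta)=\lim_k g_n(q_{j_k},\omega,\theta_{j_k})=\lim_k\min_\theta g_n(q_{j_k},\omega,\theta)=M$, using uniform convergence, continuity of $g_n(q,\omega,\cdot)$, and the fact that sup-norm convergence forces convergence of the infima; hence $\bar\theta$ minimizes $g_n(q,\omega,\cdot)$ and, by uniqueness, $\bar\theta=\theta^*$. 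As every subsequential limit equals $\theta^*$ and the sequence is trapped in a compact set, $\theta_j\to\theta^*$, giving the claimed continuity.

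Finally, for the ``random variable'' assertion, $\widetilde{T}_n$ is measurable in $\omega$ for each fixed $q$ (part~(i)) and continuous in $q$ for each fixed $\omega$ (part~(ii)), so it is a Carath\'eodory map on the separable metric space $\Gamma$ and therefore jointly measurable with respect to the product $\sigma$-field, i.e.\ a random variable.

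I expect the crux to be the tightness step in part~(ii): the coercivity sets in the definition of $\Gamma_{n,\omega}^+$ are attached to each individual $q_j$, so one cannot directly claim that the $\theta_j$ share a common compact set. The resolution is to transport the strict margin $\eta$ associated with the \emph{limit} $q$ to all large $j$ through the uniform convergence of $g_n(q_j,\omega,\cdot)$, which confines the $\theta_j$ to $C_{n,\omega}^+$; once that is in hand, the $\arg\min$ convergence is routine. A secondary technical point is the matching of topologies in the joint-measurability claim, since the continuity in $q$ is stated through the supremum norm of $D$ rather than the $l_1$ metric that generates the Borel $\sigma$-field on $\Gamma$.
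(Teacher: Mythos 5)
Your proof is correct and follows essentially the same route as the paper's: existence from the coercivity condition defining $\Gamma_{n,\omega}^+$ together with compactness, measurability for fixed $q$ via a measurable-selection theorem (you invoke Kuratowski--Ryll-Nardzewski where the paper cites Theorem 4.5 of Debreu), continuity by transporting the strict margin of the limit $q$ through the uniform convergence of $g_n(q_j,\omega,\cdot)$ so that $q_j\in\Gamma_{n,\omega}^+$ eventually and the minimizers are trapped in the fixed compact $C_{n,\omega}^+$, followed by the same subsequence-plus-uniqueness argument, and joint measurability via a Carath\'eodory-type product-measurability result (the paper cites Gowrisankaran). Your explicit $\eta/4$ bookkeeping and the topology caveat you flag at the end are precisely the two points the paper treats tersely, the latter being covered by its earlier remark that boundedness of $G'$ (or the Hellinger case) upgrades $l_1$-convergence to sup-convergence of disparities.
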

Proof is available in Section \ref{Dposterior:ap:method} of Supplementary material. We summarise several observations concerning the properties of disparities in the following remark.

\begin{remark}\label{Dposterior:rem:existence}
\begin{enumerate}[label=(\emph{\alph*})]
\item First, we notice that
%the above statement holds for more general $\Theta$. Indeed, the same argument yields the conclusion if it were assumed that $\Theta$ is a complete and separable metric space. Observe that
if $\Theta$ is a compact set, we can choose $C_{n,\omega}^+=\Theta$, for each $n\in\N$ and $\omega\in\Omega$, and $\Gamma_{n,\omega}^+=\widetilde{\Gamma}$. Moreover,  in that case, $\Theta$ is also complete and separable. Finally, the assumption of compactness can be also removed using conditions similar to those in \cite{CV06}, p.1885.

\item In \ref{Dposterior:prop:existence-MDAP-functional}, the assumption that $\widetilde{T}_n(q)(\omega)$ is unique for each $q\in\Gamma$ and $\omega\in\Omega$ can be weakened by assuming that $\{\theta\in\Theta:\theta=\widetilde{T}_n(q)(\omega)\}$ is a closed subset in $\Theta$. In such a case, by Theorem 4.1 in \cite{Wagner-77} one has that there exists a version of $\widetilde{T}_n$ which is measurable.
% by appealing to the measurable selection theorem.

\item In \cite{art-MDE}, conditions that guarantee the continuity of $D(q,\cdot)$ on $\Theta$, for each $q\in\Gamma$, are established. In particular, it is sufficient to assume that such a disparity measure is defined by a bounded function $G(\cdot)$ and $p_k(\cdot)$ is a continuous function on $\Theta$, for each $k\in\N_0$.

\item It is known that a sufficient condition for the disparity measure to be bounded is the boundedness of the function $G(\cdot)$. In some cases, when this condition does not hold, it is possible to re-define the disparity measure as a disparity corresponding to a bounded function $G(\cdot)$ with no change in the values of the function $D(q,\cdot)$, for any $q\in\Gamma$. For instance, for the negative exponential disparity, one can also consider the function $\overline{G}(\delta)=e^{-\delta}-1$, which is bounded and satisfies that $NED(q,\theta)=\sum_{k=0}^\infty \overline{G}(\delta(q,\theta,k))p_k(\theta)$, for each $q\in\Gamma$ and $\theta\in\Theta$.

\item In \cite{art-MDE}, it is also proved that the boundedness of the derivative of the function $G(\cdot)$, $G'(\cdot)$, is a sufficient condition for $\sup_{\theta\in\Theta} |D(q_j,\theta)-D(q,\theta)|\to 0$, as $j\to\infty$, if $q_j\to q$ in $l_1$. \label{Dposterior:rem:existence-unif-conv}

\item  For the Hellinger distance, the continuity of $D(q,\cdot)$ is deduced without the boundedness of $G(\cdot)$. Moreover, it is also a bounded disparity in $\Gamma\times\Theta$ despite not being defined by a bounded function $G(\cdot)$ and satisfies the uniform convergence in \ref{Dposterior:rem:existence-unif-conv} (see \cite{art-MDE}).
\end{enumerate}
\end{remark}

%\vspace*{1.5ex}
%We next turn to investigate the asymptotic properties of the EDAP an MDAP estimators. Towards this we make the following assumptions:

Before studying the asymptotic and robustness properties of EDAP and MDAP estimators in Sections \ref{Dposterior:sec:asymptotic-properties} and \ref{Dposterior:sec:robustness-properties}, respectively, we exhibit the strong relation between the behaviour of the EDAP and MDAP functions with their frequentist counterpart.
We begin by considering the \emph{minimum disparity estimator (MDE)} of $\theta_0$ based on $\hat{p}_n$, which is defined as:
\begin{equation*}\label{Dposterior:eq:MDE-offspring-estim}
   \hat{\theta}_n^D = \arg\min_{\theta\in\Theta} D(\hat{p}_n,\theta),
\end{equation*}
and the associated  \emph{disparity function} defined as:
\begin{eqnarray*}
  T: &\Gamma& \rightarrow \Theta\label{Dposterior:def:D-functional}\\
   &q& \mapsto T(q)=\arg\min_{\theta\in\Theta} D(q,\theta),\nonumber
\end{eqnarray*}
whenever this minimum exists. For details about conditions for the existence, uniqueness, and continuity of the function $T$, we refer the reader to Theorems 3.1 - 3.3 in \cite{art-MDE}. Note that analogous to EDAP and MDAP functions, the disparity function depends on the disparity measure; we suppress this in our notations. With this notation, $T(\hat{p}_n)=\hat{\theta}_n^D$.
\vspace*{3ex}

In order to state the result, we strengthen our regularity conditions as below:

\begin{enumerate}[label=(H\arabic*),ref=(H\arabic*),start=3]
\item The parametric family and the disparity measure satisfy:\label{Dposterior:eq:con:bound-disp}
\begin{enumerate}[label=(\emph{\alph*})]%[label=(A\arabic*),ref=(A\arabic*),start=1]
\item $p_k(\cdot)$ is continuous in $\Theta$, for each $k\in\N_0$.\label{Dposterior:cond:cont-par-family}
\item $D$ is a disparity measure associated with a function $G(\cdot)$ which satisfies that $G(\cdot)$ and $G'(\cdot)$ are bounded on $[-1,\infty)$.\label{Dposterior:cond:G-disparity}
\end{enumerate}

\item  The CBP satisfies:\label{Dposterior:cond:consistencyMLE}
\begin{enumerate}[label=(\emph{\alph*})]
 \item There exists $\tau=\lim_{k\to\infty}\varepsilon(k)k^{-1}<\infty$ and the sequence $\{\sigma^2(k)k^{-1}\}_{k\in\N}$ is bounded.
 \item $\tau_m=\tau m >1$ and $Z_0$ is large enough such that $P[Z_n\rightarrow\infty]>0$.
 \item $\{Z_n\tau_m^{-n}\}_{n\in\N}$ converges a.s. to a finite random variable $W$ such that $P[W>0]>0$.
 \item $\{W > 0\}=\{Z_n\to\infty\}$  a.s.
 \end{enumerate}
\end{enumerate}

In \cite{art-EM}, it has been established that under condition \ref{Dposterior:cond:consistencyMLE},  $\Delta_{n}\to\infty$ a.s., and $\hat{p}_{k,n}$ is a strongly consistent estimator of $p_k$, for each $k\in\N_0$,  on the set $\{Z_n\to\infty\}$, where recall that $p=\{p_k\}_{k\in\N_0}$ is the offspring distribution satisfying $p=p(\theta_0)$, for some $\theta_0\in\Theta$. This fact will be useful in the establishment of the asymptotic properties of EDAP and MDAP estimators that we address in next section.

\begin{enumerate}[label=(H\arabic*),ref=(H\arabic*),start=5]
\item The following conditions hold:\label{Dposterior:alimit}
\begin{enumerate}[label=(\emph{\alph*})]
\item $\pi(\cdot)$ is bounded.\label{Dposterior:cond:bound-prior}
\item $\pi(\cdot)$ is thrice differentiable, the third derivative of $\pi(\cdot)$ is bounded, and $\pi(T(q))>0$.\label{Dposterior:cond:3-deriv-prior}
\item For $q\in\Gamma$, $T(q)$ exists, is unique and $T(q)\in int(\Theta)$ (interior of $\Theta$).\label{Dposterior:cond:MDE-ex-uniq-int}
\item For $q\in\Gamma$, $D(q,\theta)$ is twice continuously differentiable in $\theta$ and \linebreak $\frac{\partial^2}{\partial\theta^2} D(q,\theta)_{|\theta=T(q)}>0$.\label{Dposterior:cond:2-deriv-MDE}
\item For $q\in\Gamma$, there exists some $n_0\in\N$ such that $q\in\Gamma_{n,\omega}^+$, and $\widetilde{T}_n(q)(\omega)$ is unique, for each $n\geq n_0$, and $\omega\in\Omega$.\label{Dposterior:cond:uniq-MDAP}
\end{enumerate}
\end{enumerate}

Finally, we introduce the following subset of probability distributions which we need in the statement of our next theorem. Let
\begin{equation}
\Gamma^*=\left\{q\in\widetilde{\Gamma}: \forall\eta>0, \exists\rho>0: \inf_{|\theta-T(q)|>\eta}D(q,\theta)-D(q,T(q))>\rho\right\}.\label{Dposterior:cond:sep}
\end{equation}
%\vspace{2ex}

Our next result establishes that the EDAP and MDAP functions can be approximated a.s. by the corresponding disparity function up to an error of certain order, on the set $\{Z_n\to\infty\}$.

\begin{thm}\label{Dposterior:thm:aprox-MDE}
Suppose that \ref{Dposterior:cond:consistencyMLE} and \ref{Dposterior:alimit}~\ref{Dposterior:cond:3-deriv-prior}  hold and let $q\in\Gamma^*$ satisfying \ref{Dposterior:alimit}~\ref{Dposterior:cond:MDE-ex-uniq-int}-\ref{Dposterior:cond:2-deriv-MDE}. Then, the following convergences hold:
\begin{enumerate}[label=(\roman*),ref=\emph{(\roman*)}]
\item The EDAP function satisfies\label{Dposterior:thm:aprox-MDE-i-EDAP}
\begin{equation*}
\overline{T}_n(q)-T(q)=o\left(\Delta_{n-1}^{-1/2}\right)\quad \text{ a.s. on } \{Z_n\to\infty\}.
\end{equation*}
\item If $\Theta$ is complete and separable, \ref{Dposterior:alimit}~\ref{Dposterior:cond:bound-prior} holds, and $q$ satisfies \ref{Dposterior:alimit}~\ref{Dposterior:cond:uniq-MDAP}, then the MDAP function satisfies\label{Dposterior:thm:aprox-MDE-ii-MDAP}
\begin{equation*}
\widetilde{T}_n(q)-T(q)=o\left(\Delta_{n-1}^{-1/2}\right)\quad \text{ a.s. on } \{Z_n\to\infty\}.
\end{equation*}
\end{enumerate}
\end{thm}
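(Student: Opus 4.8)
The plan is to exploit the fact that both $\overline{T}_n(q)(\omega)$ and $\widetilde{T}_n(q)(\omega)$ depend on $(n,\omega)$ only through the scalar $\Delta_{n-1}(\omega)$, with $q$, $D$ and $\pi$ held fixed. Because \ref{Dposterior:cond:consistencyMLE} guarantees, via \cite{art-EM}, that $\Delta_{n-1}\to\infty$ almost surely on $\{Z_n\to\infty\}$, it suffices to prove two purely deterministic statements in which $\Delta\to+\infty$ plays the role of sample size. Writing $t_0:=T(q)$ and $I:=\frac{\partial^2}{\partial\theta^2}D(q,\theta)_{|\theta=t_0}$, I would show that the mean and the mode of the pseudo-posterior proportional to $e^{-\Delta D(q,\theta)}\pi(\theta)$ each differ from $t_0$ by $o(\Delta^{-1/2})$ as $\Delta\to\infty$. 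Throughout, \ref{Dposterior:alimit}~\ref{Dposterior:cond:MDE-ex-uniq-int}--\ref{Dposterior:cond:2-deriv-MDE} provide $t_0\in int(\Theta)$ and $I>0$, whence the interior stationarity $\frac{\partial}{\partial\theta}D(q,\theta)_{|\theta=t_0}=0$.

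For the EDAP function I would carry out a Laplace expansion. Cancelling the factor $e^{-\Delta D(q,t_0)}$ between numerator and denominator, I write
\[
\Delta^{1/2}(\overline{T}_n(q)-t_0)=\frac{\Delta\int_\Theta(\theta-t_0)\,e^{-\Delta(D(q,\theta)-D(q,t_0))}\pi(\theta)\,d\theta}{\Delta^{1/2}\int_\Theta e^{-\Delta(D(q,\theta)-D(q,t_0))}\pi(\theta)\,d\theta},
\]
and split each integral over $\{|\theta-t_0|\le\delta\}\subseteq int(\Theta)$ and its complement. On the complement the separation property defining $\Gamma^*$ (applied with $\eta=\delta$) yields $D(q,\theta)-D(q,t_0)>\rho>0$, so these pieces are dominated by $\Delta e^{-\Delta\rho}\int_\Theta|\theta-t_0|\pi(\theta)\,d\theta$ and $\Delta^{1/2}e^{-\Delta\rho}$, both of which vanish because $\pi$ has a finite first absolute moment by \ref{Dposterior:mean:prior}. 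On the neighbourhood I substitute $u=\Delta^{1/2}(\theta-t_0)$ and use the Taylor expansion with Lagrange remainder $D(q,\theta)-D(q,t_0)=\tfrac12\frac{\partial^2}{\partial\theta^2}D(q,\xi)(\theta-t_0)^2$, with $\xi$ between $t_0$ and $\theta$; choosing $\delta$ small enough that $\frac{\partial^2}{\partial\theta^2}D(q,\cdot)\ge I/2$ there supplies the dominating function $|u|e^{-Iu^2/4}$, so dominated convergence (using continuity of $\pi$ near $t_0$) sends the numerator to $\pi(t_0)\int u\,e^{-Iu^2/2}\,du=0$ by oddness and the denominator to $\pi(t_0)\int e^{-Iu^2/2}\,du$, positive by \ref{Dposterior:alimit}~\ref{Dposterior:cond:3-deriv-prior}. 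The vanishing of the odd Gaussian moment is precisely what cancels the leading $\Delta^{-1/2}$ term and produces the claimed $o(\Delta^{-1/2})$ rate.

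For the MDAP function I would first establish consistency and then extract the rate from the first-order condition. Since $\widetilde{T}_n(q)$ minimises $\Delta D(q,\theta)-\log\pi(\theta)$, comparing its value at $\widetilde{T}_n(q)$ with that at $t_0$ gives
\[
D(q,\widetilde{T}_n(q))-D(q,t_0)\le\Delta^{-1}\big(\log\pi(\widetilde{T}_n(q))-\log\pi(t_0)\big),
\]
whose right-hand side is bounded by a constant times $\Delta^{-1}$ because $\pi$ is bounded by \ref{Dposterior:alimit}~\ref{Dposterior:cond:bound-prior} and $\pi(t_0)>0$; hence $D(q,\widetilde{T}_n(q))-D(q,t_0)\to0$ and the separation property of $\Gamma^*$ forces $\widetilde{T}_n(q)\to t_0$. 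Consequently $\widetilde{T}_n(q)\in int(\Theta)$ for $\Delta$ large, so the stationarity equation $\Delta\,\frac{\partial}{\partial\theta}D(q,\widetilde{T}_n(q))=\pi'(\widetilde{T}_n(q))/\pi(\widetilde{T}_n(q))$ holds. Expanding $\frac{\partial}{\partial\theta}D(q,\cdot)$ around $t_0$ (where it vanishes, and where $\frac{\partial^2}{\partial\theta^2}D(q,\xi)\to I>0$) and observing that the right-hand side converges to the finite constant $\pi'(t_0)/\pi(t_0)$ gives $\widetilde{T}_n(q)-t_0=O(\Delta^{-1})=o(\Delta^{-1/2})$. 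Existence, finiteness and measurability of $\widetilde{T}_n(q)$ are provided by Proposition~\ref{Dposterior:prop:existence-MDAP} together with \ref{Dposterior:alimit}~\ref{Dposterior:cond:uniq-MDAP} and the completeness and separability of $\Theta$.

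The main obstacle I anticipate is the rigorous control of the EDAP Laplace approximation: dominating the rescaled integrands uniformly so that dominated convergence applies even when $\Theta$ is unbounded, and matching the exponentially small tail contributions (governed by the $\Gamma^*$ separation constant $\rho$) against the $\Delta^{-1/2}$-order main term. Once the integral is correctly localised and rescaled, the cancellation of the first-order term is merely the oddness of the Gaussian moment, so the substantive work concentrates in the tail and domination estimates; by contrast, the MDAP rate follows routinely from consistency and the stationarity equation.
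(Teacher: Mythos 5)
Your proposal is correct, and it splits interestingly against the paper's own proof. For part \emph{(i)} you follow essentially the same route as the paper: localize to $\{|\theta-T(q)|\le\delta\}$, kill the tails with the $\Gamma^*$ separation constant $\rho$ and the finite first moment of $\pi$ from \ref{Dposterior:mean:prior}, rescale by $u=\Delta^{1/2}(\theta-T(q))$, and let the odd Gaussian moment cancel the $\Delta^{-1/2}$-order term. The difference is that you get away with a streamlined version: the paper first proves Lemma \ref{Dposterior:lem:posterior-t}, a full second-order expansion of the rescaled $D$-posterior with the terms $b_1$, $b_2$ and the bounded third derivative $\pi'''$, whereas your dominated-convergence argument needs only continuity of $\pi$ near $T(q)$, $\pi(T(q))>0$, and the continuity of $\ddot D(q,\cdot)$ from \ref{Dposterior:alimit}~\ref{Dposterior:cond:2-deriv-MDE} — for the bare $o(\Delta_{n-1}^{-1/2})$ conclusion the refined expansion is not needed. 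For part \emph{(ii)} your route is genuinely different: the paper establishes the uniform convergence $\sup_{t\in\R}|\overline{\pi}_D^n(t|q)-h(t)|\to 0$ to the Gaussian density and then deduces convergence of the argmax $t_n^*$ to $0$ from uniqueness of the Gaussian mode; you instead prove consistency of $\widetilde{T}_n(q)$ by comparing objective values at $\widetilde{T}_n(q)$ and $T(q)$ (using the bound $\log\pi(\widetilde{T}_n(q))-\log\pi(T(q))\le\log M^*-\log\pi(T(q))$ from \ref{Dposterior:alimit}~\ref{Dposterior:cond:bound-prior} together with the $\Gamma^*$ separation), and then extract the rate from the interior stationarity equation $\Delta\,\dot D(q,\widetilde{T}_n(q))=\pi'(\widetilde{T}_n(q))/\pi(\widetilde{T}_n(q))$ via a mean-value expansion of $\dot D$ around $T(q)$, where $\ddot D(q,\xi_n)\to I^D(T(q))>0$. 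All the ingredients you invoke are available (differentiability of $\pi$ comes from \ref{Dposterior:alimit}~\ref{Dposterior:cond:3-deriv-prior}, positivity of $\pi$ near $T(q)$ from continuity, existence and uniqueness of the minimizer from \ref{Dposterior:alimit}~\ref{Dposterior:cond:uniq-MDAP} and Proposition \ref{Dposterior:prop:existence-MDAP}), and your argument actually delivers the stronger rate $\widetilde{T}_n(q)-T(q)=O(\Delta_{n-1}^{-1})$. What the paper's heavier machinery buys is reuse: the expansion of Lemma \ref{Dposterior:lem:posterior-t} and the $\sup_t$ convergence scheme are recycled almost verbatim with $q$ replaced by $\hat p_n$ in Theorems \ref{Dposterior:thm:lim-sup-D-post-estim} and \ref{Dposterior:thm:consistency-MDAP}, whereas your more elementary argument proves only the theorem at hand.
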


We refer the reader to Section \ref{Dposterior:ap:method} in the Supplementary material for the proof of this theorem.

\begin{remark}
\begin{enumerate}[label=(\alph*),ref=(\alph*)]
\item We observe that  if the offspring distribution belongs to the parametric family, that is $p=p(\theta_0)$ and $T(p)$ is unique, then $T(p)=\theta_0$; furthermore,
if $p$ satisfies the conditions of Theorem \ref{Dposterior:thm:aprox-MDE}, one has that $\overline{T}_n(p)\to\theta_0$, and $\widetilde{T}_n(p)\to\theta_0$ a.s. on $\{Z_n\to\infty\}$.
%It is worth mentioning here that if the true offspring distribution $p$ is not a member of the family $\mathcal{F}_\Theta$, then $T(\hat{p}_n)$ will converge to $T(p)$.
\item If $T(q)$ exists and is unique for each $q\in\widetilde{\Gamma}$, under the assumption of the compactness of $\Theta$, one has $\Gamma^*=\widetilde{\Gamma}$. The compactness assumption can be replaced with the assumption that $q$ belongs to the subclass $\overline{\Gamma}\subseteq\Gamma$ which satisfies the following condition: there exists a compact set $\overline{C}\in\Theta$ such that for every $\bar{q}\in\overline{\Gamma}$,
\begin{equation*}\label{Dposterior:eq:cond-gamma-tilde}
 \inf_{\theta\in\Theta\backslash \overline{C}} D(\bar{q},\theta) > D(\bar{q},\theta^*),\quad\text{ for some $\theta^*\in \overline{C}$}.
 \end{equation*}
\end{enumerate}
\end{remark}

%%%%%%%%%%%%%%%%%%%%%%%%%%%%%%%%%%%%%%%%%%%%%%%%%%%%%%%%%%%%%%%%%%%
\section{Asymptotic properties}\label{Dposterior:sec:asymptotic-properties}
%%%%%%%%%%%%%%%%%%%%%%%%%%%%%%%%%%%%%%%%%%%%%%%%%%%%%%%%%%%%%%%%%%%

In this section, we focus our attention on the asymptotic behaviour of the EDAP and MDAP estimators. The proofs of the results of this section are collected in Section \ref{Dposterior:ap:asymp} of the Supplementary material. One of our main theorems in this section is that EDAP and MDAP estimators are asymptotically efficient when the posited offspring distribution belongs to the parametric family. This will be obtained as a consequence of a more general result: the ${\mathcal{L}}^1$ almost sure convergence of posterior density to a Gaussian density with mean $0$ and variance equal to the inverse of the Fisher information. Here, $\mathcal{L}^1$ is the space of measurable functions which are Lebesgue integrable equipped with the $L^1$-norm.

To establish the results in this and the following section, we need additional regularity conditions and hence, in the remainder of this paper we assume that for each $q\in\Gamma$, the first and the second derivative of $D(q,\theta)$ with respect to $\theta$ exist and we denote them by $\dot{D}(q,\theta)$ and $\ddot{D}(q,\theta)$  - the reader is referred to Section 4 in \cite{art-MDE} for conditions that guarantee their existence-. Let also denote $I^D(\theta) = \ddot{D}(p,\theta)$, and $I_n^D(\theta) = \ddot{D}(\hat{p}_n,\theta)$, where recall that $p$ is the posited offspring distribution and $\hat{p}_n$ was defined in \eqref{Dposterior:eq:offspring-dist-MLE}. Henceforth, we also assume -without loss of generality (see \cite{lindsay}, p.1089)- that $G'(0)=0$ and $G''(0)=1$. Thus, taking into account the previous hypotheses and under the identifiability condition \eqref{Dposterior:eq:identif} if $p=p(\theta_0)$, one has that $I^{D}(\theta_0)$ reduces to the Fisher information at $\theta_0$ denoted by $I(\theta_0)$ (see \cite{art-MDE}).

In the proofs of the results in this section asymptotic properties of the minimum disparity estimators will play a critical role. These properties were recently obtained in \cite{art-MDE}. We state the required asymptotic properties as an assumption.
%\vspace*{3ex}

\begin{enumerate}[label=(H\arabic*),ref=(H\arabic*),start=6]
\item The following properties concerning the disparity measure and the MDE hold:\label{Dposterior:dis:assum}
\begin{enumerate}[label=(\alph*),ref=(\alph*),start=1]
\item There exists some $C>0$ such that for each $q_1,q_2\in\Gamma$, $\sup_{\theta\in\Theta} |D(q_1,\theta)-D(q_2,\theta)|\leq C||q_1-q_2||_1$.\label{Dposterior:cond:unif-cont-disparity}
\item Let $\theta_p=T(p)$ satisfy $I^D(\theta_p)>0$ and $I_n^D(\theta)\to I^D(\theta_p)$ a.s. on $\{Z_n\to\infty\}$, as $\theta\to\theta_p$ and $n\to\infty$. \label{Dposterior:cond:cont-second-deriv-D}
\item The MDE satisfies that\label{Dposterior:cond:MDE}
\begin{eqnarray}
\hat{\theta}^D_n&\rightarrow&\theta_p\quad\text{a.s. on }\{Z_n\to\infty\},\label{Dposterior:eq:consistencyMDE}\\
\Delta_{n-1}^{1/2}(\hat{\theta}_n^D-\theta_p)&\xrightarrow{d}& N(0,I^D(\theta_p)^{-1}),\label{Dposterior:eq:normalityMDE}
\end{eqnarray}
where $\xrightarrow{d}$ represents the convergence in distribution with respect to the probability $P[\cdot| Z_n\to\infty]$.
\end{enumerate}
\end{enumerate}

\begin{remark}
\begin{enumerate}[label=(\alph*),ref=(\alph*)]

\item We note that \ref{Dposterior:dis:assum}~\ref{Dposterior:cond:unif-cont-disparity} holds for Hellinger distance or for every disparity satisfying \ref{Dposterior:eq:con:bound-disp}~\ref{Dposterior:cond:G-disparity} (see \cite{art-MDE}, p.313).

\item Sufficient conditions for \ref{Dposterior:dis:assum}~\ref{Dposterior:cond:cont-second-deriv-D} can be found in Section 4 in \cite{art-MDE}.

\item The establishment of conditions that guarantee the convergence of \eqref{Dposterior:eq:consistencyMDE} and \eqref{Dposterior:eq:normalityMDE} was succeeded in Theorems 3.4 and 4.1, respectively, in \cite{art-MDE}.
\end{enumerate}
\end{remark}

In the following $\varphi(t; \theta)$ denotes the density function of a normal distribution with mean 0 and variance $I^D(\theta)^{-1}$. Additionally, let $\varphi_n(t)$ denote the density function of a normal distribution with mean $0$ and variance $I_n^D ({\hat{\theta}}_n^D)^{-1}$. The following results hold true for the offspring distribution $p$ without assuming that it belongs to the parametric family. In the case when $p=p(\theta_0)$ and under the identifiability condition \eqref{Dposterior:eq:identif}, then the same results hold with $\theta_p=\theta_0$ and  $I^D(\theta_p)=I(\theta_0)$. We state our first main result of this section.%In the following, the probability distribution associated with the almost sure statements is $p(\ta_0)$.
\begin{thm}\label{Dposterior:thm:consistency}
Let $\overline{\pi}_D^{n}(\cdot|\hat{p}_n)$ denote the $D$-posterior density function of $t=\Delta_{n-1}^{1/2}(\theta-\hat{\theta}^D_n)$. Let $p\in\Gamma^*$ satisfy \ref{Dposterior:alimit}~\ref{Dposterior:cond:MDE-ex-uniq-int}. If  \ref{Dposterior:cond:consistencyMLE} and \ref{Dposterior:dis:assum} hold, then:
\begin{enumerate}[label=(\roman*),ref=\emph{(\roman*)}]
\item $\int |\overline{\pi}_D^{n}(t|\hat{p}_n) -\varphi(t;\theta_p)|dt\rightarrow 0$\quad a.s. on $\{Z_n\to\infty\}.$\label{Dposterior:thm:consistency-l1-param}
\item $\int |t||\overline{\pi}_D^{n}(t|\hat{p}_n) -\varphi(t;\theta_p)|dt\rightarrow 0$\quad a.s. on $\{Z_n\to\infty\}.$\label{Dposterior:thm:consistency-l1-param-t}
\item $\int |\overline{\pi}_D^{n}(t|\hat{p}_n) -\varphi_n(t)|dt\rightarrow 0 \quad a.s. $ on $\{Z_n\to\infty\}.$\label{Dposterior:thm:consistency-l1-estim}
\end{enumerate}
\end{thm}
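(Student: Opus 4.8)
The plan is to prove a Bernstein--von Mises type statement: first establish pointwise convergence of the centred and rescaled $D$-posterior density to the Gaussian limit, and then upgrade this to $L^1$ and weighted $L^1$ convergence by a Scheff\'e argument combined with careful tail control. I would begin by reparametrising through $\theta(t)=\hat{\theta}_n^D+t\Delta_{n-1}^{-1/2}$; a change of variables in \eqref{Dposterior:eq:D-posterior-tree} gives
$$\overline{\pi}_D^n(t|\hat{p}_n)=\frac{h_n(t)}{K_n},\qquad h_n(t)=e^{-\Delta_{n-1}[D(\hat{p}_n,\theta(t))-D(\hat{p}_n,\hat{\theta}_n^D)]}\,\pi(\theta(t)),$$
with $K_n=\int h_n(t)\,dt$ the normalising constant. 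The whole argument is carried out pathwise: I fix $\omega\in\{Z_n\to\infty\}$ outside a fixed null set, so that every random convergence below becomes a deterministic statement about that path. I would record the facts used repeatedly: $\Delta_{n-1}\to\infty$ and $\hat{p}_{k,n}\to p_k$ for each $k$ by \ref{Dposterior:cond:consistencyMLE}; since $\hat{p}_n$ and $p$ are both probability distributions on $\N_0$, Scheff\'e's lemma then yields $\|\hat{p}_n-p\|_1\to 0$, so \ref{Dposterior:dis:assum}~\ref{Dposterior:cond:unif-cont-disparity} gives $\sup_{\theta}|D(\hat{p}_n,\theta)-D(p,\theta)|\to 0$; finally $\hat{\theta}_n^D\to\theta_p$ and $I_n^D(\hat{\theta}_n^D)\to I^D(\theta_p)$ by \ref{Dposterior:dis:assum}~\ref{Dposterior:cond:cont-second-deriv-D}--\ref{Dposterior:cond:MDE}.

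For the pointwise limit, since $\hat{\theta}_n^D=T(\hat{p}_n)\in int(\Theta)$ by \ref{Dposterior:cond:MDE-ex-uniq-int} and minimises $D(\hat{p}_n,\cdot)$, the first-order condition gives $\dot{D}(\hat{p}_n,\hat{\theta}_n^D)=0$. A second-order Taylor expansion then yields, for each fixed $t$ and $n$ large, $\Delta_{n-1}[D(\hat{p}_n,\theta(t))-D(\hat{p}_n,\hat{\theta}_n^D)]=\tfrac12\ddot{D}(\hat{p}_n,\xi_n(t))\,t^2$ with $\xi_n(t)$ between $\hat{\theta}_n^D$ and $\theta(t)$. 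As $\theta(t)\to\theta_p$ and hence $\xi_n(t)\to\theta_p$, assumption \ref{Dposterior:cond:cont-second-deriv-D} forces $\ddot{D}(\hat{p}_n,\xi_n(t))\to I^D(\theta_p)$, while continuity of $\pi$ at $\theta_p$ gives $\pi(\theta(t))\to\pi(\theta_p)$. Therefore $h_n(t)\to\pi(\theta_p)e^{-\tfrac12 I^D(\theta_p)t^2}$ for every $t$.

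To identify $K_n$ (a Laplace approximation) I would split the integral into the central region $\{|\theta(t)-\hat{\theta}_n^D|\le\delta\}$ and its complement. On the central region I dominate $h_n$ by $B'e^{-ct^2/2}$, using local boundedness of $\pi$ near $\theta_p$ and a uniform positive lower bound $c$ on $\ddot{D}(\hat{p}_n,\cdot)$ near $\theta_p$ coming from \ref{Dposterior:cond:cont-second-deriv-D}, and apply dominated convergence as the region expands to $\R$. On the complement I use the separation property $p\in\Gamma^*$ of \eqref{Dposterior:cond:sep}, transferred to $\hat{p}_n$ via $\sup_\theta|D(\hat{p}_n,\theta)-D(p,\theta)|\to 0$ and $\hat{\theta}_n^D\to\theta_p$, to get $D(\hat{p}_n,\theta)-D(\hat{p}_n,\hat{\theta}_n^D)>\rho>0$ uniformly there for $n$ large; reverting to the $\theta$-scale the complementary contribution is then bounded by $\Delta_{n-1}^{1/2}e^{-\Delta_{n-1}\rho}\int_\Theta\pi\,d\theta=\Delta_{n-1}^{1/2}e^{-\Delta_{n-1}\rho}\to 0$. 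Hence $K_n\to\pi(\theta_p)\sqrt{2\pi/I^D(\theta_p)}$, so $\overline{\pi}_D^n(t|\hat{p}_n)=h_n(t)/K_n\to\varphi(t;\theta_p)$ pointwise; since both $\overline{\pi}_D^n(\cdot|\hat{p}_n)$ and $\varphi(\cdot;\theta_p)$ are probability densities, Scheff\'e's lemma gives part \ref{Dposterior:thm:consistency-l1-param}.

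For part \ref{Dposterior:thm:consistency-l1-param-t} I would invoke the generalised Scheff\'e lemma: $|t|\overline{\pi}_D^n(t)\to|t|\varphi(t;\theta_p)$ pointwise, so it suffices to show $\int|t|\overline{\pi}_D^n(t)\,dt\to\int|t|\varphi(t;\theta_p)\,dt$. The same central/tail split works: on the central region dominated convergence applies with the integrable majorant $|t|B'e^{-ct^2/2}/K_n$, and reverting the tail to the $\theta$-scale bounds it by $K_n^{-1}\Delta_{n-1}e^{-\Delta_{n-1}\rho}\int|\theta-\hat{\theta}_n^D|\pi(\theta)\,d\theta\to 0$, which is finite precisely because of the finite first moment of the prior, \ref{Dposterior:mean:prior}. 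Part \ref{Dposterior:thm:consistency-l1-estim} then follows from part \ref{Dposterior:thm:consistency-l1-param} and the triangle inequality, noting that $\int|\varphi_n(t)-\varphi(t;\theta_p)|\,dt\to 0$ because $I_n^D(\hat{\theta}_n^D)\to I^D(\theta_p)$ and the $L^1$ distance between two centred Gaussians is continuous in their variances. The main obstacle throughout is the uniform tail control: transferring the separation in $\Gamma^*$ from the fixed $p$ to the random $\hat{p}_n$ (through the $l_1$-consistency and \ref{Dposterior:cond:unif-cont-disparity}), and producing a path-uniform integrable majorant; this is exactly where the boundedness conditions in \ref{Dposterior:dis:assum} and the finite prior moment \ref{Dposterior:mean:prior} enter.
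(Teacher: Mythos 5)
Your proposal is correct and follows essentially the same route as the paper's proof: the same reparametrisation $t=\Delta_{n-1}^{1/2}(\theta-\hat{\theta}_n^D)$, the Taylor expansion exploiting $\dot{D}(\hat{p}_n,\hat{\theta}_n^D)=0$, the central/tail split at $|t|\leq\delta\Delta_{n-1}^{1/2}$ with a Gaussian dominating function obtained from \ref{Dposterior:dis:assum}~\ref{Dposterior:cond:cont-second-deriv-D}, the transfer of the $\Gamma^*$ separation from $p$ to $\hat{p}_n$ via \ref{Dposterior:dis:assum}~\ref{Dposterior:cond:unif-cont-disparity} and the $l_1$-consistency of $\hat{p}_n$, the prior moment \ref{Dposterior:mean:prior} for the tail in part \emph{(ii)}, and the convergence $I_n^D(\hat{\theta}_n^D)\to I^D(\theta_p)$ for part \emph{(iii)}. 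The only cosmetic differences are that you finish \emph{(i)} and \emph{(ii)} by (generalised) Scheff\'e from pointwise convergence of $h_n$ and of the normaliser $K_n$, whereas the paper proves $\int|w_n(t)|dt\to 0$ and $\int|t||w_n(t)|dt\to 0$ directly in Lemma \ref{Dposterior:lemma:wn} before normalising (and cites Glick's theorem, itself a Scheff\'e-type result, for \emph{(iii)}); note also that the eventual interiority of $\hat{\theta}_n^D$ should be deduced, as the paper does, from $\hat{\theta}_n^D\to\theta_p\in int(\Theta)$ rather than from \ref{Dposterior:alimit}~\ref{Dposterior:cond:MDE-ex-uniq-int} applied to $\hat{p}_n$, since the theorem assumes that condition only for $p$.
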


%The proof is provided in Section \ref{Dposterior:ap:asymp} of Supplementary material.

%\vspace*{2ex}

The following theorem shows that the EDAP estimator mimics the MDE at the rate $\Delta_{n-1}^{1/2}$. This feature leads to the asymptotic normality of the centered and scaled EDAP estimator.

\begin{thm}\label{Dposterior:thm:consistency-EDAP}
Under the hypotheses of Theorem \ref{Dposterior:thm:consistency}, the following convergences hold:
\begin{enumerate}[label=(\roman*),ref=\emph{(\roman*)}]
\item $\Delta_{n-1}^{1/2}(\theta_n^{*D}-\hat\theta_n^D)\rightarrow 0$ a.s. on $\{Z_n\to\infty\}$.\label{Dposterior:thm:conv-MDE-EDAP}
\item $\Delta_{n-1}^{1/2}(\theta_n^{*D}-\theta_p)\xrightarrow{d} N(0,I^D(\theta_p)^{-1})$ with respect to $P[\cdot|Z_n\to\infty]$.\label{Dposterior:thm:normality-EDAP}
\end{enumerate}
%Furthermore when $p \in{\mathcal{F}}_{\Theta}$ and the identifiability condition holds, then $\ta_p=\ta_0$ and $I^D(\theta_p)=I^D(\theta_0)$.
\end{thm}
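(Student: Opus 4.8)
The plan is to prove part \ref{Dposterior:thm:conv-MDE-EDAP} by rewriting the scaled difference $\Delta_{n-1}^{1/2}(\theta_n^{*D}-\hat\theta_n^D)$ as a first moment of the recentred and rescaled $D$-posterior, and then to deduce part \ref{Dposterior:thm:normality-EDAP} from part \ref{Dposterior:thm:conv-MDE-EDAP} together with the asymptotic normality of the MDE via Slutsky's theorem.

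First I would perform the change of variables $t=\Delta_{n-1}^{1/2}(\theta-\hat\theta_n^D)$ in the defining integral \eqref{Dposterior:def:EDAP-estimator}. Writing $\overline{\pi}_D^{n}(\cdot|\hat p_n)$ for the resulting density of $t$ as in Theorem \ref{Dposterior:thm:consistency}, one has $\theta=\hat\theta_n^D+\Delta_{n-1}^{-1/2}t$, whence
\[
\theta_n^{*D}=\hat\theta_n^D+\Delta_{n-1}^{-1/2}\int t\,\overline{\pi}_D^{n}(t|\hat p_n)\,dt,
\]
so that $\Delta_{n-1}^{1/2}(\theta_n^{*D}-\hat\theta_n^D)=\int t\,\overline{\pi}_D^{n}(t|\hat p_n)\,dt$. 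Because $\varphi(\cdot;\theta_p)$ is the density of a mean-zero Gaussian, its first moment vanishes and may be subtracted freely, giving
\[
\Delta_{n-1}^{1/2}(\theta_n^{*D}-\hat\theta_n^D)=\int t\,\bigl[\overline{\pi}_D^{n}(t|\hat p_n)-\varphi(t;\theta_p)\bigr]\,dt.
\]
Bounding the right-hand side in absolute value by $\int |t|\,|\overline{\pi}_D^{n}(t|\hat p_n)-\varphi(t;\theta_p)|\,dt$ and invoking Theorem \ref{Dposterior:thm:consistency}\,\ref{Dposterior:thm:consistency-l1-param-t} yields convergence to $0$ a.s. on $\{Z_n\to\infty\}$, proving \ref{Dposterior:thm:conv-MDE-EDAP}. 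One care point is the domain of $t$: after rescaling it ranges over $\Delta_{n-1}^{1/2}(\Theta-\hat\theta_n^D)$ rather than all of $\R$, so I would extend $\overline{\pi}_D^{n}(\cdot|\hat p_n)$ by zero outside its support and check that the comparison against the full-line Gaussian is exactly the quantity controlled by Theorem \ref{Dposterior:thm:consistency}.

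For part \ref{Dposterior:thm:normality-EDAP} I would use the decomposition
\[
\Delta_{n-1}^{1/2}(\theta_n^{*D}-\theta_p)=\Delta_{n-1}^{1/2}(\theta_n^{*D}-\hat\theta_n^D)+\Delta_{n-1}^{1/2}(\hat\theta_n^D-\theta_p).
\]
The first summand tends to $0$ a.s. on $\{Z_n\to\infty\}$ by \ref{Dposterior:thm:conv-MDE-EDAP}, hence in probability under $P[\cdot|Z_n\to\infty]$, while the second converges in distribution to $N(0,I^D(\theta_p)^{-1})$ by the MDE normality \eqref{Dposterior:eq:normalityMDE} assumed in \ref{Dposterior:dis:assum}\,\ref{Dposterior:cond:MDE}. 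Slutsky's theorem then delivers the claimed limit.

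The conceptual heart of the argument, and the step I expect to require the most care, is the passage from $L^1$ convergence of the posterior \emph{density} to convergence of its \emph{mean}. Plain $L^1$ closeness of densities does not control first moments, since a vanishing amount of mass escaping to infinity can still carry a non-negligible contribution to the mean. This is exactly why the $|t|$-weighted statement Theorem \ref{Dposterior:thm:consistency}\,\ref{Dposterior:thm:consistency-l1-param-t} is needed rather than merely \ref{Dposterior:thm:consistency-l1-param}; making this uniform-integrability-type control explicit, together with the domain bookkeeping above, is the only genuinely delicate point, the remainder being the routine change of variables and Slutsky argument.
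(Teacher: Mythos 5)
Your proposal is correct and follows essentially the same route as the paper: the change of variables $t=\Delta_{n-1}^{1/2}(\theta-\hat\theta_n^D)$ reducing part \emph{(i)} to $\int_{R_n} t\,\overline{\pi}_D^{n}(t|\hat p_n)\,dt\to 0$ via the $|t|$-weighted $\mathcal{L}^1$ convergence of Theorem \ref{Dposterior:thm:consistency}~\ref{Dposterior:thm:consistency-l1-param-t}, then the decomposition $\Delta_{n-1}^{1/2}(\theta_n^{*D}-\theta_p)=\Delta_{n-1}^{1/2}(\theta_n^{*D}-\hat\theta_n^D)+\Delta_{n-1}^{1/2}(\hat\theta_n^D-\theta_p)$ together with \eqref{Dposterior:eq:normalityMDE} and Slutsky's theorem for part \emph{(ii)}. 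Your explicit subtraction of the mean-zero Gaussian and the remark that plain $\mathcal{L}^1$ closeness of densities would not suffice merely spell out what the paper's terser proof uses implicitly, and your domain bookkeeping (extending $\overline{\pi}_D^{n}(\cdot|\hat p_n)$ by zero off $R_n$) matches the paper's convention.
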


%\vspace*{2ex}

To establish the asymptotic properties of the MDAP estimator, we need the uniform convergence of posterior density to the Gaussian density. This is the content of our next theorem.

\begin{thm}\label{Dposterior:thm:lim-sup-D-post-estim}
Under conditions of Theorem \ref{Dposterior:thm:consistency}, if \ref{Dposterior:alimit}~\ref{Dposterior:cond:bound-prior}-\ref{Dposterior:cond:3-deriv-prior} hold, then:
\begin{equation*}
\lim_{n\to\infty}\sup_{t\in\R}\left|\overline{\pi}_D^{n}(t|\hat{p}_n)-h(t;\theta_p)\right|=0\quad \text{ a.s. on }\{Z_n\to\infty\}.
\end{equation*}
%where $h(\cdot)$ is the density function of a normal distribution with mean equal to 0 and variance equal to $I^D(\theta_0)^{-1}$.
\end{thm}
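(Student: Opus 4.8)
The plan is to work with the reparametrised $D$-posterior and run a Laplace-type argument, upgrading the $L^1$ convergence of Theorem~\ref{Dposterior:thm:consistency} to a supremum bound. Throughout, $h(t;\theta_p)=\varphi(t;\theta_p)$ is the $N(0,I^D(\theta_p)^{-1})$ density. Setting $t=\Delta_{n-1}^{1/2}(\theta-\hat\theta_n^D)$, a change of variables shows that $\overline{\pi}_D^{n}(t|\hat p_n)=K_n(t)/C_n$, where $K_n(t)=e^{-R_n(t)}\pi(\hat\theta_n^D+\Delta_{n-1}^{-1/2}t)$, $R_n(t)=\Delta_{n-1}\bigl[D(\hat p_n,\hat\theta_n^D+\Delta_{n-1}^{-1/2}t)-D(\hat p_n,\hat\theta_n^D)\bigr]$ (with $K_n\equiv 0$ off $\Delta_{n-1}^{1/2}(\Theta-\hat\theta_n^D)$) and $C_n=\int_\R K_n(s)\,ds$; the Jacobians cancel. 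Since $\hat\theta_n^D=T(\hat p_n)$ is an interior minimiser of $D(\hat p_n,\cdot)$ by \ref{Dposterior:alimit}~\ref{Dposterior:cond:MDE-ex-uniq-int}, we have $\dot D(\hat p_n,\hat\theta_n^D)=0$, so a second-order Taylor expansion gives $R_n(t)=\tfrac12\ddot D(\hat p_n,\xi_{n,t})\,t^2$ for some $\xi_{n,t}$ between $\hat\theta_n^D$ and $\hat\theta_n^D+\Delta_{n-1}^{-1/2}t$.

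On $\{Z_n\to\infty\}$ we have $\Delta_{n-1}\to\infty$ and $\hat\theta_n^D\to\theta_p$ (by \ref{Dposterior:cond:consistencyMLE} and \ref{Dposterior:dis:assum}~\ref{Dposterior:cond:MDE}), hence $\xi_{n,t}\to\theta_p$ and, by \ref{Dposterior:dis:assum}~\ref{Dposterior:cond:cont-second-deriv-D}, $\ddot D(\hat p_n,\xi_{n,t})\to I^D(\theta_p)$. First I would make this uniform on each $[-M,M]$: for $|t|\le M$ the points $\xi_{n,t}$ lie in $[\hat\theta_n^D-\Delta_{n-1}^{-1/2}M,\hat\theta_n^D+\Delta_{n-1}^{-1/2}M]$, so \ref{Dposterior:dis:assum}~\ref{Dposterior:cond:cont-second-deriv-D} yields $\sup_{|t|\le M}|\ddot D(\hat p_n,\xi_{n,t})-I^D(\theta_p)|\to 0$, while continuity of $\pi$ and $\hat\theta_n^D+\Delta_{n-1}^{-1/2}t\to\theta_p$ (uniformly on $[-M,M]$) give $\sup_{|t|\le M}|\pi(\hat\theta_n^D+\Delta_{n-1}^{-1/2}t)-\pi(\theta_p)|\to 0$. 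Using uniform continuity of $x\mapsto e^{-x}$ on bounded sets, these combine to $\sup_{|t|\le M}|K_n(t)-\pi(\theta_p)e^{-I^D(\theta_p)t^2/2}|\to 0$. The normalising constant is controlled by dominated convergence, the domination being exactly the tail estimate already assembled for Theorem~\ref{Dposterior:thm:consistency}, giving $C_n\to\pi(\theta_p)\sqrt{2\pi/I^D(\theta_p)}$, which is strictly positive because $\pi(\theta_p)=\pi(T(p))>0$ (\ref{Dposterior:alimit}~\ref{Dposterior:cond:3-deriv-prior}) and $I^D(\theta_p)>0$ (\ref{Dposterior:dis:assum}~\ref{Dposterior:cond:cont-second-deriv-D}). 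Dividing yields $\sup_{|t|\le M}|\overline{\pi}_D^{n}(t|\hat p_n)-h(t;\theta_p)|\to 0$.

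The hard part will be making the tails uniformly small in the supremum norm — not just in $L^1$, which Theorem~\ref{Dposterior:thm:consistency} already provides. Since $C_n$ is bounded below by a positive constant for large $n$, it suffices to bound $\sup_{|t|>M}K_n(t)$, and, as $\pi$ is bounded by \ref{Dposterior:alimit}~\ref{Dposterior:cond:bound-prior}, this reduces to a lower bound on $R_n(t)$, i.e. on $D(\hat p_n,\theta)-D(\hat p_n,\hat\theta_n^D)$. I would split according to whether $\theta=\hat\theta_n^D+\Delta_{n-1}^{-1/2}t$ lies in a fixed neighbourhood $[\theta_p-\eta,\theta_p+\eta]$ or not. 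Inside, the local quadratic growth provided by \ref{Dposterior:dis:assum}~\ref{Dposterior:cond:cont-second-deriv-D} (so that $\inf_{|\theta-\theta_p|\le\eta}\ddot D(\hat p_n,\theta)\ge c>0$ for large $n$) gives $R_n(t)\ge \tfrac{c}{2}t^2\ge \tfrac{c}{2}M^2$; outside, the separation defining $\Gamma^*$ in \eqref{Dposterior:cond:sep}, together with $\hat p_n\to p\in\Gamma^*$ and $\hat\theta_n^D\to\theta_p=T(p)$, forces $D(\hat p_n,\theta)-D(\hat p_n,\hat\theta_n^D)\ge\rho>0$ and hence $R_n(t)\ge\rho\,\Delta_{n-1}\to\infty$. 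Both regimes show that $\sup_{|t|>M}\overline{\pi}_D^{n}(t|\hat p_n)$ can be made arbitrarily small by taking $n$ large and then $M$ large, matching the Gaussian tail $\sup_{|t|>M}h(t;\theta_p)$.

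Finally I would combine the two regions: given $\varepsilon>0$, fix $M$ so the tail suprema of both $\overline{\pi}_D^{n}(\cdot|\hat p_n)$ and $h(\cdot;\theta_p)$ drop below $\varepsilon$, then invoke the compact-set uniform convergence on $[-M,M]$ to conclude $\sup_{t\in\R}|\overline{\pi}_D^{n}(t|\hat p_n)-h(t;\theta_p)|\to 0$ a.s.\ on $\{Z_n\to\infty\}$. (Alternatively, since $I_n^D(\hat\theta_n^D)\to I^D(\theta_p)>0$ makes $\varphi_n\to h(\cdot;\theta_p)$ uniformly, one may first bound $\sup_t|\overline{\pi}_D^{n}(t|\hat p_n)-\varphi_n(t)|$ and close by the triangle inequality.) The delicate point throughout is that every estimate must hold simultaneously on the non-extinction set: the quadratic term rests on \ref{Dposterior:dis:assum}~\ref{Dposterior:cond:cont-second-deriv-D} and the tail control on membership $p\in\Gamma^*$ and $\Delta_{n-1}\to\infty$, all of which are in force a.s.\ on $\{Z_n\to\infty\}$.
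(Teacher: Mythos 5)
Your proposal is correct, but it takes a genuinely different route from the paper's. The paper proves this theorem by re-running the expansion of Lemma \ref{Dposterior:lem:posterior-t} with $T(q)$ replaced by $\hat{\theta}_n^D$: a Taylor expansion of the prior up to its (bounded) third derivative yields the representation $\overline{\pi}_D^{n}(t|\hat{p}_n)=\bigl(I_n^D(\hat{\theta}_n^D)/2\pi\bigr)^{1/2}e^{-t^2I_n^D(\theta_n'(t))/2}\bigl(1+tb_{1n}\Delta_{n-1}^{-1/2}+\cdots\bigr)(1+C_n)I_{R_n}(t)$, and the supremum is then split into four terms over the \emph{growing} ball $B_n=\{|t|\le\delta\Delta_{n-1}^{1/2}\}$ and its complement, the inner terms controlled via the mean-value estimate bounding $\sup_t|e^{-t^2I_n^D(\theta_n'(t))/2}-e^{-t^2I^D(\theta_p)/2}|$ by $\epsilon/(I^D(\theta_p)-\epsilon)$ and the boundedness of $|t|^ie^{-t^2(I^D(\theta_p)-\epsilon)/2}$, the outer terms by the $\Gamma^*$ separation. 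You instead Taylor-expand only the disparity (the prior enters through plain continuity and boundedness, not through its derivatives), prove uniform convergence on a \emph{fixed} compact $[-M,M]$, and handle the tail with a two-regime split: a quadratic lower bound $R_n(t)\ge\tfrac{c}{2}t^2$ on the annulus where $\theta$ stays within $\eta$ of $\theta_p$ — legitimately extracted from \ref{Dposterior:dis:assum}~\ref{Dposterior:cond:cont-second-deriv-D} after shrinking $\eta$ so that $c=I^D(\theta_p)-\epsilon>0$ — and the $\Gamma^*$ separation beyond, transferred from $p$ to $\hat{p}_n$ via \ref{Dposterior:dis:assum}~\ref{Dposterior:cond:unif-cont-disparity} exactly as in \eqref{Dposterior:eq:inf-lem-i}, giving $R_n(t)\ge\rho\Delta_{n-1}\to\infty$; the $\varepsilon$-then-$M$-then-$n$ diagonal closes the argument. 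What each buys: the paper's heavier expansion genuinely uses \ref{Dposterior:alimit}~\ref{Dposterior:cond:3-deriv-prior} but recycles machinery already required for the $o(\Delta_{n-1}^{-1/2})$ approximations of Theorem \ref{Dposterior:thm:aprox-MDE} and delivers the explicit correction terms; your argument is more elementary, using of the prior hypotheses only boundedness, continuity and $\pi(\theta_p)>0$, at the cost of the extra compact-exhaustion step. Two bookkeeping points worth a line in a full write-up: $[-M,M]\subseteq R_n$ eventually (since $\theta_p\in int(\Theta)$ and $\hat{\theta}_n^D\to\theta_p$), so the Taylor formula for $K_n$ is valid there; and in the intermediate regime the remainder point $\xi_{n,t}$ lies between $\hat{\theta}_n^D$ and $\theta$, so you need $n$ large enough that $\hat{\theta}_n^D\in[\theta_p-\eta,\theta_p+\eta]$ before the infimum bound on $\ddot{D}$ applies — both immediate given \eqref{Dposterior:eq:consistencyMDE}.
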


%The proof is given in Section \ref{Dposterior:ap:asymp} of Supplementary material.

Now, we turn to the asymptotic behaviour of MDAP estimators. We strengthen the conditions of the previous theorem and assume additionally that there exists $n_0\in\N$ such that $\hat{p}_n(\omega)\in\Gamma_{n,\omega}^+$, and $\hat{\theta}_n^{+D}(\omega)$ is unique, for each $n\geq n_0$, and for each $\omega\in\{Z_n\to\infty\}$. Thus, from Proposition \ref{Dposterior:prop:existence-MDAP} one has that $\theta_n^{+D}=\widetilde{T}_n(\hat{p}_n)$ exists and is unique for $n\geq n_0$.

\begin{thm}\label{Dposterior:thm:consistency-MDAP}
Under that the hypotheses of Theorem \ref{Dposterior:thm:lim-sup-D-post-estim} and if $\Theta$ is complete and separable,  %Additionally, assume that there exists an $n_0 <\infty$ such that $\hat{p}_n(\omega)\in\Gamma_{n,\omega}^+$, and $\hat{\theta}_n^{+D}(\omega)$ is unique for each $n\geq n_0$ on $\omega\in\{Z_n\to\infty\}$.
then the following convergences hold:
\begin{enumerate}[label=(\roman*),ref=\emph{(\roman*)}]
\item $\Delta_{n-1}^{1/2}(\theta_n^{+D}-\hat\theta_n^D)\rightarrow0$ a.s. on $\{Z_n\to\infty\}$.\label{Dposterior:thm:conv-MDE-MDAP}
\item $\Delta_{n-1}^{1/2}(\theta_n^{+D}-\theta_p)\xrightarrow{d} N(0,I^D(\theta_p)^{-1})$ with respect to $P[\cdot|Z_n\to\infty]$.\label{Dposterior:thm:normality-MDAP}
\end{enumerate}
%Furthermore when $p \in{\mathcal{F}}_{\Theta}$ and the identifiability condition holds, then $\ta_p=\ta_0$ and $I^D(\theta_p)=I^D(\theta_0)$.
\end{thm}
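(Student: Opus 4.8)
The plan is to obtain part \ref{Dposterior:thm:conv-MDE-MDAP} by converting the uniform (sup-norm) convergence of the $D$-posterior density established in Theorem \ref{Dposterior:thm:lim-sup-D-post-estim} into convergence of its mode, and then to deduce part \ref{Dposterior:thm:normality-MDAP} from part \ref{Dposterior:thm:conv-MDE-MDAP} together with the asymptotic normality of the MDE via Slutsky's theorem.

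For part \ref{Dposterior:thm:conv-MDE-MDAP}, I would work in the re-centred and re-scaled variable $t=\Delta_{n-1}^{1/2}(\theta-\hat\theta_n^D)$ and set $t_n^+:=\Delta_{n-1}^{1/2}(\theta_n^{+D}-\hat\theta_n^D)$. Since the affine map $\theta\mapsto t$ is strictly monotone with positive Jacobian (for fixed $n$ and $\omega$), the mode $\theta_n^{+D}$ of $\pi_D^n(\cdot|\hat p_n)$ corresponds exactly to the mode $t_n^+$ of $\overline\pi_D^n(\cdot|\hat p_n)$; under the standing uniqueness hypothesis stated just before the theorem together with Proposition \ref{Dposterior:prop:existence-MDAP}, this mode exists, is unique, and is a random variable for $n\geq n_0$ on $\{Z_n\to\infty\}$. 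The target $\Delta_{n-1}^{1/2}(\theta_n^{+D}-\hat\theta_n^D)\to0$ is then precisely $t_n^+\to0$ a.s. on $\{Z_n\to\infty\}$. The core step is an argmax argument exploiting that the limiting density $h(\cdot;\theta_p)$ is the $N(0,I^D(\theta_p)^{-1})$ density, hence unimodal with a unique, well-separated maximum at $t=0$: writing $M=h(0;\theta_p)$, for every $\eta>0$ one has $\delta(\eta):=M-\sup_{|t|\geq\eta}h(t;\theta_p)>0$. Fixing $\omega\in\{Z_n\to\infty\}$ outside a null set and $\eta>0$, Theorem \ref{Dposterior:thm:lim-sup-D-post-estim} gives $\sup_{t\in\R}|\overline\pi_D^n(t|\hat p_n)-h(t;\theta_p)|<\delta(\eta)/3$ for all large $n$, so that $\overline\pi_D^n(0|\hat p_n)>M-\delta(\eta)/3$ while for every $|t|\geq\eta$, $\overline\pi_D^n(t|\hat p_n)<(M-\delta(\eta))+\delta(\eta)/3<\overline\pi_D^n(0|\hat p_n)$. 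Thus the maximiser $t_n^+$ cannot lie in $\{|t|\geq\eta\}$, giving $|t_n^+|<\eta$; since $\eta$ is arbitrary, $t_n^+\to0$.

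For part \ref{Dposterior:thm:normality-MDAP}, I would decompose
\[
\Delta_{n-1}^{1/2}(\theta_n^{+D}-\theta_p)=\Delta_{n-1}^{1/2}(\theta_n^{+D}-\hat\theta_n^D)+\Delta_{n-1}^{1/2}(\hat\theta_n^D-\theta_p).
\]
By part \ref{Dposterior:thm:conv-MDE-MDAP} the first summand tends to $0$ a.s. on $\{Z_n\to\infty\}$, hence in probability under $P[\cdot|Z_n\to\infty]$ (recall $P[Z_n\to\infty]>0$ under \ref{Dposterior:cond:consistencyMLE}); by \ref{Dposterior:dis:assum}~\ref{Dposterior:cond:MDE}, equation \eqref{Dposterior:eq:normalityMDE}, the second summand converges in distribution to $N(0,I^D(\theta_p)^{-1})$ under the same measure. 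Slutsky's theorem then yields the claim.

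I expect the main obstacle to be the argmax step in part \ref{Dposterior:thm:conv-MDE-MDAP}, i.e.\ upgrading uniform convergence of densities to convergence of their modes. The delicate points are securing the well-separated-maximum property of the Gaussian limit for every $\eta>0$, and ensuring the mode is genuinely attained and unique so that the comparison of the density at $t=0$ against $\{|t|\geq\eta\}$ actually pins down $t_n^+$; this is exactly why the uniqueness hypothesis on $\hat\theta_n^{+D}$ and the completeness and separability of $\Theta$ are imposed before the statement. Once part \ref{Dposterior:thm:conv-MDE-MDAP} is in hand, part \ref{Dposterior:thm:normality-MDAP} is routine.
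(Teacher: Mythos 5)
Your proposal is correct and follows essentially the same route as the paper: pass to the rescaled variable $t=\Delta_{n-1}^{1/2}(\theta-\hat\theta_n^D)$, use the existence/uniqueness hypotheses to identify $\theta_n^{+D}$ with the maximiser $t_n^*$ of $\overline{\pi}_D^{n}(\cdot|\hat p_n)$, upgrade the uniform convergence of Theorem \ref{Dposterior:thm:lim-sup-D-post-estim} to convergence of the argmax to the unique mode $t=0$ of the Gaussian limit, and then conclude \ref{Dposterior:thm:normality-MDAP} by the decomposition $\Delta_{n-1}^{1/2}(\theta_n^{+D}-\theta_p)=\Delta_{n-1}^{1/2}(\theta_n^{+D}-\hat\theta_n^D)+\Delta_{n-1}^{1/2}(\hat\theta_n^D-\theta_p)$ together with \eqref{Dposterior:eq:normalityMDE} and Slutsky's theorem. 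The only cosmetic difference is in the argmax step, where you use an explicit well-separation quantity $\delta(\eta)$ while the paper argues via $h(t_n^*)\to h(0)$ and a subsequence contradiction; both are the same standard argument.
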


%We refer the reader to  Section \ref{Dposterior:ap:asymp} of Supplementary material for its proof.

%%%%%%%%%%%%%%%%%%%%%%%%%%%%%%%%%%%%%%%%%%%%%%%%%%%%%%%%%%%%%%%%%%%
\section{Robustness properties}\label{Dposterior:sec:robustness-properties}
%%%%%%%%%%%%%%%%%%%%%%%%%%%%%%%%%%%%%%%%%%%%%%%%%%%%%%%%%%%%%%%%%%%

In this section we describe the robustness properties of Bayesian disparity estimators EDAP and MDAP. The proof of the results of the current section can be found in Section \ref{Dposterior:ap:robust} of Supplementary material. In the context of i.i.d. data some of these issues were investigated in \cite{Hooker-Vidyshankar-2014}. The study of disparities in the frequentist context for branching processes has been investigated for the Hellinger distance in \cite{Sriram-2000} and for a general standpoint in \cite{art-MDE}. It is pertinent to note here that the standard robustness concepts such as influence function, breakdown point (both finite and asymptotic) and $\alpha$-influence curves, $\alpha\in (0,1)$, take a familiar form as the i.i.d. case and hence we will keep the discussion rather brief.

We begin with a brief description of the framework. As is common in such problems, we focus on the gross error contamination model given by
\begin{equation}\label{Dposterior:eq:mixture-model}
    p(\theta,\alpha,L)=(1-\alpha)p(\theta)+\alpha\eta_L,
\end{equation}
where $\theta\in\Theta$, $\alpha\in (0,1)$, $L\in\N_0$, and  $\eta_L$ is a point mass distribution at $L$. Next, we turn to the definition of the $\alpha$-influence function of a random variable $\overline{T}:\Gamma\times\Omega\rightarrow \Theta$. For $\alpha\in (0,1)$, set
\begin{eqnarray*}
IF_{\alpha}(\cdot,\overline{T},p):\N_0\times\Omega &\rightarrow& \R\\
(L,\omega)&\mapsto& IF_{\alpha}(L,\overline{T},p)(\omega)=\alpha^{-1}[\overline{T}(p(\theta_0,\alpha,L))(\omega)-\overline{T}(p(\theta_0))(\omega)].\\
\end{eqnarray*}

In our first result in this section we consider the disparity function $T$ as a degenerate random variable, i.e., $T(q)(\omega)=T(q)$, for each $\omega\in\Omega$ and each $q\in\Gamma$, and establish that the $\alpha-$influence function of the EDAP (respectively, MDAP) function and of the disparity function are close and characterize the difference in terms of the random sample size $\Delta_{n-1}$. Thus, under the conditions of Theorem \ref{Dposterior:thm:aprox-MDE}, the following result on the $\alpha$-influence curves of EDAP and MDAP functions at $p$ is straightforward.

\begin{prop}\label{Dposterior:prop:aprox-IF-EDAP-MDAP-MDE}
Let fix $\alpha\in (0,1)$ and $L\in\N_0$ and assume that $p(\theta_0,\alpha,L)\in\Gamma^*$, and satisfies \ref{Dposterior:alimit}~\ref{Dposterior:cond:MDE-ex-uniq-int}-\ref{Dposterior:cond:2-deriv-MDE}.
\begin{enumerate}[label=(\roman*),ref=\emph{(\roman*)}]
\item Under conditions of Theorem \ref{Dposterior:thm:aprox-MDE}~\ref{Dposterior:thm:aprox-MDE-i-EDAP}, as $n\to\infty$,\label{Dposterior:prop:aprox-IF-EDAP-MDAP-MDE-i}
\begin{equation*}\label{Dposterior:eq:aprox-IF-EDAP-MDE}
  IF_{\alpha}(L,\overline{T}_n,p)-IF_{\alpha}(L,T,p)=o(\Delta_{n-1}^{-1/2})\quad \text{a.s. on } \{Z_n\to\infty\}.
\end{equation*}
\item Under conditions of Theorem \ref{Dposterior:thm:aprox-MDE}~\ref{Dposterior:thm:aprox-MDE-ii-MDAP}, if \ref{Dposterior:alimit}~\ref{Dposterior:cond:uniq-MDAP} holds for $p(\theta_0,\alpha,L)$, then, as $n\to\infty$,\label{Dposterior:prop:aprox-IF-EDAP-MDAP-MDE-ii}
\begin{equation*}\label{Dposterior:eq:aprox-IF-EDAP-MDE}
  IF_{\alpha}(L,\widetilde{T}_n,p)-IF_{\alpha}(L,T,p)=o(\Delta_{n-1}^{-1/2})\quad \text{a.s. on } \{Z_n\to\infty\}.
\end{equation*}
\end{enumerate}
\end{prop}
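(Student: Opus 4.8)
The plan is to reduce the statement directly to Theorem \ref{Dposterior:thm:aprox-MDE} by expanding the definition of the $\alpha$-influence function and applying the almost sure approximation separately to the two probability distributions that enter it. Fix $\alpha\in(0,1)$ and $L\in\N_0$. Recalling that the disparity function $T$ is treated here as a degenerate random variable, the quantity to be bounded can be rewritten as
\begin{align*}
IF_{\alpha}(L,\overline{T}_n,p)-IF_{\alpha}(L,T,p)
&=\alpha^{-1}\Big[\big(\overline{T}_n(p(\theta_0,\alpha,L))-T(p(\theta_0,\alpha,L))\big)\\
&\qquad-\big(\overline{T}_n(p(\theta_0))-T(p(\theta_0))\big)\Big].
\end{align*}
Both bracketed terms are of the form $\overline{T}_n(q)-T(q)$, evaluated at $q=p(\theta_0,\alpha,L)$ and at $q=p(\theta_0)=p$, respectively, which is precisely the quantity controlled by Theorem \ref{Dposterior:thm:aprox-MDE}~\ref{Dposterior:thm:aprox-MDE-i-EDAP}.

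First I would apply Theorem \ref{Dposterior:thm:aprox-MDE}~\ref{Dposterior:thm:aprox-MDE-i-EDAP} to $q=p(\theta_0,\alpha,L)$: by the hypotheses of the proposition this contaminated distribution lies in $\Gamma^*$ and satisfies \ref{Dposterior:alimit}~\ref{Dposterior:cond:MDE-ex-uniq-int}--\ref{Dposterior:cond:2-deriv-MDE}, so the theorem yields $\overline{T}_n(p(\theta_0,\alpha,L))-T(p(\theta_0,\alpha,L))=o(\Delta_{n-1}^{-1/2})$ a.s. on $\{Z_n\to\infty\}$. I would then invoke the same theorem for $q=p$: since we work under the hypotheses of Theorem \ref{Dposterior:thm:aprox-MDE}~\ref{Dposterior:thm:aprox-MDE-i-EDAP}, the posited distribution $p$ itself belongs to $\Gamma^*$ and meets the required regularity conditions, giving $\overline{T}_n(p)-T(p)=o(\Delta_{n-1}^{-1/2})$ a.s. on $\{Z_n\to\infty\}$.

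Subtracting the second expansion from the first and multiplying by the fixed constant $\alpha^{-1}$, the difference of two $o(\Delta_{n-1}^{-1/2})$ quantities remains $o(\Delta_{n-1}^{-1/2})$, which proves part \ref{Dposterior:prop:aprox-IF-EDAP-MDAP-MDE-i}. Part \ref{Dposterior:prop:aprox-IF-EDAP-MDAP-MDE-ii} follows by the identical argument after replacing $\overline{T}_n$ with the MDAP function $\widetilde{T}_n$, appealing to Theorem \ref{Dposterior:thm:aprox-MDE}~\ref{Dposterior:thm:aprox-MDE-ii-MDAP} and using the extra hypothesis \ref{Dposterior:alimit}~\ref{Dposterior:cond:uniq-MDAP} imposed on $p(\theta_0,\alpha,L)$. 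No new estimate is required; the only point needing a word of care is that the two $o(\Delta_{n-1}^{-1/2})$ bounds hold off possibly different null sets, so I would combine them on the common almost sure event $\{Z_n\to\infty\}$ minus the union of those (still null) exceptional sets. This is why the result is genuinely a corollary of the approximation theorem, and the assertion that it is ``straightforward'' is justified.
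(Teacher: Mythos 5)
Your proposal is correct and follows exactly the route the paper intends: the paper gives no separate proof, declaring the result ``straightforward'' under the conditions of Theorem \ref{Dposterior:thm:aprox-MDE}, and the intended argument is precisely your decomposition $IF_{\alpha}(L,\overline{T}_n,p)-IF_{\alpha}(L,T,p)=\alpha^{-1}\bigl[(\overline{T}_n-T)(p(\theta_0,\alpha,L))-(\overline{T}_n-T)(p)\bigr]$ with Theorem \ref{Dposterior:thm:aprox-MDE} applied at both distributions (and likewise for $\widetilde{T}_n$). Your remarks on the fixed constant $\alpha^{-1}$ and on intersecting the two almost sure events are the right details to make the ``straightforward'' claim rigorous.
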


An immediate consequence of this proposition is that for each $\alpha\in (0,1)$, and $n\in\N$ large enough, the $\alpha$-influence curve of the disparity functional at $p$ provides a good approximation of the $\alpha$-influence curves of $\overline{T}_n$ and of $\widetilde{T}_n$ at this probability distribution. In addition, under conditions given in  Theorem 5.1 in \cite{art-MDE}, one has that $\lim_{L\to\infty}\lim_{n\to\infty} IF_{\alpha}(L,\overline{T}_n,p)=0$ and $\lim_{L\to\infty}\lim_{n\to\infty} IF_{\alpha}(L,\widetilde{T}_n,p)=0$.

The study of the $\alpha$-influence curves as a measure for the robustness of an estimator was driven by the fact that the influence functions do not always provide a good description of the robustness properties of an estimator. Nonetheless, we establish conditions under which the influence functions of the EDAP estimators are bounded, and consequently, from the classical viewpoint, the EDAP estimators are robust. Recall the definition of the influence function for EDAP estimators at $p$ is given by
\begin{eqnarray*}
IF(\cdot,\overline{T}_n,p):\N_0 &\rightarrow& \R\nonumber\\
L&\mapsto& IF(L,\overline{T}_n,p)=\lim_{\alpha\to 0 } IF_{\alpha}(L,\overline{T}_n,p).\label{Dposterior:def:influence-function}
\end{eqnarray*}

\begin{thm}\label{Dposterior:thm:influence-function}
If \ref{Dposterior:eq:con:bound-disp} holds, then $|IF(L,\overline{T}_n,p)|<\infty$, for each $L\in\N_0$ and $n\in\N$.
\end{thm}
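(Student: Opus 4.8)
The plan is to recognise $IF(L,\overline{T}_n,p)$ as the $\alpha$-derivative of the EDAP functional along the contamination path and to establish finiteness by differentiating the ratio of integrals defining $\overline{T}_n$ under the integral sign. Fix $L\in\N_0$, $n\in\N$, and $\omega\in\Omega$, so that $\Delta_{n-1}=\Delta_{n-1}(\omega)$ is a fixed finite constant, and write $q_\alpha=p(\theta_0,\alpha,L)=(1-\alpha)p(\theta_0)+\alpha\eta_L$. Since \ref{Dposterior:cond:G-disparity} makes $D$ bounded, $q_\alpha\in\widetilde{\Gamma}$ and, by \ref{Dposterior:cond:cont-par-family} together with Remark~\ref{Dposterior:rem:existence}, $D(q_\alpha,\cdot)$ is continuous, so Proposition~\ref{Dposterior:prop:existence-EDAP} guarantees $\overline{T}_n(q_\alpha)$ exists finitely for all $\alpha$ near $0$. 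Setting $N(\alpha)=\int_\Theta\theta e^{-\Delta_{n-1}D(q_\alpha,\theta)}\pi(\theta)\,d\theta$ and $M(\alpha)=\int_\Theta e^{-\Delta_{n-1}D(q_\alpha,\theta)}\pi(\theta)\,d\theta$, one has $\overline{T}_n(q_\alpha)=N(\alpha)/M(\alpha)$, and $IF(L,\overline{T}_n,p)$ equals $\tfrac{d}{d\alpha}\overline{T}_n(q_\alpha)\big|_{\alpha=0}$ once this derivative is shown to exist.

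The first key computation is the $\alpha$-derivative of the disparity. As $(q_\alpha)_k=(1-\alpha)p_k(\theta_0)+\alpha I_{\{k=L\}}$, differentiating the Pearson residual on $\{k:p_k(\theta)>0\}$ gives $\partial_\alpha\delta(q_\alpha,\theta,k)=(I_{\{k=L\}}-p_k(\theta_0))/p_k(\theta)$, while indices with $p_k(\theta)=0$ contribute nothing since there $\delta$ is identically zero by convention. Hence $\partial_\alpha D(q_\alpha,\theta)=G'(\delta(q_\alpha,\theta,L))I_{\{p_L(\theta)>0\}}-\sum_{k:p_k(\theta)>0}G'(\delta(q_\alpha,\theta,k))p_k(\theta_0)$. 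Because \ref{Dposterior:cond:G-disparity} bounds $G'$ on $[-1,\infty)$, say $|G'|\le B$, and $\sum_k p_k(\theta_0)=1$, this quantity is bounded by $2B$ uniformly in $\theta\in\Theta$ and in $\alpha$ in a neighbourhood of $0$.

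With this uniform bound, I would justify differentiation under the integral sign in $N$ and $M$ by dominated convergence: the $\alpha$-derivative of the integrand is $-\Delta_{n-1}e^{-\Delta_{n-1}D(q_\alpha,\theta)}\partial_\alpha D(q_\alpha,\theta)$ times $\pi(\theta)$ (respectively $\theta\pi(\theta)$), dominated in absolute value by $2B\Delta_{n-1}\pi(\theta)$ (respectively $2B\Delta_{n-1}|\theta|\pi(\theta)$); both dominating functions are integrable, the first because $\pi$ is a density and the second by \ref{Dposterior:mean:prior}. Thus $N,M$ are continuously differentiable near $\alpha=0$ with finite $N'(0),M'(0)$. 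Moreover \ref{Dposterior:cond:G-disparity} also bounds $G$, giving $D\le B_G<\infty$ on $\Gamma\times\Theta$, whence $e^{-\Delta_{n-1}D}\ge e^{-\Delta_{n-1}B_G}$ and $M(\alpha)\ge e^{-\Delta_{n-1}B_G}>0$; finally $|N(0)|\le\int_\Theta|\theta|\pi(\theta)\,d\theta<\infty$ by \ref{Dposterior:mean:prior}. The quotient rule then yields $IF(L,\overline{T}_n,p)=\big(N'(0)M(0)-N(0)M'(0)\big)/M(0)^2$, which is finite (indeed bounded independently of $L$ for fixed $n$ and $\omega$).

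The main obstacle is the dominated-convergence justification of differentiation under the integral sign, which rests on the two-fold role of \ref{Dposterior:eq:con:bound-disp}: boundedness of $G'$ produces the uniform control on $\partial_\alpha D$, while boundedness of $G$ keeps $D$ finite and hence $M(\alpha)$ bounded away from zero. Additional care is needed at parameter values with $p_k(\theta)=0$, where the Pearson residual is set to zero by definition, and in verifying that the bounds hold uniformly for all small $\alpha$ rather than merely at $\alpha=0$, so that the difference quotient genuinely converges to the stated derivative.
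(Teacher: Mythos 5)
Your proposal is correct and takes essentially the same approach as the paper's proof: both identify $IF(L,\overline{T}_n,p)$ as the $\alpha$-derivative at $\alpha=0$ of the ratio of integrals defining $\overline{T}_n$, justify differentiation under the integral sign by domination using the bound on $G'$, lower-bound the normalizing integral via the bound on $G$ (equivalently $0\le D\le \sup|G|$), and invoke \ref{Dposterior:mean:prior} to control the $\theta$-weighted terms before applying the quotient rule. Your explicit computation of $\partial_\alpha D(q_\alpha,\theta)$ through the Pearson residuals, yielding the uniform bound $2B$, merely makes explicit the bound on $C_L(p(\theta_0),\theta)$ that the paper uses implicitly via $\overline{M}=\sup_{\delta\in[-1,\infty)}|G'(\delta)|$.
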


%Proof is provided in Section \ref{Dposterior:ap:robust} of Supplementary material.

Next, we turn our attention to the study of breakdown point. While the notion of influence curves represent the effect of a single outlier in the asymptotic behaviour of the estimator, the breakdown point intuitively represents the percentage of contamination that the estimator can asymptotically bear without taking arbitrarily large values. Classically, the breakdown point of a general function $\overline{T}$ at $q\in\Gamma$ is defined as:
\begin{equation*}\label{Dposterior:def:breakdown-point}
  B(\overline{T},q)=\sup\{\alpha\in(0,1): b(\alpha,\overline{T},q)<\infty\},
\end{equation*}
where $b(\alpha,\overline{T},q)=\sup\ \{|\overline{T}((1-\alpha)q+\alpha\overline{q})-\overline{T}(q)|:\overline{q}\in\Gamma\}$. Note that $b(\alpha,\overline{T},q)=\infty$ is equivalent to the existence of a sequence of probability distributions $\{\overline{q}_L\}_{L\in\N_0}$ satisfying $|\overline{T}((1-\alpha)q+\alpha \overline{q}_L)-\overline{T}(q)|\to\infty$, as $L\to\infty$. That sequence is called sequence of contaminating probability distributions. The notion of breakdown point can be extended in a natural way to a function $\overline{T}: \Gamma\times\Omega\rightarrow\Theta$, such that for each $q\in\Gamma$, $\overline{T}(q)$ is a random variable. We will study the breakdown point of the EDAP and MDAP functions under the following assumption:

\begin{enumerate}[label=(H\arabic*),ref=(H\arabic*),start=7]
\item The contaminating probability distributions and the parametric family satisfy:\label{Dposterior:cond:contam-distr}
\begin{enumerate}[label=(\alph*),ref=(\alph*),start=1]
\item $\lim_{L\to\infty} \sum_{k=0}^\infty \min\left(\overline{q}_{L,k},p_k\right)=0$.\label{Dposterior:cond:ortog-distrib-contam}
\item $\lim_{L\to\infty} \sup_{|\theta|\leq c}\sum_{k=0}^\infty \min\left(\overline{q}_{L,k},p_k(\theta)\right)=0$,\quad $\forall c>0$.\label{Dposterior:cond:ortog-contam-family}
\item $\lim_{|\theta|\to\infty} \sum_{k=0}^\infty \min\left(p_k,p_k(\theta)\right)=0$.\label{Dposterior:cond:ortog-distrib-family}
%\item $G'(\infty)$ is finite.\label{Dposterior:cond:bound-G-prime}
\end{enumerate}
\end{enumerate}
We notice here that the above conditions represent the worst possible contamination context (see, for instance, \cite{Park-Basu-2004}, p.28). The next result, which in content is analogous to the i.i.d. case, describes the role of the prior in the robustness of the EDAP and MDAP estimators. It is to be noted that the asymptotic breakdown point of the minimum disparity estimators are at least 50\%. For the following results, we also fix the following condition:

\begin{enumerate}[label=(H\arabic*),ref=(H\arabic*),start=8]
\item The function $\delta\in [-1,\infty)\mapsto (\delta+1)G'(\delta)-G(\delta)$ is bounded.\label{Dposterior:cond:bound-RAF}
\end{enumerate}

\begin{thm}\label{Dposterior:thm:breakdown-point-EDAP-MDAP}
Suppose \ref{Dposterior:cond:bound-RAF} holds true.
\begin{enumerate}[label=(\roman*),ref=\emph{(\roman*)}]
\item If $D(q,\theta)$ is bounded for any $q\in\Gamma$ and $\theta\in\Theta$, then the breakdown point of the EDAP at $p$ is 1.\label{Dposterior:thm:breakdown-point-i-EDAP}
\item Assume that \ref{Dposterior:eq:con:bound-disp}~\ref{Dposterior:cond:G-disparity}, and \ref{Dposterior:cond:contam-distr}~\ref{Dposterior:cond:ortog-distrib-family} hold. Then, for any family of contaminating distributions $\{\bar{q}_L\}_{L\in\N_0}$ satisfying \ref{Dposterior:cond:contam-distr}~\ref{Dposterior:cond:ortog-distrib-contam}-\ref{Dposterior:cond:ortog-contam-family}, the breakdown point of the MDAP at $p$ is 1. The result also holds for the Hellinger distance. \label{Dposterior:thm:breakdown-point-ii-MDAP}
\end{enumerate}
\end{thm}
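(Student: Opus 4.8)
The plan is to treat the two parts separately, using throughout the elementary reduction that a breakdown point equal to $1$ is equivalent to showing that, for every fixed $\alpha\in(0,1)$, the functional in question stays bounded along every $\alpha$-contaminated family $q_\alpha=(1-\alpha)p+\alpha\overline q$. I fix $n\in\N$ and work on $\{Z_n\to\infty\}$, where $\Delta_{n-1}=\Delta_{n-1}(\omega)$ is a fixed, finite, strictly positive constant, so all bounds below hold $\omega$ by $\omega$.

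For part \ref{Dposterior:thm:breakdown-point-i-EDAP} the argument is a direct two-sided control of the exponential weights. Since $D$ is bounded, say $D(q,\theta)\le M$ for all $(q,\theta)$, and $D\ge 0$, we have $e^{-\Delta_{n-1}D(q,\theta)}\in[e^{-\Delta_{n-1}M},1]$ for every $q\in\Gamma$. Hence, for an arbitrary contamination $q_\alpha$, the denominator defining $\overline T_n(q_\alpha)$ is at least $e^{-\Delta_{n-1}M}\int_\Theta\pi(\theta)\,d\theta=e^{-\Delta_{n-1}M}$, while its numerator is bounded in absolute value by $\int_\Theta|\theta|\pi(\theta)\,d\theta$, finite by \ref{Dposterior:mean:prior}. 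This yields $|\overline T_n(q_\alpha)|\le e^{\Delta_{n-1}M}\int_\Theta|\theta|\pi(\theta)\,d\theta<\infty$, a bound uniform over all $\overline q\in\Gamma$. Consequently $b(\alpha,\overline T_n,p)<\infty$ for every $\alpha\in(0,1)$, so the breakdown point of the EDAP is $1$; note that only the boundedness of $D$ and \ref{Dposterior:mean:prior} are used here.

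For part \ref{Dposterior:thm:breakdown-point-ii-MDAP} the difficulty is that the MDAP is an $\arg\min$, so I must localise the minimiser $\theta_L:=\widetilde T_n(q_{\alpha,L})$ of $g_n(q_{\alpha,L},\theta)=\Delta_{n-1}D(q_{\alpha,L},\theta)-\log\pi(\theta)$, where $q_{\alpha,L}=(1-\alpha)p+\alpha\overline q_L$ and $\{\overline q_L\}$ is any contaminating sequence obeying \ref{Dposterior:cond:contam-distr}~\ref{Dposterior:cond:ortog-distrib-contam}-\ref{Dposterior:cond:ortog-contam-family}. I would first show that on any fixed compact $K\subseteq\Theta$ the contamination is invisible in the limit: by \ref{Dposterior:cond:contam-distr}~\ref{Dposterior:cond:ortog-contam-family} the mass of $\overline q_L$ becomes asymptotically orthogonal to $p(\theta)$ uniformly for $\theta\in K$, so the boundedness of $G$ from \ref{Dposterior:eq:con:bound-disp}~\ref{Dposterior:cond:G-disparity} forces $D(q_{\alpha,L},\theta)\to D((1-\alpha)p,\theta)$ uniformly on $K$, with $(1-\alpha)p$ read as a sub-probability plugged into the disparity sum. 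Fixing a reference $\theta^\ast$ near $T\big((1-\alpha)p\big)$ then makes $g_n(q_{\alpha,L},\theta^\ast)$ bounded uniformly in $L$. In the complementary regime, \ref{Dposterior:cond:contam-distr}~\ref{Dposterior:cond:ortog-distrib-family} makes $p(\theta)$ orthogonal to $p$ as $|\theta|\to\infty$, so the limiting disparity $D((1-\alpha)p,\theta)$ rises to a value strictly exceeding its value at $\theta^\ast$; the delicate point is to promote this limiting gap into an honest one, i.e.\ to produce a $\gamma>0$ with $D(q_{\alpha,L},\theta)-D(q_{\alpha,L},\theta^\ast)\ge\gamma$ for all large $|\theta|$, uniformly in $L$.

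Granting this gap, the inequality $g_n(q_{\alpha,L},\theta^\ast)<g_n(q_{\alpha,L},\theta)$ for large $|\theta|$ reduces to $\Delta_{n-1}\gamma>\log(\pi(\theta)/\pi(\theta^\ast))$, whose right-hand side is bounded above once the prior is bounded (\ref{Dposterior:cond:bound-prior}) and $\pi(\theta^\ast)>0$; hence the minimiser cannot escape and $\sup_L|\theta_L|<\infty$. This gives $b(\alpha,\widetilde T_n,p)<\infty$ for every $\alpha\in(0,1)$, i.e.\ breakdown point $1$, and the Hellinger case follows verbatim because, by the final item of Remark \ref{Dposterior:rem:existence}, it enjoys the same boundedness and uniform-continuity properties despite its unbounded $G$. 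I expect the main obstacle to be exactly this uniform-in-$L$ lower bound at large $|\theta|$: one must rule out a conspiracy in which the escaping contamination and the tail of the model jointly manufacture a spurious low-disparity region at infinity, and it is here that the boundedness of the residual-adjustment quantity $(\delta+1)G'(\delta)-G(\delta)$ from \ref{Dposterior:cond:bound-RAF}---which governs the first derivative $\dot D$---combines with the orthogonality hypotheses \ref{Dposterior:cond:contam-distr} to do the essential work.
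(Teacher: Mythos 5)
Your part \ref{Dposterior:thm:breakdown-point-i-EDAP} is correct and is essentially the paper's own argument: boundedness of $D$ gives two-sided control of the exponential weights, the normalizing integral is at least $e^{-\Delta_{n-1}M}$, and \ref{Dposterior:mean:prior} bounds the numerator, yielding a bound on $|\overline{T}_n((1-\alpha)p+\alpha\overline{q})|$ uniform over all contaminations and all $\alpha$.

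Part \ref{Dposterior:thm:breakdown-point-ii-MDAP}, however, has a genuine gap, located exactly at the step you flagged and then assumed (``granting this gap''). The uniform-in-$L$ separation $D(q_{\alpha,L},\theta)-D(q_{\alpha,L},\theta^{\ast})\geq\gamma>0$ for all large $|\theta|$ is in general \emph{false} for $\alpha>1/2$, and no use of \ref{Dposterior:cond:bound-RAF} can manufacture it. Indeed, hypotheses \ref{Dposterior:cond:contam-distr}~\ref{Dposterior:cond:ortog-distrib-contam}-\ref{Dposterior:cond:ortog-contam-family} are compatible with contamination that is perfectly fit by the parametric family at escaping parameter values, e.g.\ $\bar{q}_L=p(\theta_L')$ with $|\theta_L'|\to\infty$ (this is consistent with \ref{Dposterior:cond:contam-distr}~\ref{Dposterior:cond:ortog-distrib-family}). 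At such $\theta_L'$ the Pearson residuals on the bulk of $p(\theta_L')$ are approximately $\alpha-1$, so $D(q_{\alpha,L},\theta_L')\approx G(\alpha-1)$, whereas $D(q_{\alpha,L},\theta_0)\to D((1-\alpha)p,\theta_0)=G(-\alpha)$. Since $G$ is strictly convex with $G'(0)=0$, it is strictly decreasing on $[-1,0]$, and $G(\alpha-1)<G(-\alpha)$ precisely when $\alpha>1/2$: the disparity is then \emph{smaller} at the escaping parameter than near $T((1-\alpha)p)$. This is exactly why the minimum disparity estimator itself only has breakdown point $1/2$, and why the asymptotic result (Theorem \ref{Dposterior:thm:asympt-breakdown}) requires the extra separation hypothesis \ref{Dposterior:cond:thm-convergence-Dposterior}, which is not assumed in the present theorem.

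What actually prevents breakdown of the MDAP at fixed $n$ --- and what the paper uses --- is the prior, in a stronger form than your final reduction invokes: not its boundedness but its vanishing tail. The paper argues by contradiction: if $|\theta_L|\to\infty$, then by Lemma \ref{Dposterior:lem:1-breakdown-EDAP}~\ref{Dposterior:lem:1-breakdown-EDAP-i} and Jensen's inequality ($D(\alpha\bar{q}_L,\theta)\geq G(\alpha-1)$) the exponential factor at $\theta_L$ is merely bounded, so $\pi_D^n(\theta_L|(1-\alpha)p+\alpha\bar{q}_L)\leq C\,e^{-\Delta_{n-1}G(\alpha-1)}\pi(\theta_L)\to 0$ since $\pi(\theta_L)\to 0$; meanwhile Lemma \ref{Dposterior:lem:1-breakdown-EDAP}~\ref{Dposterior:lem:1-breakdown-EDAP-ii} and dominated convergence give $\pi_D^n(\theta_0|(1-\alpha)p+\alpha\bar{q}_L)\to e^{-\Delta_{n-1}G(-\alpha)}\pi(\theta_0)\big/\int_\Theta e^{-\Delta_{n-1}D((1-\alpha)p,\theta)}\pi(\theta)d\theta>0$, contradicting the maximality of $\theta_L$. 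In your notation: $-\log\pi(\theta)\to\infty$ dominates the uniformly bounded term $\Delta_{n-1}D(q_{\alpha,L},\theta)$, so $g_n(q_{\alpha,L},\theta)\to\infty$ as $|\theta|\to\infty$ uniformly in $L$ while $g_n(q_{\alpha,L},\theta^{\ast})$ stays bounded --- no disparity gap is needed. Note also that your closing reduction would not close even if $\gamma>0$ were granted: for fixed $n$, $\Delta_{n-1}\gamma$ is a fixed constant with no reason to exceed $\log\bigl(\sup_{\theta}\pi(\theta)/\pi(\theta^{\ast})\bigr)$; the inequality $\Delta_{n-1}\gamma>\log(\pi(\theta)/\pi(\theta^{\ast}))$ holds for large $|\theta|$ only because $\log\pi(\theta)\to-\infty$, which is again the vanishing-tail property, not boundedness.
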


We next turn our attention to a new notion for the study of robustness for a sequence of estimators which are dependent on the sample size. The \emph{asymptotic breakdown point for a sequence of estimators} $\{T_n\}_{n\in\N}$ at $q\in\Gamma$ is defined as:
\begin{equation*}\label{Dposterior:def:breakdown-point}
  B(\{T_n\}_{n\in\N},q)=\sup\{\alpha\in(0,1): \limsup_{n\to\infty} b(\alpha,T_n,q)<\infty\}.
\end{equation*}

Before we describe the behaviour of EDAP and MDAP with respect to the above measure, we introduce the following assumption:
\begin{enumerate}[label=(H\arabic*),ref=(H\arabic*),start=9]
\item The sequence of contaminating probability distributions $\{\bar{q}_L\}_{L\in\N_0}$ satisfies that for any $\alpha\in\left(0,1\right)$ there exists $\delta>0$ such that\label{Dposterior:cond:thm-convergence-Dposterior}
\begin{equation*}
\inf_{L\in\N_0}\inf_{\theta\in\Theta} D(\alpha \overline{q}_L,\theta)>\inf_{\theta\in\Theta} D((1-\alpha)p,\theta)+\delta.
\end{equation*}
\end{enumerate}

Our next theorem describes the asymptotic breakdown point of EDAP and MDAP estimators. To that end, we need some additional regularity conditions on the parametric family; indeed, assume condition (A4) -alternatively, (A6) for the Hellinger distance- in \cite{art-MDE}.

\begin{thm}\label{Dposterior:thm:asympt-breakdown}
Assume that \ref{Dposterior:eq:con:bound-disp}~\ref{Dposterior:cond:G-disparity}, \ref{Dposterior:cond:consistencyMLE}, \ref{Dposterior:alimit}~\ref{Dposterior:cond:3-deriv-prior}, \ref{Dposterior:cond:contam-distr}~\ref{Dposterior:cond:ortog-distrib-family}  hold true. Moreover, assume that \ref{Dposterior:alimit}~\ref{Dposterior:cond:MDE-ex-uniq-int}-\ref{Dposterior:cond:2-deriv-MDE} hold for any $q\in\Gamma^*$. For any family of contaminating distributions $\{\bar{q}_L\}_{L\in\N_0}$ satisfying \ref{Dposterior:cond:contam-distr}~\ref{Dposterior:cond:ortog-distrib-contam}-\ref{Dposterior:cond:ortog-contam-family} and under \ref{Dposterior:cond:bound-RAF} and \ref{Dposterior:cond:thm-convergence-Dposterior} the following results hold.
\begin{enumerate}[label=(\roman*),ref=\emph{(\roman*)}]
\item The asymptotic breakdown point of $\{\overline{T}_n\}_{n\in\N}$ at $p$ is equal to the breakdown point of $T$ at $p$. \label{Dposterior:thm:asympt-breakdown-EDAP}
\item If $\Theta$ is complete and separable, \ref{Dposterior:alimit}~\ref{Dposterior:cond:bound-prior} and \ref{Dposterior:alimit}~\ref{Dposterior:cond:uniq-MDAP} holds for any $q\in\Gamma^*$, then the asymptotic breakdown point of $\{\widetilde{T}_n\}_{n\in\N}$ at $p$ is equal to the breakdown point of $T$ at $p$. \label{Dposterior:thm:asympt-breakdown-MDAP}
\end{enumerate}
The result also holds for the Hellinger distance.
\end{thm}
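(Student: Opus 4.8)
The plan is to establish each asymptotic breakdown point by sandwiching it between $B(T,p)$ from below and above, treating the EDAP sequence $\{\overline{T}_n\}$ first and then adapting the argument to the MDAP sequence $\{\widetilde{T}_n\}$. Throughout write $q_{\alpha,\overline q}=(1-\alpha)p+\alpha\overline q$ for the contaminated law. Since in the definition of $b(\alpha,\overline{T}_n,p)$ the inner supremum over $\overline q\in\Gamma$ is taken \emph{before} the $\limsup_n$, the two inequalities $B(\{\overline{T}_n\},p)\le B(T,p)$ and $B(\{\overline{T}_n\},p)\ge B(T,p)$ require genuinely different treatments, and I expect the monotonicity of $\alpha\mapsto b(\alpha,\cdot,p)$ to reduce the claim to showing finiteness (resp.\ non-finiteness) of $\limsup_n b(\alpha,\overline{T}_n,p)$ for $\alpha$ below (resp.\ above) $B(T,p)$.

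For the bound $B(\{\overline{T}_n\},p)\le B(T,p)$, fix $\alpha>B(T,p)$. By definition of the disparity-functional breakdown there is a contaminating sequence $\{\overline q_L\}$, which we take to satisfy \ref{Dposterior:cond:contam-distr}~\ref{Dposterior:cond:ortog-distrib-contam}--\ref{Dposterior:cond:ortog-contam-family}, with $|T(q_{\alpha,\overline q_L})-T(p)|\to\infty$ as $L\to\infty$. For each fixed $L$ the law $q_{\alpha,\overline q_L}$ lies in $\Gamma^*$ and satisfies \ref{Dposterior:alimit}~\ref{Dposterior:cond:MDE-ex-uniq-int}--\ref{Dposterior:cond:2-deriv-MDE}, so Theorem~\ref{Dposterior:thm:aprox-MDE}~\ref{Dposterior:thm:aprox-MDE-i-EDAP} gives $\overline{T}_n(q_{\alpha,\overline q_L})\to T(q_{\alpha,\overline q_L})$ and $\overline{T}_n(p)\to T(p)$ a.s.\ on $\{Z_n\to\infty\}$. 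Because $b(\alpha,\overline{T}_n,p)\ge|\overline{T}_n(q_{\alpha,\overline q_L})-\overline{T}_n(p)|$ for every $L$, passing to $\limsup_n$ yields $\limsup_n b(\alpha,\overline{T}_n,p)\ge|T(q_{\alpha,\overline q_L})-T(p)|$ for each $L$, and letting $L\to\infty$ forces $\limsup_n b(\alpha,\overline{T}_n,p)=\infty$. Hence $\alpha\ge B(\{\overline{T}_n\},p)$, and as $\alpha>B(T,p)$ was arbitrary the inequality follows. The only care needed is that the a.s.\ statements hold simultaneously over the countable family $\{q_{\alpha,\overline q_L}\}$, which holds by a countable intersection of full-probability sets.

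The reverse inequality $B(\{\overline{T}_n\},p)\ge B(T,p)$ is the crux and the main obstacle. Fix $\alpha<B(T,p)$; then $M:=\sup_{\overline q}|T(q_{\alpha,\overline q})-T(p)|<\infty$, so all minimum-disparity points lie in a fixed compact $K\subseteq\Theta$, and it remains to show the $D$-posterior mean $\overline{T}_n(q_{\alpha,\overline q})=\int_\Theta\theta\,\pi_D^n(\theta|q_{\alpha,\overline q})\,d\theta$ stays bounded \emph{uniformly} in $\overline q$ for all large $n$. The mechanism is a uniform tail estimate: combining the orthogonality \ref{Dposterior:cond:contam-distr}~\ref{Dposterior:cond:ortog-distrib-family} with the boundedness structure of $G$ in \ref{Dposterior:eq:con:bound-disp}~\ref{Dposterior:cond:G-disparity} and condition (A4) of \cite{art-MDE}, one produces $R_0,c_0>0$, \emph{independent of} $\overline q$, such that $D(q_{\alpha,\overline q},\theta)-D(q_{\alpha,\overline q},T(q_{\alpha,\overline q}))\ge c_0$ whenever $|\theta|>R_0$; here the separation hypothesis \ref{Dposterior:cond:thm-convergence-Dposterior} is precisely what prevents the contamination mass $\alpha\overline q_L$ from creating a competing well at infinity, while \ref{Dposterior:cond:bound-RAF} keeps the residual adjustment function under uniform control. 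Bounding the normalising constant from below by the posterior mass on a small neighbourhood of $T(q_{\alpha,\overline q})$ inside $K$ (using $\pi(T(q_{\alpha,\overline q}))>0$), the posterior mass on $\{|\theta|>R_0\}$ decays like $e^{-\Delta_{n-1}c_0'}$ for some $c_0'\in(0,c_0)$; since $\Delta_{n-1}\to\infty$ a.s.\ on $\{Z_n\to\infty\}$, this outpaces the linear growth of $|\theta|$, and together with \ref{Dposterior:mean:prior} controlling $\int_{|\theta|>R_0}|\theta|\pi(\theta)\,d\theta$ it gives $\limsup_n\sup_{\overline q}|\overline{T}_n(q_{\alpha,\overline q})|<\infty$. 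Thus $\limsup_n b(\alpha,\overline{T}_n,p)<\infty$ and $\alpha\le B(\{\overline{T}_n\},p)$, proving (i). The delicate point is that \emph{every} constant above ($R_0$, $c_0$, $c_0'$, and the lower bound on the partition function) must be uniform over all admissible contaminations $\overline q$; this is exactly where \ref{Dposterior:cond:contam-distr}, \ref{Dposterior:cond:bound-RAF} and \ref{Dposterior:cond:thm-convergence-Dposterior} do their real work.

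For the MDAP sequence in (ii) the upper bound is identical, invoking Theorem~\ref{Dposterior:thm:aprox-MDE}~\ref{Dposterior:thm:aprox-MDE-ii-MDAP} in place of \ref{Dposterior:thm:aprox-MDE-i-EDAP} and using completeness and separability of $\Theta$ together with \ref{Dposterior:alimit}~\ref{Dposterior:cond:bound-prior} and \ref{Dposterior:alimit}~\ref{Dposterior:cond:uniq-MDAP}. For the lower bound I would argue directly with the objective $g_n(q,\theta)=\Delta_{n-1}D(q,\theta)-\log\pi(\theta)$: on $\{|\theta|>R_0\}$ the same uniform separation gives $g_n(q_{\alpha,\overline q},\theta)\ge\Delta_{n-1}\big(D(q_{\alpha,\overline q},T(q_{\alpha,\overline q}))+c_0\big)-\log(\sup\pi)$, whereas $g_n(q_{\alpha,\overline q},T(q_{\alpha,\overline q}))=\Delta_{n-1}D(q_{\alpha,\overline q},T(q_{\alpha,\overline q}))-\log\pi(T(q_{\alpha,\overline q}))$ with $\pi(T(q_{\alpha,\overline q}))>0$; since $\Delta_{n-1}\to\infty$, the gap of order $\Delta_{n-1}c_0$ forces the minimiser into $\{|\theta|\le R_0\}$ for all large $n$, uniformly in $\overline q$. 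Finally, the Hellinger case is handled by replacing condition (A4) with (A6) of \cite{art-MDE} and invoking the fact, recorded in Remark~\ref{Dposterior:rem:existence}, that the squared Hellinger distance, though not generated by a bounded $G$, still satisfies the uniform-continuity and uniform-separation properties used above; the remainder of both arguments then goes through verbatim.
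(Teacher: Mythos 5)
Your proof of part \ref{Dposterior:thm:asympt-breakdown-EDAP} is essentially the paper's: the direction $B(\{\overline{T}_n\},p)\leq B(T,p)$ is exactly the paper's argument that $\alpha\in B_1^c$ implies $\alpha\in B_2^c$ (transfer a breakdown sequence for $T$ to $\overline{T}_n$ via Theorem \ref{Dposterior:thm:aprox-MDE}~\ref{Dposterior:thm:aprox-MDE-i-EDAP} and the triangle inequality, with the same care about intersecting countably many full-measure sets), and your ``uniform tail estimate'' for the converse is precisely the content of the paper's Lemma \ref{Dposterior:lem:2-breakdown-EDAP} (imported from Lemma 2 of \cite{Hooker-Vidyshankar-2014} and resting on \ref{Dposterior:cond:thm-convergence-Dposterior}), followed by the identical splitting of the posterior mean at the radius $M^*$ (your $R_0$): the tail numerator decays like $e^{-\Delta_{n-1}\lambda}$ while the normalising constant is lower-bounded by prior mass on an $\epsilon$-neighbourhood of $T((1-\alpha)p+\alpha\bar{q}_L)$ obtained via the mean value theorem, condition (A4) in \cite{art-MDE} and \ref{Dposterior:cond:bound-RAF}, giving $|\overline{T}_n|\leq M^*+e^{-\Delta_{n-1}\lambda/2}K^*$. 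Where you genuinely depart from the paper is the boundedness direction of part \ref{Dposterior:thm:asympt-breakdown-MDAP}: you rerun the uniform separation directly on $g_n(q,\theta)=\Delta_{n-1}D(q,\theta)-\log\pi(\theta)$ and let the gap of order $\Delta_{n-1}c_0$ force the minimiser into $\{|\theta|\leq R_0\}$; the paper instead uses a comparison trick, setting $\theta^*=\arg\max_\theta \pi(\theta)$ and $\overline{g}_n(q,\theta)=\Delta_{n-1}D(q,\theta)-\log\pi(\theta^*)$, noting $g_n\geq\overline{g}_n$ with equality at $\theta^*$ to conclude $|\widetilde{T}_n(q)-\theta^*|\leq|T(q)-\theta^*|$ for large $n$, whence $b(\alpha,\widetilde{T}_n,p)\leq b(\alpha,T,p)+2|T(p)-\theta^*|$ with no second tail estimate at all. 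Your route is more self-contained (and makes the role of \ref{Dposterior:alimit}~\ref{Dposterior:cond:bound-prior} transparent), but it carries a uniformity debt the paper's trick sidesteps: your threshold on $n$ depends on $\log\bigl(\sup_\theta\pi(\theta)/\pi(T((1-\alpha)p+\alpha\bar{q}_L))\bigr)$, so you need $\inf_{L}\pi(T((1-\alpha)p+\alpha\bar{q}_L))>0$; this is obtainable (Lemma \ref{Dposterior:lem:1-breakdown-EDAP}~\ref{Dposterior:lem:1-breakdown-EDAP-ii} drives $T((1-\alpha)p+\alpha\bar{q}_L)$ to $T((1-\alpha)p)$, and $\pi$ is continuous and positive there) but should be stated explicitly — it is the exact analogue of the implicit finiteness of $K^*$ in the EDAP case. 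The remaining pieces — the MDAP divergence direction via Theorem \ref{Dposterior:thm:aprox-MDE}~\ref{Dposterior:thm:aprox-MDE-ii-MDAP}, and the Hellinger case via condition (A6) of \cite{art-MDE} in place of (A4) — match the paper verbatim.
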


While the above results are concerned with EDAP and MDAP estimators, in the context of the methods proposed in the manuscript it is important to know the asymptotic behaviour of the corresponding $D$-posterior density functions at a contaminating model. Our next result addresses this issue. Specifically,  given $\alpha\in (0,1)$, and a family of contaminating distributions $\{\bar{q}_L\}_{L\in\N_0}$, the following result establishes the asymptotic behaviour, as $L\to\infty$, of the $D$-posterior density function at $(1-\alpha)p+\alpha \overline{q}_L$.

\begin{thm}\label{Dposterior:thm:convergence-Dposterior}
Let $\alpha\in (0,1)$. Assume that \ref{Dposterior:eq:con:bound-disp}~\ref{Dposterior:cond:G-disparity},  \ref{Dposterior:cond:consistencyMLE},  and \ref{Dposterior:cond:contam-distr}~\ref{Dposterior:cond:ortog-distrib-family} hold true, and \ref{Dposterior:alimit}~\ref{Dposterior:cond:MDE-ex-uniq-int}-\ref{Dposterior:cond:2-deriv-MDE} hold for any $q\in\Gamma^*$. If $\{\bar{q}_L\}_{L\in\N_0}$ is a family of contaminating distribution satisfying \ref{Dposterior:cond:contam-distr}~\ref{Dposterior:cond:ortog-distrib-contam}-\ref{Dposterior:cond:ortog-contam-family}, then under \ref{Dposterior:cond:bound-RAF} and \ref{Dposterior:cond:thm-convergence-Dposterior},
\begin{enumerate}[label=(\roman*),ref=\emph{(\roman*)}]
\item $\lim_{L\to\infty}\overline{\pi}_D^n(\theta | (1-\alpha)p+\alpha \overline{q}_L)=\overline{\pi}_D^n(\theta|(1-\alpha)p).$\label{Dposterior:thm:convergence-Dposterior-i}
\item $\lim_{L\to\infty} \int_\Theta|\overline{\pi}_D^n(\theta | (1-\alpha)p+\alpha \overline{q}_L)-\overline{\pi}_D^n(\theta|(1-\alpha)p)|d\theta=0.$\label{Dposterior:thm:convergence-Dposterior-ii}
\end{enumerate}
The result also holds for the Hellinger distance.
\end{thm}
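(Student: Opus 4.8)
The plan is to exploit the fact that, as $L\to\infty$, the escaping contamination $\alpha\overline q_L$ perturbs the disparity $D(\,\cdot\,,\theta)$ only through a \emph{$\theta$-free additive constant}, which then cancels in the normalisation of the $D$-posterior density \eqref{Dposterior:def:D-posterior-density}. Throughout I fix $n$ and a realisation $\omega$, so that $\Delta_{n-1}=\Delta_{n-1}(\omega)$ is a fixed finite positive number, and I work with $q_L:=(1-\alpha)p+\alpha\overline q_L$; conditions \ref{Dposterior:cond:consistencyMLE} and \ref{Dposterior:alimit}~\ref{Dposterior:cond:MDE-ex-uniq-int}-\ref{Dposterior:cond:2-deriv-MDE} (for $q\in\Gamma^*$) guarantee that the relevant $D$-posteriors and disparity centres $T(q_L),\,T((1-\alpha)p)$ are well defined. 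Once part \ref{Dposterior:thm:convergence-Dposterior-i} is in hand, part \ref{Dposterior:thm:convergence-Dposterior-ii} will be immediate from Scheff\'e's theorem, since both objects are probability densities on $\Theta$.

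The first and central step is to prove that, for every fixed $\theta\in\Theta$,
\begin{equation*}
\lim_{L\to\infty} D(q_L,\theta)=D\big((1-\alpha)p,\theta\big)+\alpha G^*,\qquad G^*:=\lim_{\delta\to\infty}\frac{G(\delta)}{1+\delta}.
\end{equation*}
The existence and finiteness of $G^*$ is exactly where \ref{Dposterior:cond:bound-RAF} enters: since $\frac{d}{d\delta}\big(G(\delta)/(1+\delta)\big)=\big[(1+\delta)G'(\delta)-G(\delta)\big]/(1+\delta)^2$ has a bounded numerator, this derivative is integrable on $[0,\infty)$, so $G(\delta)/(1+\delta)$ has a finite limit. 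To identify the limit of $D(q_L,\theta)$ I would split $\N_0$ into the set where $\overline q_L$ is asymptotically negligible and the set where it escapes. On the first set $q_{L,k}\to(1-\alpha)p_k$, so the Pearson residuals converge to those of $(1-\alpha)p$ and, using the boundedness of $G$ from \ref{Dposterior:eq:con:bound-disp}~\ref{Dposterior:cond:G-disparity} to control the remainder, the contribution tends to $D((1-\alpha)p,\theta)$. On the escaping set the orthogonality conditions \ref{Dposterior:cond:contam-distr}~\ref{Dposterior:cond:ortog-distrib-contam}-\ref{Dposterior:cond:ortog-contam-family} force $p_k(\theta)\to0$ while the $\overline q_L$-mass stays of order $\alpha$; writing $G(\delta)p_k(\theta)=\frac{G(\delta)}{1+\delta}\,q_{L,k}$ with $\delta\to\infty$ there, this contribution tends to $\alpha G^*$. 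The separation condition \ref{Dposterior:cond:thm-convergence-Dposterior} ensures the contaminating mass never mimics the family, so that the two regions remain genuinely disjoint and the split is legitimate.

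Granting this limit, part \ref{Dposterior:thm:convergence-Dposterior-i} follows by substituting into \eqref{Dposterior:def:D-posterior-density}. The numerator converges pointwise to $e^{-\Delta_{n-1}[D((1-\alpha)p,\theta)+\alpha G^*]}\pi(\theta)$, and for the denominator I would invoke dominated convergence: because $D\ge0$ gives $0\le e^{-\Delta_{n-1}D(q_L,\theta)}\le1$, the integrand is dominated by $\pi$, which is integrable by \ref{Dposterior:mean:prior}, so the normalising constant tends to $e^{-\Delta_{n-1}\alpha G^*}\int_\Theta e^{-\Delta_{n-1}D((1-\alpha)p,\theta)}\pi(\theta)\,d\theta$. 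The common factor $e^{-\Delta_{n-1}\alpha G^*}$ cancels between numerator and denominator, giving $\overline\pi_D^n(\theta\mid q_L)\to\overline\pi_D^n(\theta\mid(1-\alpha)p)$; the same cancellation shows $T(q_L)\to T((1-\alpha)p)$, so the conclusion is unaffected if one works in the centred and scaled parametrisation $t=\Delta_{n-1}^{1/2}(\theta-T(q))$. Part \ref{Dposterior:thm:convergence-Dposterior-ii} then follows because both densities integrate to one, so pointwise convergence upgrades to $L^1$-convergence by Scheff\'e's theorem. For the Hellinger distance $G$ is unbounded and \ref{Dposterior:cond:G-disparity}, \ref{Dposterior:cond:bound-RAF} fail, but $HD$ is bounded on $\Gamma\times\Theta$ (Remark \ref{Dposterior:rem:existence}) and a direct computation gives $G(\delta)/(1+\delta)\to2$; hence the argument carries over verbatim with $G^*=2$.

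I expect the main obstacle to be the key limit of the second paragraph. The delicate bookkeeping is on the escaping set, where both $p_k(\theta)$ and $p_k$ vanish: one must show that the escaping part of $D(q_L,\theta)$ converges to \emph{exactly} $\alpha G^*$, uniformly enough in $\theta$ (for fixed $\theta$ this is all that is needed for dominated convergence), and that the residual cross-terms are negligible. This is precisely where the interplay of the orthogonality conditions \ref{Dposterior:cond:contam-distr}, the separation condition \ref{Dposterior:cond:thm-convergence-Dposterior}, and the existence of $G^*$ via \ref{Dposterior:cond:bound-RAF} is used; everything downstream (cancellation of the constant, dominated convergence, Scheff\'e) is routine.
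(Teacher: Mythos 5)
Your proposal is correct, and for part \ref{Dposterior:thm:convergence-Dposterior-i} it is essentially the paper's route: the paper derives \ref{Dposterior:thm:convergence-Dposterior-i} from the arguments of Theorem \ref{Dposterior:thm:breakdown-point-EDAP-MDAP}~\ref{Dposterior:thm:breakdown-point-ii-MDAP}, whose engine is Lemma \ref{Dposterior:lem:1-breakdown-EDAP}~\ref{Dposterior:lem:1-breakdown-EDAP-ii} --- precisely your key limit, stated there as $\sup_{|\theta|<M_2}\sup_{L>L_2}|D((1-\alpha)p+\alpha\overline{q}_L,\theta)-D((1-\alpha)p,\theta)|<\delta$ --- followed by dominated convergence for the normalising constant, exactly as in your third paragraph. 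Note that under \ref{Dposterior:eq:con:bound-disp}~\ref{Dposterior:cond:G-disparity} your constant $G^*=\lim_{\delta\to\infty}G(\delta)/(1+\delta)$ equals $0$ (bounded $G$), so your limit coincides with the lemma's and the additive constant is vacuous there; your $+\alpha G^*$ formulation with cancellation in the ratio is in fact the honest way to cover the Hellinger case, where one checks directly that $HD((1-\alpha)p+\alpha\overline{q}_L,\theta)\to HD((1-\alpha)p,\theta)+2\alpha$, consistent with your $G^*=2$, whereas the paper simply asserts that the lemma ``also holds'' for $HD$, citing the analogous lemma of Hooker--Vidyashankar. For part \ref{Dposterior:thm:convergence-Dposterior-ii} you genuinely diverge: the paper does \emph{not} use Scheff\'e but a direct estimate, splitting $\int_\Theta$ into $\{|\theta|<M\}$ and its complement, using uniform continuity of $x\mapsto e^{-\Delta_{n-1}x}$ on $[c,\infty)$ with $c=\min(0,G(-\alpha))$ together with Lemma \ref{Dposterior:lem:1-breakdown-EDAP}~\ref{Dposterior:lem:1-breakdown-EDAP-ii} on the bounded part, the Jensen bound $D((1-\alpha)p,\theta)\geq G(-\alpha)$ on the tail, and then letting $L\to\infty$, $M\to\infty$, $\epsilon\to 0$. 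Your Scheff\'e argument is shorter and perfectly valid, since both $D$-posteriors are probability densities on $\Theta$ and part \ref{Dposterior:thm:convergence-Dposterior-i} supplies the pointwise convergence.

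Two small inaccuracies are worth flagging. First, condition \ref{Dposterior:cond:thm-convergence-Dposterior} plays no role in legitimising the split in your key limit: the split requires only the orthogonality conditions \ref{Dposterior:cond:contam-distr} (take the escaping set $\{k:\overline{q}_{L,k}>c\,p_k(\theta)\}$ for large $c$; the vanishing of $\sum_k\min(\overline{q}_{L,k},p_k(\theta))$ controls both the residual $p(\theta)$-mass on it and the residual $\overline{q}_L$-mass off it) together with boundedness of $G$ and $G'$, or of $G(\delta)/(1+\delta)$ via \ref{Dposterior:cond:bound-RAF}. In the paper, \ref{Dposterior:cond:thm-convergence-Dposterior} is the separation hypothesis feeding Lemma \ref{Dposterior:lem:2-breakdown-EDAP} for the asymptotic breakdown result, and it is not actually invoked in the proof of the present theorem. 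Second, your side claim $T(q_L)\to T((1-\alpha)p)$ does not follow from pointwise convergence plus cancellation alone; it needs uniform-on-compacts convergence of the disparities together with a separation property of the minimiser of $D((1-\alpha)p,\cdot)$. Nothing in the theorem requires it, however: as the paper's proof makes clear, $\overline{\pi}_D^n(\cdot\,|\,q)$ is handled here directly as a density in $\theta$, so no recentring at $T(q_L)$ is needed and this remark can simply be dropped.
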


%We refer the reader to Section \ref{Dposterior:ap:robust} in Supplementary material for the proof.

%%%%%%%%%%%%%%%%%%%%%%%%%%%%%%%%%%%%%%%%%%%%%%%%%%%%%%%%%%%%%%%%%%%
\section{Examples}\label{Dposterior:sec:examples}
%%%%%%%%%%%%%%%%%%%%%%%%%%%%%%%%%%%%%%%%%%%%%%%%%%%%%%%%%%%%%%%%%%%

In this section, we illustrate the methodology proposed via several examples. The first example is based on a real data set in the context of cell kinetics populations. In this example, we show that the EDAP and MDAP estimators based on different disparity measures are as good as the ones based on the Kullback-Leibler divergence in absence of contamination. The second example of this section, which is a continuation of Example \ref{Dposterior:ex:LD} introduced in Section \ref{Dposterior:sec:method}, demonstrates the accuracy of the method in a contamination context. Thus, the twofold goodness (in contamination and free-contamination contexts) of EDAP and MDAP estimators is exemplified. In both cases, the results are illustrated by considering the Hellinger distance and the negative exponential disparity.

\subsection{Example - Oligodendrocytes}\label{Dposterior:ex:oligo}

In this example, we consider data from a study on cell proliferation published by \cite{Hyrien-AMNY2006} and which have been also examined in  \cite{Yakovlev-Stoimenova-Yanev-2008}. Among the aims of both papers is to model the proliferation of oligodendrocyte precursor cells,  and their transformation into terminally differentiated oligodendrocytes through multitype age-dependent branching processes.

As mentioned above, the cell populations considered consist of two types of cells: the oligodendrocyte precursor cells, referred as type $T_1$ cells, and the terminally differentiated oligodendrocytes, referred as type $T_2$ cells. The development of both types of cells is as follows: type $T_1$ cells can die without any offspring or divide under normal conditions; when stimulating  to division, precursor cells are capable of producing either direct progeny (two daughter cells of the same type) or a single terminally differentiated nondividing oligodendrocyte. The data are supplied by time-lapse video recording of oligodendrocyte populations. These experimental techniques enable to record all observable events (division, differentiation, or death), as well as their timing, in the development of each individual cell. Due to the design of these experimental techniques it is reasonable to consider absence of contamination of the sample observed (the multi-daughter cell division can be observed although rarely in cancer scenarios - see \cite{tse-weaver-dicarlo-2012} - which is not the case). A special feature is the presence of censoring effects due to migration of precursor cells out of the microscopic field of observation, modelled as a process of emigration of the type $T_1$ cells.

In a frequentist context and in order to estimate the reproduction probabilities, in \cite{Yakovlev-Stoimenova-Yanev-2008} the authors proposed estimators for the offspring distribution by using the discrete time branching process obtained by embedding the branching structure of the aforesaid continuous-time branching process. It is important to note that cell emigration is artificially incorporated into that discrete-time model via a modified reproduction law. As a more natural alternative, we propose a  two-type controlled  branching process to describe the embedded discrete branching structure of the age-dependent branching process previously quoted. In the framework given by this process we deal with the problem of estimating the offspring distribution of the cell population from a Bayesian standpoint by making use of disparity measures.

We introduce a two-type controlled branching process with binomial control, which is a particular case of that introduced in \cite{Gonzalez-Martinez-Mota-2005}.
Indeed, we propose the following controlled two-type  branching process, $\{Z_n\}_{n\in\N_0}$, defined as follows:
\begin{align*}
Z_0=(N,0),\quad Z_{n+1}= \sum_{j=1}^{\phi_n(Z_n)}(X_{n,j}^{(1)},X_{n,j}^{(2)}),\quad n\in\N_0,
\end{align*}
where $Z_n=(Z_n^{(1)},Z_n^{(2)})$, $N\in\N$, and $\{(X_{n,j}^{(1)},X_{n,j}^{(2)}): j\in\N, n\in\N_0\}$ and $\{\phi_n(z): n\in\N_0, z\in\N_0^2\}$ are two independent families of non-negative integer valued random variables satisfying the following conditions:
\begin{enumerate}[label=(\roman*),ref=(\roman*),start=1]
\item For each $z=(z^{(1)},z^{(2)})\in\N_0^2$, the random variables $\{\phi_n(z): n\in\N_0\}$ are i.i.d. following a binomial distribution with parameters $z^{(1)}$ and $\gamma\in (0,1)$.

\item The stochastic processes $\{(X_{n,j}^{(1)},X_{n,j}^{(2)}): j\in\N\}$, $n\in\N_0$ are i.i.d. following the probability distribution
\begin{align*}
p_0&=P\left[X_{n,j}^{(1)}=0,X_{n,j}^{(2)}=0\right],\\
p_1&=P\left[X_{n,j}^{(1)}=2,X_{n,j}^{(2)}=0\right],\\
p_2&=P\left[X_{n,j}^{(1)}=0,X_{n,j}^{(2)}=1\right].
\end{align*}

\item If $n_1,n_2\in\N_0$ are such that $n_1\neq n_2$, then, the sequences $\{(X_{n_1,j}^{(1)},X_{n_1,j}^{(2)}): j\in\N\}$ and  $\{(X_{n_2,j}^{(1)},X_{n_2,j}^{(2)}): j\in\N\}$ are independent.
\end{enumerate}
Intuitively, $Z_n^{(j)}$ denotes the number of cells of type $T_j$, $j=1,2$, in the $n$-th generation, and $X_{n,j}^{(1)}$ and $X_{n,j}^{(2)}$ represent the number of cells of type $T_1$ and $T_2$, respectively, produced by the $j$-th progenitor of type $T_1$ in the generation $n$. The random variables $\phi_n(z)$ are introduced to model the cell emigration, and therefore to determine the number of progenitors of type $T_1$ cells in the $n$-th generation provided the population size at that generation is the vector $z$. Consequently, $\gamma$ is the probability that a cell of type $T_1$ completes successfully its mitotic-cycle regardless of its outcome (and then $1-\gamma$ is the probability of emigration of that $T_1$ cell).

We focus our attention on the offspring distribution, $\{p_0,p_1,p_2\}$, where $p_0$ is the probability that a cell of type $T_1$ dies, $p_1$ is the probability that a cell of type $T_1$ divides into two cells of type $T_1$ and $p_2$ is the probability that a cell of type $T_1$ differentiates into a new cell of type $T_2$. Using the time-lapse video recording, one can observe the entire family tree for that process, that is, the following sample $\mathcal{Z}_n^*=\{Z_l^{(1)}(0),Z_l^{(2)}(2),\Lambda_l:\ l=0,\ldots,n-1\}$, where
\begin{equation*}
Z_l^{(1)}(j)=\sum_{i=1}^{\phi_l(Z_l)} I_{\left\{X_{l,i}^{(1)}=j\right\}},\quad j=0,2;\qquad \Lambda_l=\sum_{i=1}^{\phi_l(Z_l)} I_{\left\{X^{(2)}_{l,i}=1\right\}}.
\end{equation*}

Note that $\phi_l(Z_l)=Z_l^{(1)}(0)+Z_l^{(1)}(2)+\Lambda_l$, $Z_l^{(1)}(j)$ is the number of cells of type $T_1$ in the $l$-th generation having exactly $j$ offspring of type $T_1$, $j=0,2$, and $\Lambda_l$ is the number of cells of type $T_1$ in the $l$th generation having exactly one offspring of type $T_2$. Our aim is to estimate the offspring distribution by using disparity measures, to provide HPD regions for the corresponding $D$-posterior densities and to estimate the probability that the production of precursor cells follows a supercritical probability distribution.

Using Markov's property and the independence between the emigration and reproduction phases, one can prove that the likelihood function  $f(\mathcal{Z}_n^*|p_0,p_1,\gamma)$ is given by
\begin{equation*}
f(\mathcal{Z}_n^*|p_0,p_1,\gamma)\propto p_{0}^{Y_{n-1}^{(1)}(0)} p_{1}^{Y_{n-1}^{(1)}(2)} (1-p_0-p_1)^{\Psi_{n-1}}\gamma^{\Delta_{n-1}} (1-\gamma)^{Y_{n-1}^{(1)}-\Delta_{n-1}},
\end{equation*}
where $Y_{n-1}^{(1)}(j)=\sum_{l=0}^{n-1} Z_l^{(1)}(j)$ is the total number of cells of type $T_1$ in the first $n$ generations having exactly $j$ offspring of type $T_1$, $j=0,2$, $\Psi_{n-1}=\sum_{l=0}^{n-1}\Lambda_l$ is the total number of cells of type $T_1$ in the first $n$ generations having exactly one offspring of type $T_2$, $\Delta_{n-1}=\sum_{l=0}^{n-1} \phi_l(Z_l)$ is the total number of observed progenitor cells of type $T_1$ in the first $n$ generations and $Y_{n-1}^{(1)}=\sum_{l=0}^{n-1} Z_l^{(1)}$ is the total number of individuals of type $T_1$ in the first $n$ generations. Moreover, observe that $\Delta_{n-1}=Y_{n-1}^{(1)}(0)+Y_{n-1}^{(1)}(2)+\Psi_{n-1}$. Thus, the MLEs of $p_0$, $p_1$, $p_2$ and $\gamma$ based on the sample $\mathcal{Z}_n^*$ are given, respectively, by
\begin{equation*}
\widehat{p}_{0,n}=\frac{Y_{n-1}^{(1)}(0)}{\Delta_{n-1}},\quad \widehat{p}_{1,n}=\frac{Y_{n-1}^{(1)}(2)}{\Delta_{n-1}},\quad \widehat{p}_{2,n}=\frac{\Psi_{n-1}}{\Delta_{n-1}},\quad \widehat{\gamma}=\frac{\Delta_{n-1}}{Y_{n-1}^{(1)}}.
\end{equation*}

Let write $p=\{p_0,p_1,p_2\}$, $\widehat{p}_n=\{\widehat{p}_{0,n},\widehat{p}_{1,n},\widehat{p}_{2,n}\}$,  $q=\{q_0,q_1,q_2\}$ representing a probability distribution on the state space $\{(0,0),(2,0),(0,1)\}$, $\pi(\cdot)$ a prior distribution on the space of the probability distributions defined on such a state space and  $\theta=(q_0,q_1)$. With analogous arguments as those in Section \ref{Dposterior:sec:method}, we define the $D$-posterior density function of $\theta=(q_0,q_1)$ as
\begin{equation}\label{Dposterior:eq:D-posterior-2type}
\pi_D^n(\theta|\hat{p}_n)=\frac{e^{-\Delta_{n-1} D(\hat{p}_n,\theta)}\pi(\theta)}{\int_\Theta e^{-\Delta_{n-1} D(\hat{p}_n,\theta)}\pi(\theta)d\theta},
\end{equation}
where $\int_\Theta e^{-\Delta_{n-1} D(\hat{p}_n,\theta)}\pi(\theta)d\theta=\int_\Theta e^{-\Delta_{n-1} D(\hat{p}_n,q)}\pi(q)dq_0 dq_1$ and $\Theta=\{(x_0,x_1)\in (0,1)\times (0,1):x_0+x_1\leq 1\}$. Note that in this case, the parametric family which we consider is $\mathcal{F}_\Theta=\{\{q_0,q_1,1-q_0-q_1\}: (q_0,q_1)\in\Theta\}$. As a consequence, we propose as EDAP estimators of $\theta_0=(p_0,p_1)$ the following ones:
\begin{equation}\label{Dposterior:eq-EDAP-oligo-sim}
(p_{0,n}^{*D},p_{1,n}^{*D})=\left(\int_0^1 q_0 \pi_D^n(q_0|\hat{p}_n)dq_0,\int_0^1 q_1 \pi_D^n(q_1|\hat{p}_n)dq_1\right),
\end{equation}
and as MDAP estimors:
\begin{equation*}
(p_{0,n}^{+D},p_{1,n}^{+D})=\arg\max_{\theta\in\Theta}\pi_D^n(\theta|\hat{p}_n),
\end{equation*}
where $\pi_D^n(\theta|\hat{p}_n)$ was introduced in \eqref{Dposterior:eq:D-posterior-2type}, and $\pi_D^n(q_0|\hat{p}_n)$ and $\pi_D^n(q_1|\hat{p}_n)$ are the marginal density functions of $q_0$ and $q_1$, respectively.

The data that we consider are provided from two experiments developed under two different experimental conditions: in the first one, the cells were cultured in a vehicle solution without agent stimulating differentiation, whereas in the second experiment, they were cultured with a specific hormone promoting oligodendrocyte generation. The first experiment started with 34 precursor cells cultured in a vehicle solution (with no agent promoting oligodendrocyte generation) and whose evolution was observed until the generation $n=7$. On the other hand, the initial number of cells of type $T_1$ in experiment 2 was 30, which were cultured in solution with a specific hormone promoting oligodendrocyte generation and whose branching tree was observed until the generation $n=5$. Based on the observed branching trees, we computed the non-parametric MLE estimators for the parameters related to the reproduction and the emigration processes, which are given in Table \ref{Dposterior:table:ex-yanev-MLE}.

\begin{table}[H]
\centering\begin{tabular}{|c|ccccccc|}
\cline {2-8}
\multicolumn{1}{c|}{ } & $n$ & $N$ &  $Y_{n}^{(1)}$ & $\Delta_n$ & $Y_{n-1}^{(1)}(0)$ & $Y_{n-1}^{(1)}(2)$ & $\Psi_n$ \\
\hline
Experiment 1 & 7 & 34 & $425$ & $410$ & $158$ & $201$ & $51$ \\
Experiment 2 & 5 & 30 & $276$ & $269$ & $37$ & $133$ & $99$ \\
\hline
\end{tabular}
 \caption{Observed branching tree for each experiment.} \label{Dposterior:table:ex-yanev-data}
\end{table}

\begin{table}[H]
\centering\begin{tabular}{|c|ccccc|}
\cline{2-6}
\multicolumn{1}{c|}{ } & $n$ & $\widehat{p}_{0,n}$ & $\widehat{p}_{1,n}$ & $\widehat{p}_{2,n}$ & $\widehat{\gamma}_n$ \\
\hline
Experiment 1 & 7 & $0.3854$ & $0.4902$ & $0.1244$ & $0.9647$ \\
Experiment 2 & 5 & $0.1375$ & $0.4944$ & $0.3680$ & $0.9746$ \\
\hline
\end{tabular}
\caption{MLE for the main parameters of the model.} \label{Dposterior:table:ex-yanev-MLE}
\end{table}

Using the MLE of $p$, $\hat{p}_n=\{\widehat{p}_{0,n},\widehat{p}_{1,n},\widehat{p}_{2,n}\}$, for each experiment, we determined the $D$-posterior density functions of $(q_0,q_1)$ at $\hat{p}_n$ considering the Hellinger distance, the negative exponential disparity and the Kullback-Leibler divergence. In all the cases, we consider the Dirichlet distribution with parameter $(1/2,1/2,1/2)$ as the prior distribution due to the fact that no prior knowledge about the offspring distribution - apart from its support - is available (see \cite{Berger-Bernardo-1992}). The contour plots of these $D$-posterior density functions for experiments 1 and 2 are represented in Figures \ref{Dposterior:fig:contour1} and \ref{Dposterior:fig:contour2}, respectively, together with the 95\% HPD regions for $(q_0,q_1)$. We also calculated the EDAP and MDAP estimators of the parameters $p_0$ and $p_1$ for the aforementioned disparity measures, which are presented in Tables \ref{Dposterior:table:ex-yanev-exp-1} and \ref{Dposterior:table:ex-yanev-exp-2} for the first and the second experiment, respectively. Note that, these estimates are close to the non-parametric estimators in Table \ref{Dposterior:table:ex-yanev-MLE}. Moreover, although the results are not presented, it is important to mention that a sensitivity analysis was performed in order to examine the influence of the parameter of the Dirichlet distribution on the estimates; however, no significant difference is observed on the estimates due to the change of such a parameter (see Supplementary material for further details). The analogous behaviour of the three estimators suggests that the data have not significant (if any) contamination and that our HD and NED estimators are as good as the one based on the KL divergence in absence of contamination.

\begin{figure}[H]
\centering\includegraphics[width=0.3\textwidth]{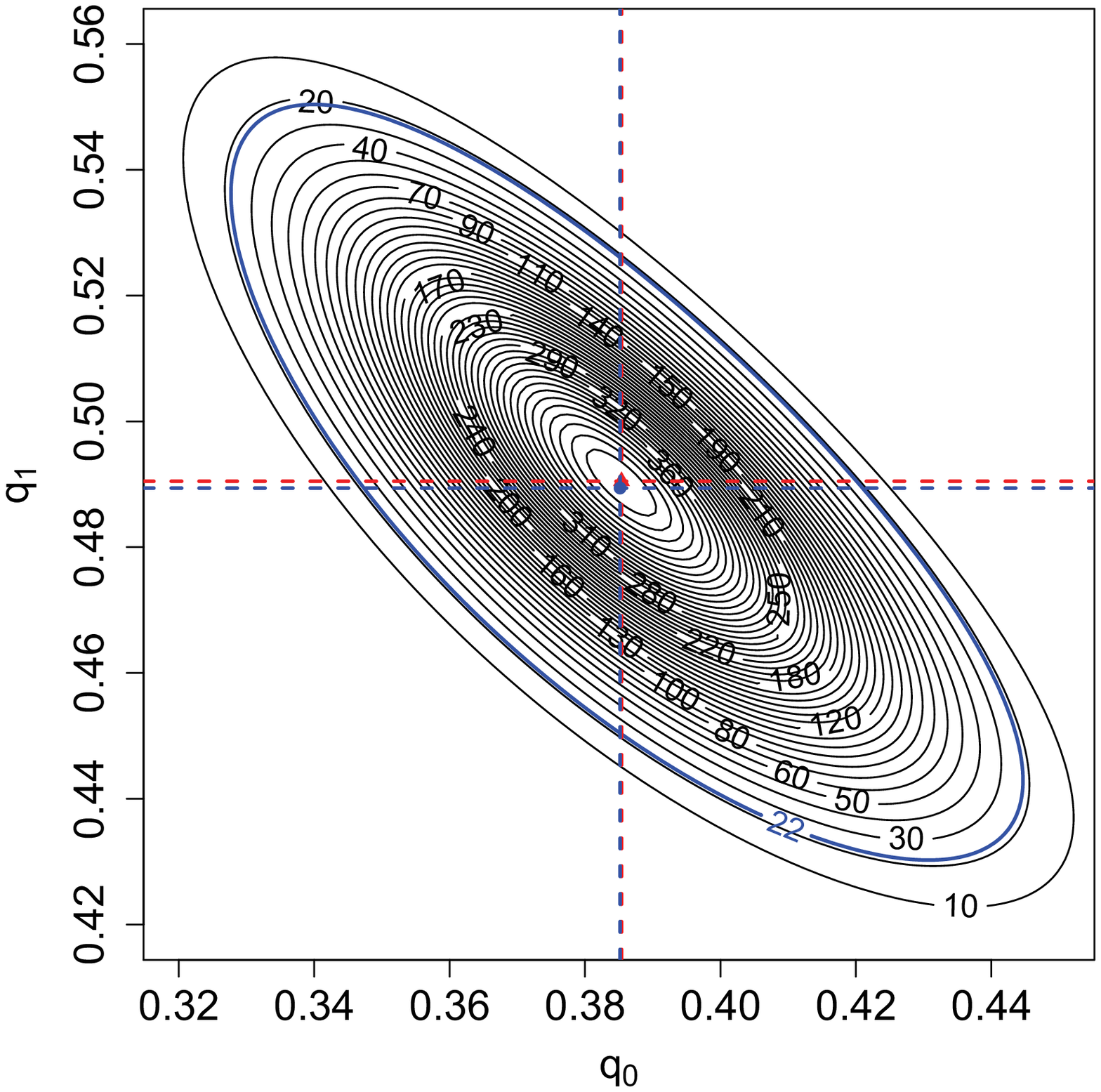}\hspace*{0.03\textwidth}
\includegraphics[width=0.3\textwidth]{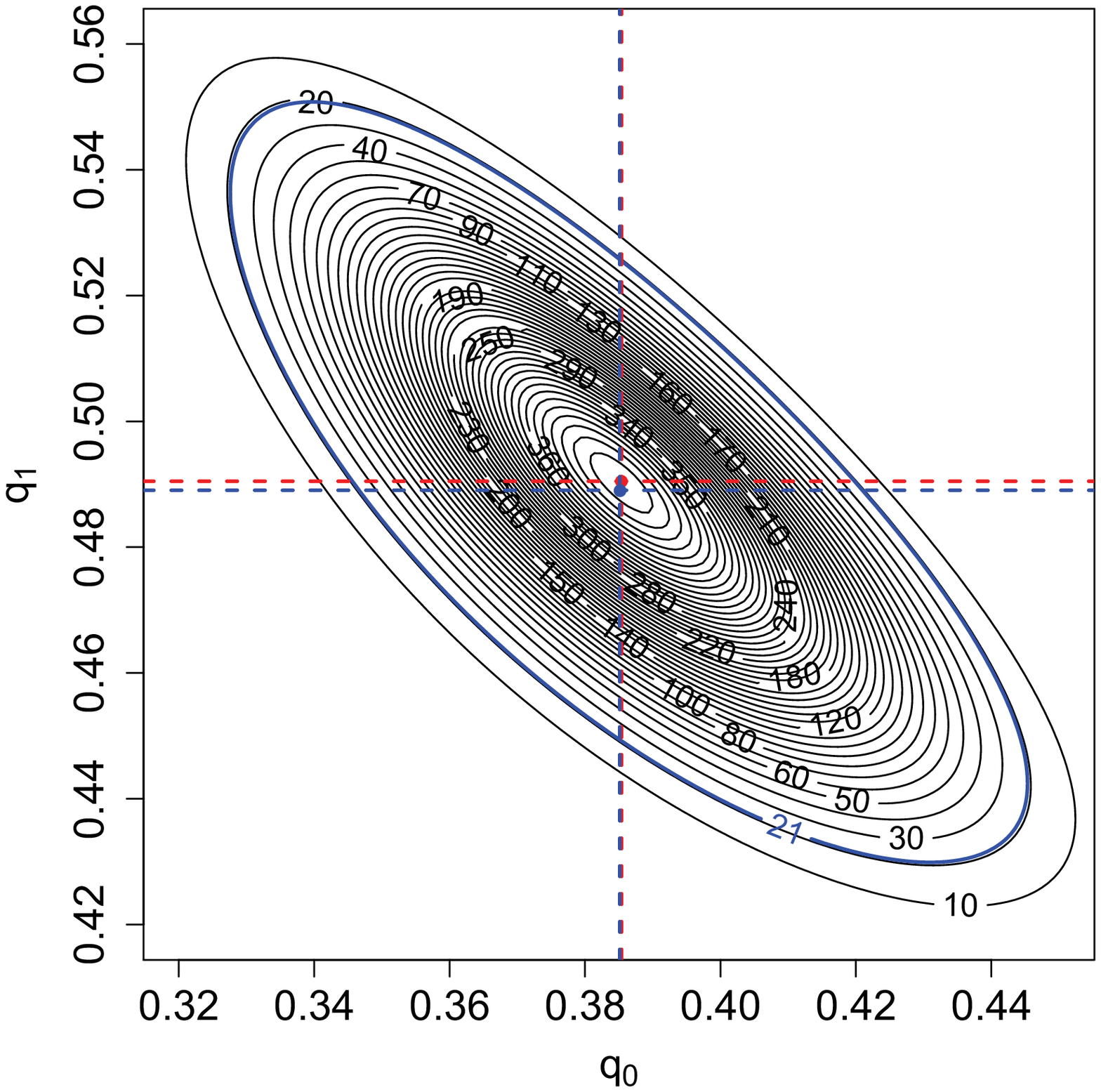}\hspace*{0.03\textwidth}
\includegraphics[width=0.3\textwidth]{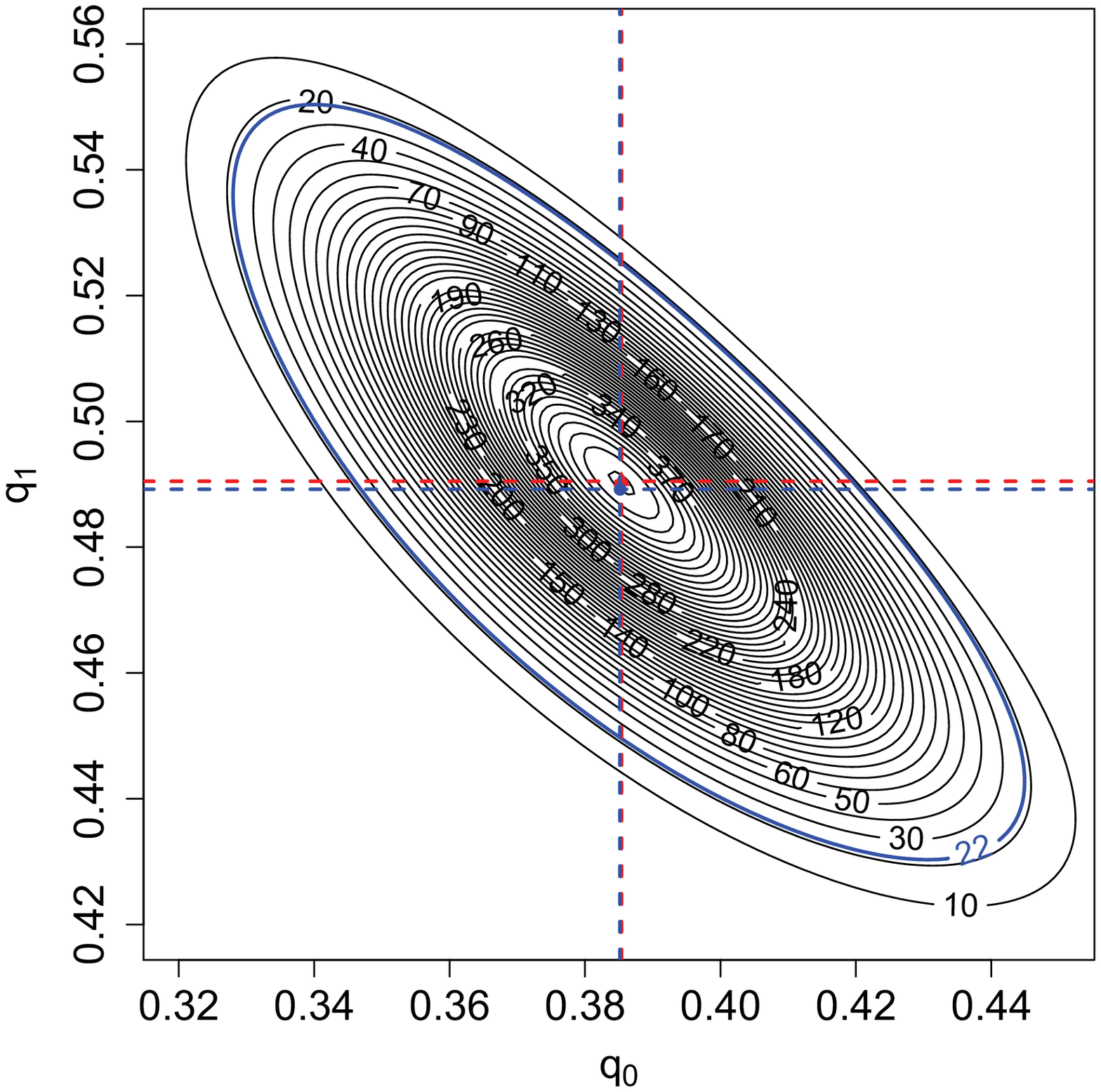}\caption{Contour plot for the $D$-posterior density of $(q_0,q_1)$ of the first experiment, together with the EDAP estimate (intersection of blue dashed lines) and the MDAP estimate (intersection of red dashed lines). Blue contour line represents the 95\% HPD region. Left: HD. Centre: NED. Right: KL.}\label{Dposterior:fig:contour1}
\end{figure}

\begin{figure}[H]
\centering\includegraphics[width=0.3\textwidth]{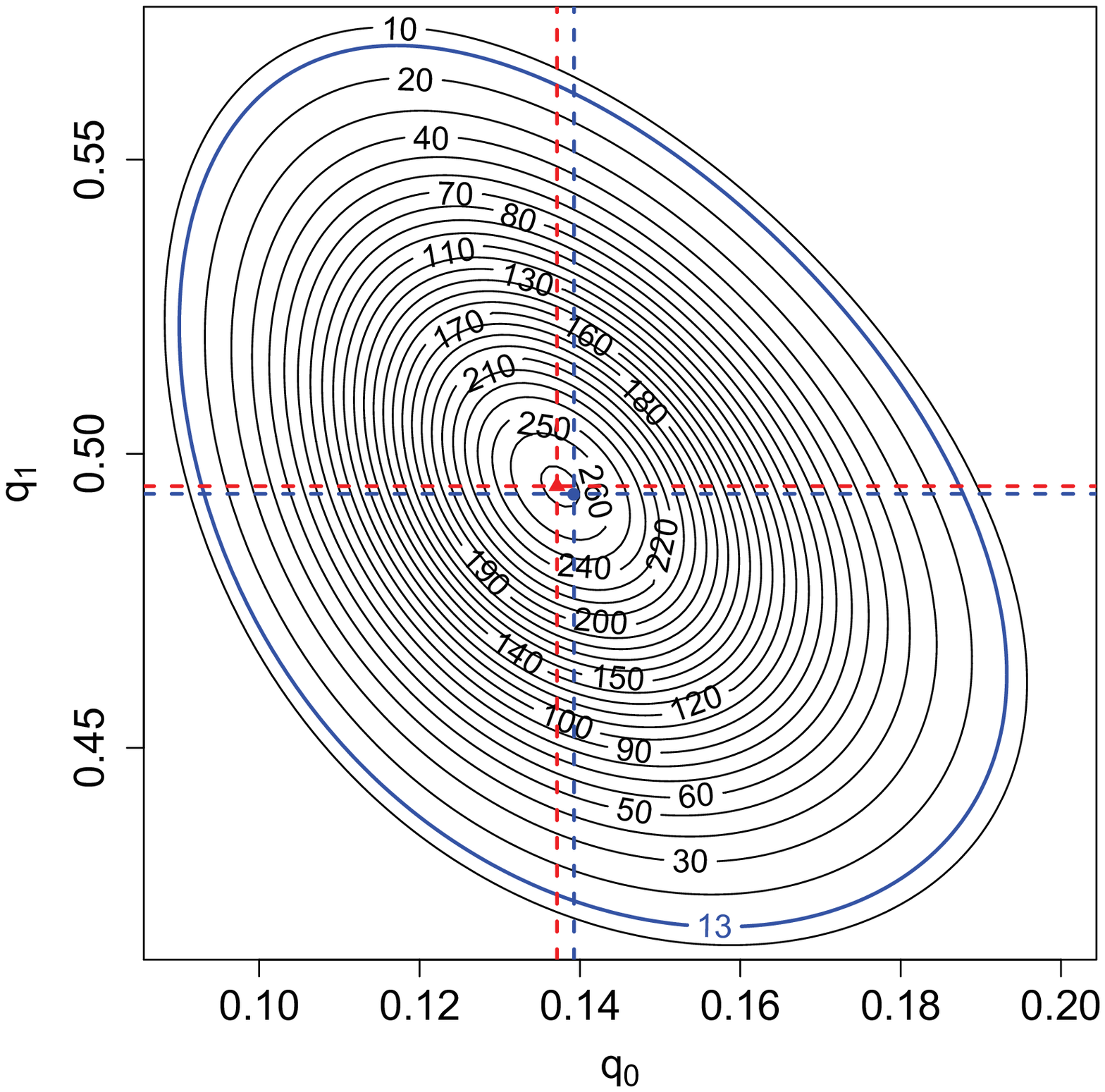}\hspace*{0.03\textwidth}
\includegraphics[width=0.3\textwidth]{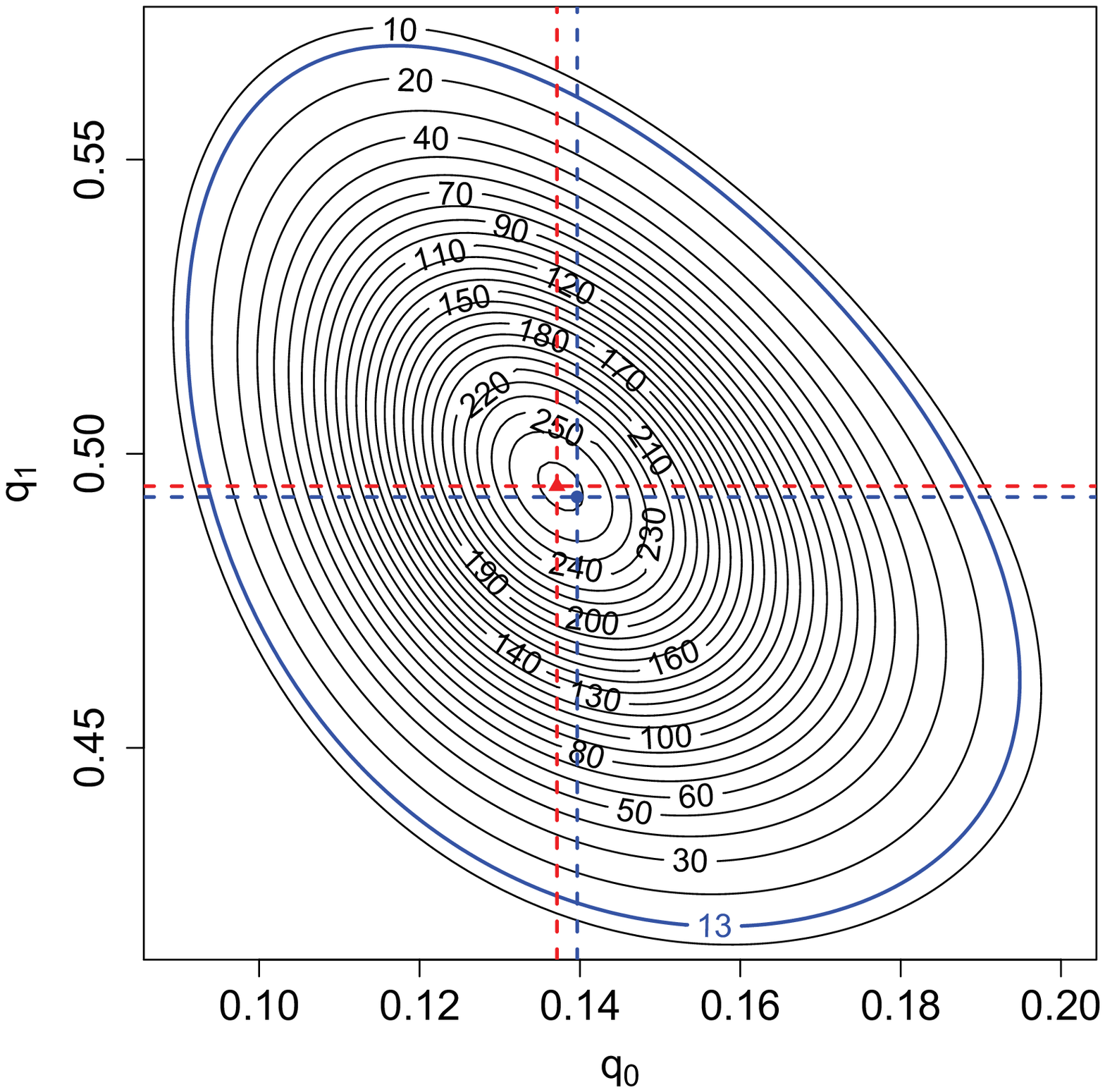}\hspace*{0.03\textwidth}
\includegraphics[width=0.3\textwidth]{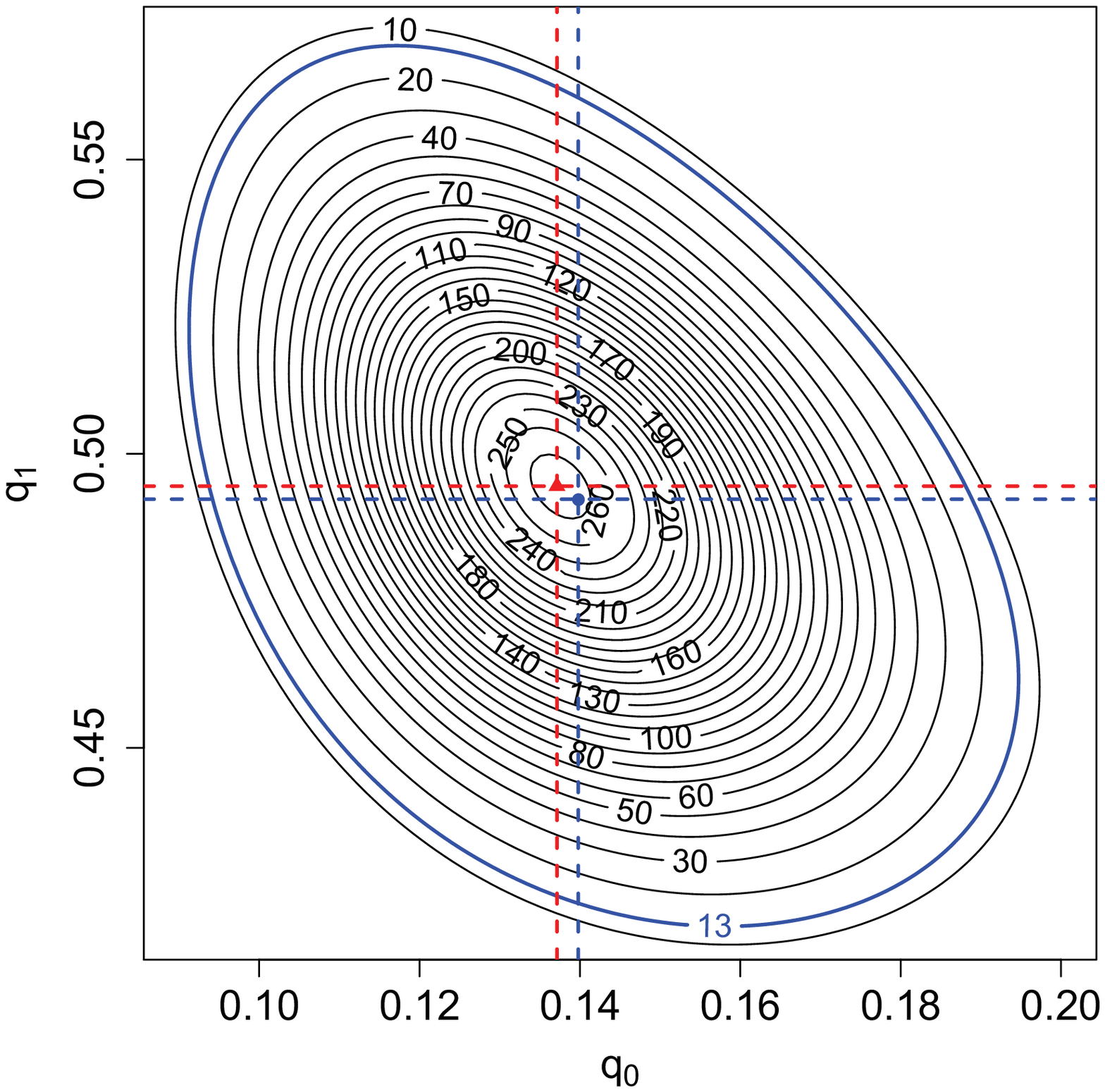}\caption{Contour plot for the $D$-posterior density of $(q_0,q_1)$ of the second experiment, together with the EDAP estimate (intersection of blue dashed lines) and the MDAP estimate (intersection of red dashed lines). Blue contour line represents the 95\% HPD region. Left: HD. Centre: NED. Right: KL.}\label{Dposterior:fig:contour2}
\end{figure}

\begin{table}[H]
\centering\begin{tabular}{|c|cc|cc|cc|}
\cline{2-7}
\multicolumn{1}{c|}{ }& \multicolumn{2}{|c|}{HD}& \multicolumn{2}{|c|}{NED}& \multicolumn{2}{|c|}{KL}\\
\hline
Parameter & EDAP & MDAP & EDAP & MDAP & EDAP & MDAP \\
\hline
$p_0$ & $0.3858$ & $0.3850$ & $0.3852$ & $0.3850$ & $0.3853$ & $0.3850$ \\
$p_1$ & $0.4895$ & $0.4910$ & $0.4897$ & $0.4910$ & $0.4893$ & $0.4910$ \\
$p_2$ & $0.1247$ & $0.1240$ & $0.1251$ & $0.1240$ & $0.1254$ & $0.1240$ \\
\hline
\end{tabular}
 \caption{Estimates of $p_0$, $p_1$ and $p_2$ based on the EDAP and MDAP estimators for HD, NED and KL in the first experiment.} \label{Dposterior:table:ex-yanev-exp-1}
\end{table}

\begin{table}[H]
\centering\begin{tabular}{|c|cc|cc|cc|}
\cline{2-7}
\multicolumn{1}{c|}{ }& \multicolumn{2}{|c|}{HD}& \multicolumn{2}{|c|}{NED}& \multicolumn{2}{|c|}{KL}\\
\hline
Parameter & EDAP & MDAP & EDAP & MDAP & EDAP & MDAP \\
\hline
$p_0$ & $0.1380$ & $0.1360$ & $0.1388$ & $0.1360$ & $0.1386$ & $0.1360$ \\
$p_1$ & $0.4939$ & $0.4960$ & $0.4935$ & $0.4960$ & $0.4934$ & $0.4960$ \\
$p_2$ & $0.3680$ & $0.3680$ & $0.3677$ & $0.3680$ & $0.3681$ & $0.3680$ \\
\hline
\end{tabular}
 \caption{Estimates of $p_0$, $p_1$ and $p_2$ based on the EDAP and MDAP estimators for HD, NED and KL in the second experiment.} \label{Dposterior:table:ex-yanev-exp-2}
\end{table}

An interesting issue to tackle in this Bayesian framework is to determine the probability of having a supercritical probability distribution governing the production of precursor cells. This is an important problem since if the mean of the variable $X^{(1)}_{0,1}$, $m=2p_1$, is less than 1, then the population of cells of type $T_1$ will become extinct with probability 1, so does the population of type $T_2$ cells. Using the approximation of the $D$-posterior distribution at $\hat{p}_n$ in both experiments, we present an estimation of the probability that $m>1$, with respect to the $D$-posterior distribution at $\hat{p}_n$, in Table \ref{Dposterior:table:ex-yanev-criticality}. Note that the aforementioned probability is greater in the second experiment and this could mean that the solution with the hormone is effective at promoting cell reproduction. Indeed, by comparing Tables \ref{Dposterior:table:ex-yanev-exp-1} and \ref{Dposterior:table:ex-yanev-exp-2}, one also sees that the probability that a cell of type $T_1$ dies with no offspring is smaller in the second experiment whereas the probability of differentiating into a type $T_2$ cell is greater.

\begin{table}[H]
\centering
\begin{tabular}{|c|cc|}
\cline{2-3}
\multicolumn{1}{c|}{ } & Experiment 1 & Experiment 2 \\
\hline
HD & $0.3424$ & $0.4234$ \\
NED & $0.3382$ & $0.4177$ \\
KL & $0.3380$ & $0.4174$ \\
\hline
\end{tabular}
 \caption{Probability that $m>1$, with respect to the $D$-posterior distribution at $\hat{p}_n$, for HD, NED, and KL.} \label{Dposterior:table:ex-yanev-criticality}
\end{table}

\begin{remark}
With regard to computational purposes, the integrals involved in this example have been approximated by Monte Carlo methods.
\end{remark}

\subsection{Simulated example}\label{Dposterior:ex:sim}

This example is a continuation of Example \ref{Dposterior:ex:LD} introduced in Section \ref{Dposterior:sec:method} and its purpose is to show the behaviour of EDAP and MDAP estimators in presence of contamination. Recall that the process starts with $Z_0=1$ individual, the offspring distribution is a geometric distribution with parameter $\theta_0=0.3$ and which is affected by outliers, as described in Example \ref{Dposterior:ex:LD}, and the control variables $\phi_n(k)$ follow Poisson distributions with mean $\lambda k$, for each $k\in\N_0$ and each $n\in\N_0$. It is important to mention that the geometric distribution is a natural offspring distribution to use in the context of branching processes. For instance, this distribution is proposed as reproduction law for the GWP resulting from embedding the continuous time Markov branching process applied to model data from a yeast cell colony in \cite{Guttorp}, p.158. Moreover, branching processes with a geometric offspring distribution have been used in other fields as, for example, Physics (e.g. see the recent paper \cite{corral-garcia-font-2016}). Observe that we consider an extreme contamination framework; indeed, in this process, each progenitor has exactly 11 offspring with probability 0.15 and, with probability 0.85, it has offspring according to the aforementioned geometric distribution. From a practical viewpoint, this might seem to be too much extreme example, however, its choice is motivated by the fact that it allows to illustrate appropriately the accuracy of the method in a contamination context.

In order to approximate the $D$-posterior density functions based on the Hellinger distance (HD) and the negative exponential disparity (NED), one can consider different approaches. One of them is to obtain a sample of such a density function by applying the Metropolis-Hastings algorithm, and use this sample to estimate the corresponding density function (this approach was used in \cite{Hooker-Vidyshankar-2014}). However, since an expression of the $D$-posterior density function is available (see \eqref{Dposterior:eq:D-posterior-tree}), we have opted for estimating both the $D$-posterior density function and the EDAP and MDAP estimators by using numerical methods. The reason for the choice of the latter approach is that it is computationally quite faster than the former; this fact is especially remarkable when considering a disparity measure different from the KL disparity since in that case it is not possible to use a conjugate families of distributions (see \cite{Hooker-Vidyshankar-2014}).

First, in Figure \ref{Dposterior:fig:estim-D-posterior} we show the estimates of the HD-posterior density and NED-posterior density upon the sample $z_{45}^*$ and a beta distribution with parameters $1/2$ and $1/2$ as a  prior distribution. Recall that the choice of this prior distribution is motivated by the fact we are not assuming any prior knowledge about the offspring parameter. One can observe that both functions are approximately centred around the true value of the offspring parameter and the EDAP and MDAP estimators based on both of them provide accurate estimates of this parameter: $\theta_{45}^{*HD}=0.2962$ and $\theta_{45}^{+HD}=0.2953$, in the case of HD, and $\theta_{45}^{*NED}=0.2953$ and $\theta_{45}^{+NED}=0.2940$, for the NED.

\begin{figure}[H]
\centering\includegraphics[width=0.35\textwidth]{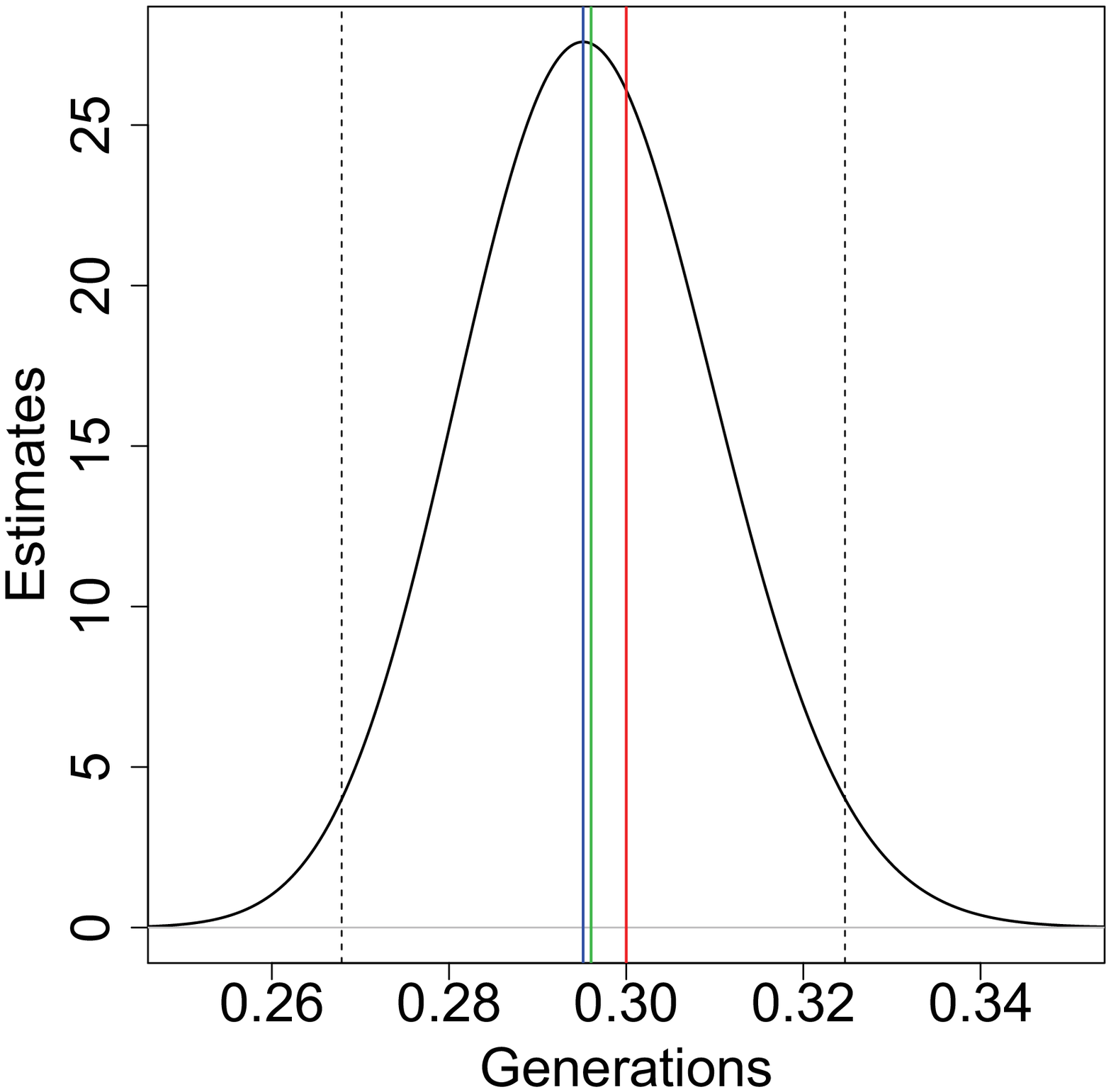}\hspace*{0.03\textwidth}
\includegraphics[width=0.35\textwidth]{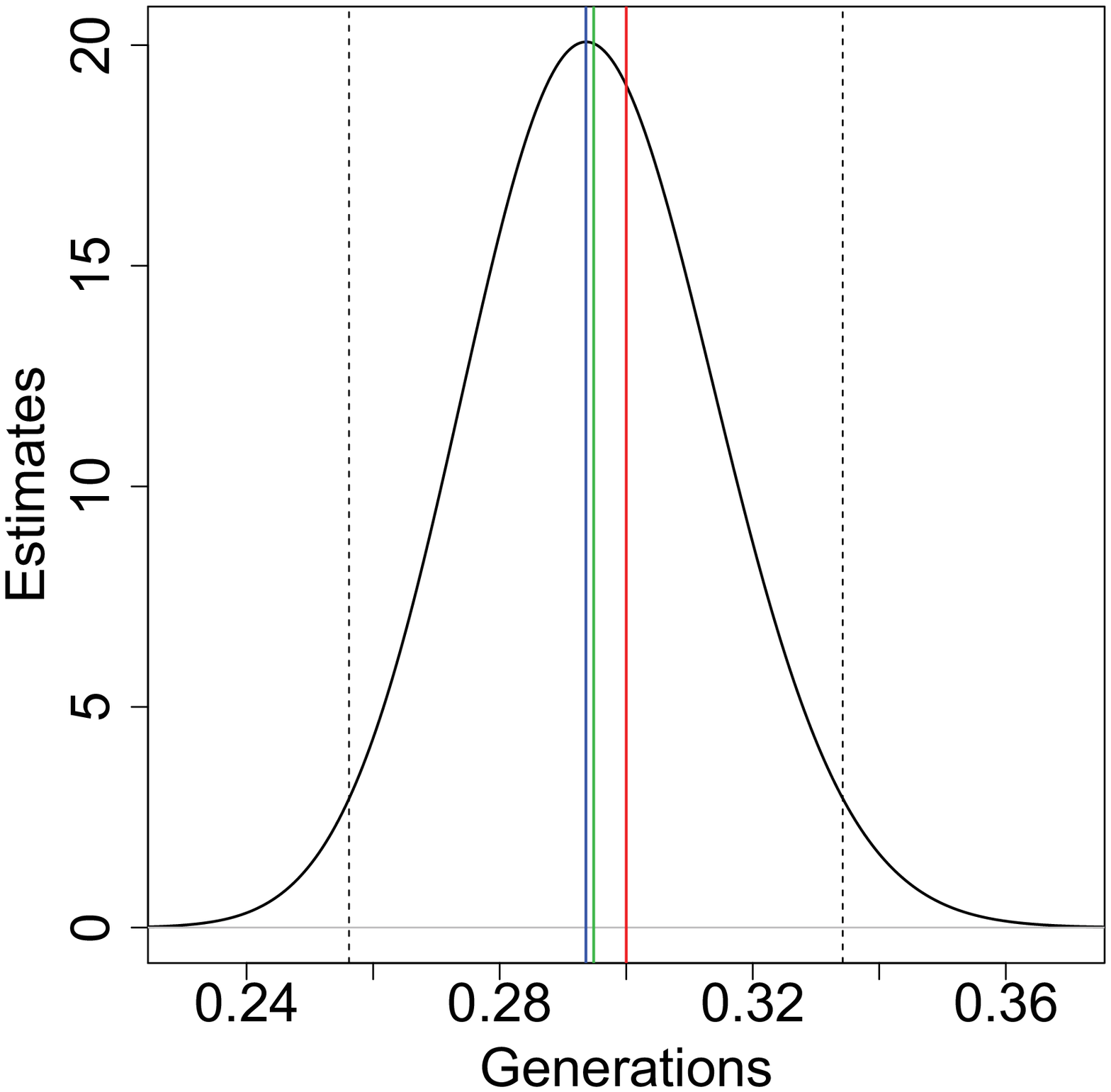}
\caption{Estimate of the D-posterior density of $\theta$ given the sample $z_{45}^*$, together with EDAP (blue line) and MDAP (green line), HPD interval (dashed line) and true value of $\theta_0$ (red line). Left: HD. Right: NED.}\label{Dposterior:fig:estim-D-posterior}
\end{figure}

The strong consistency of EDAP and MDAP estimates based on HD and NED are illustrated in Figure \ref{Dposterior:fig:consistency-EDAP} and \ref{Dposterior:fig:consistency-MDAP}, respectively. The evolution of HPD intervals are also plotted. These estimates have shown to be accurate in contrast to those plotted in Figure \ref{Dposterior:fig:cont-pop-post-KL} (right).

\begin{figure}[H]
\centering\includegraphics[width=0.35\textwidth]{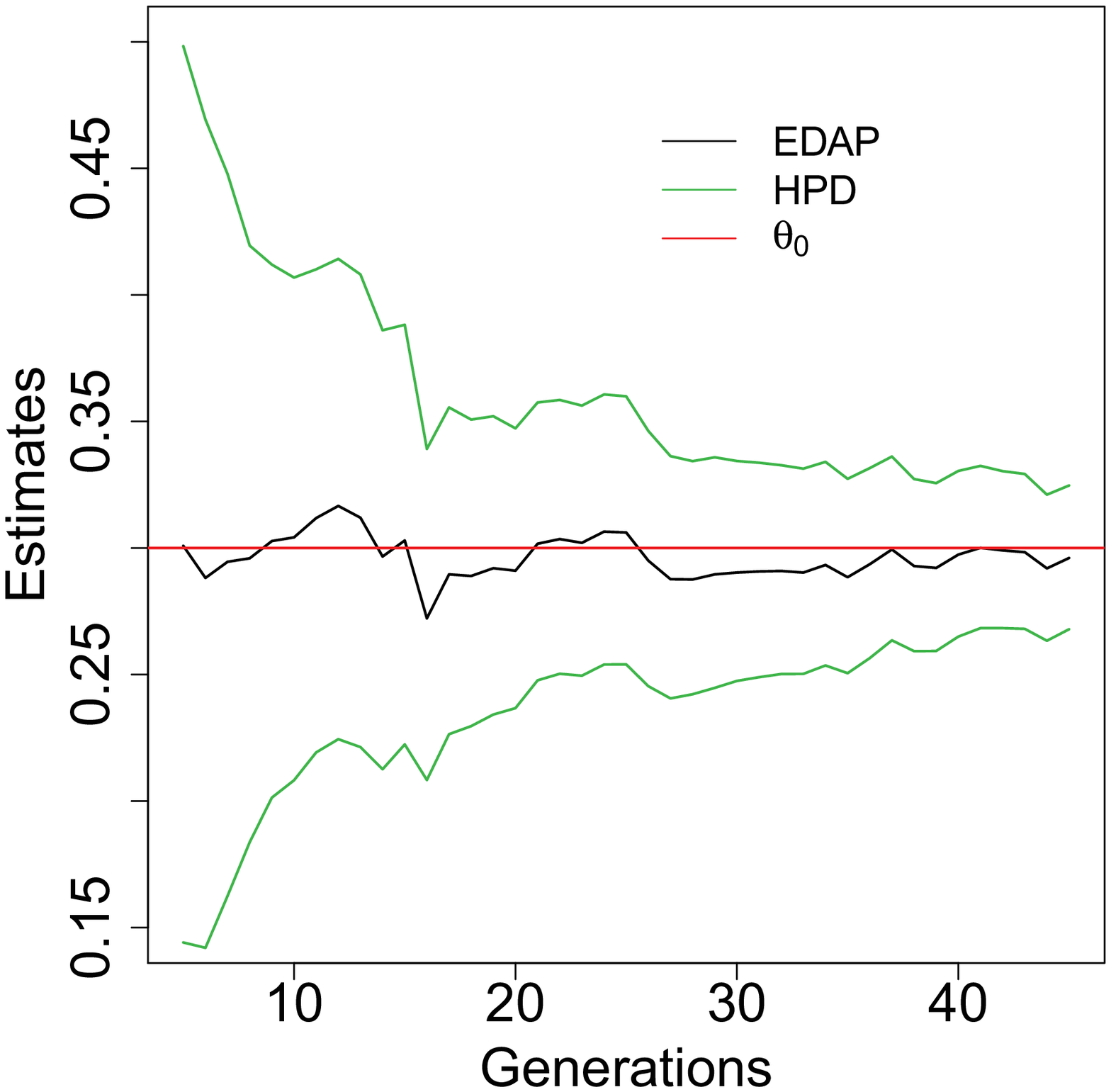}\hspace*{0.03\textwidth}
\includegraphics[width=0.35\textwidth]{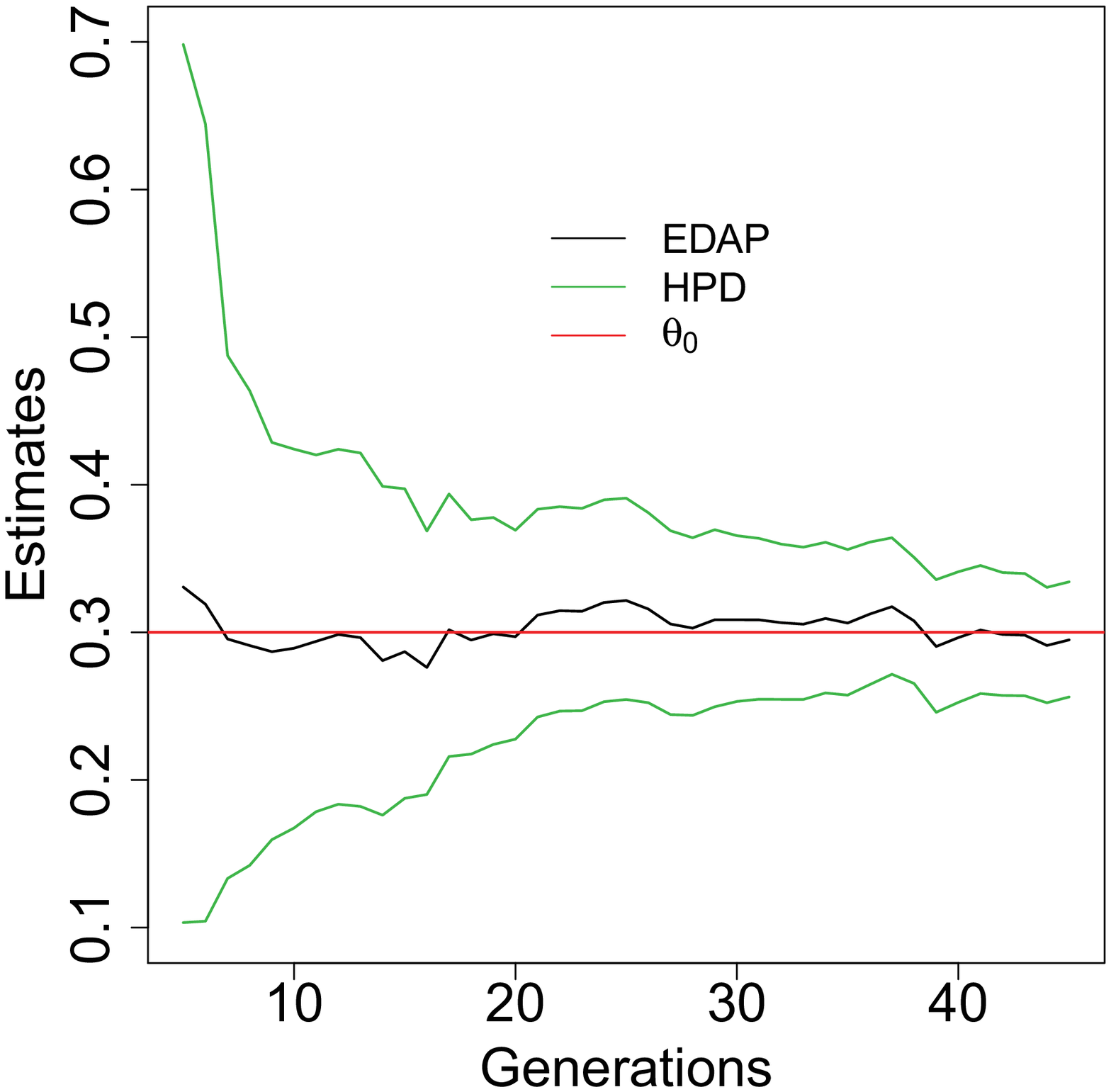}
\caption{Evolution of EDAP estimates (black line) and of the HPD intervals (green lines). Left: HD. Right: NED.  Horizontal red lines represent the true value of $\theta_0$.}\label{Dposterior:fig:consistency-EDAP}
\end{figure}

\begin{figure}[H]
\centering\includegraphics[width=0.35\textwidth]{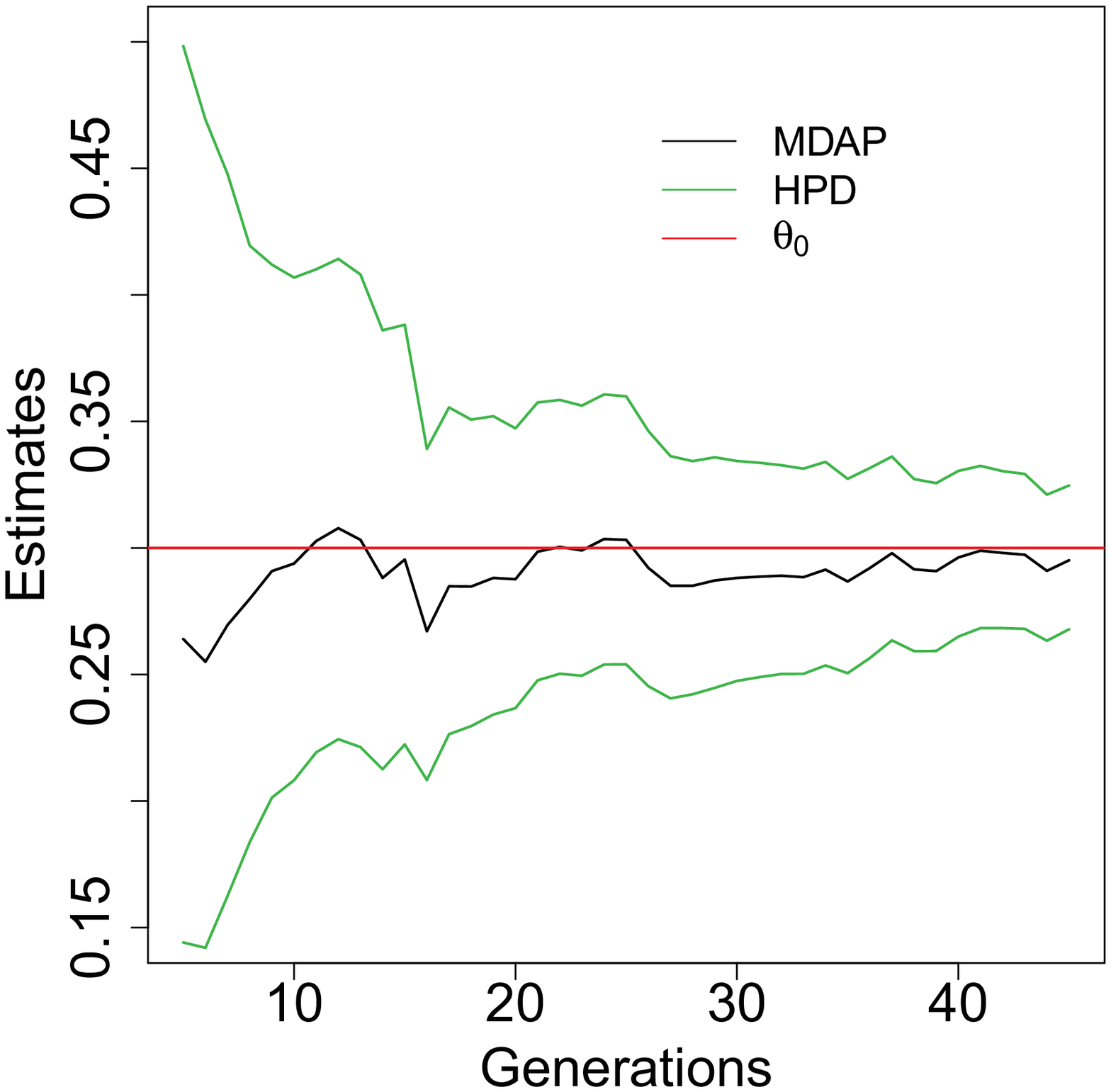}\hspace*{0.03\textwidth}
\includegraphics[width=0.35\textwidth]{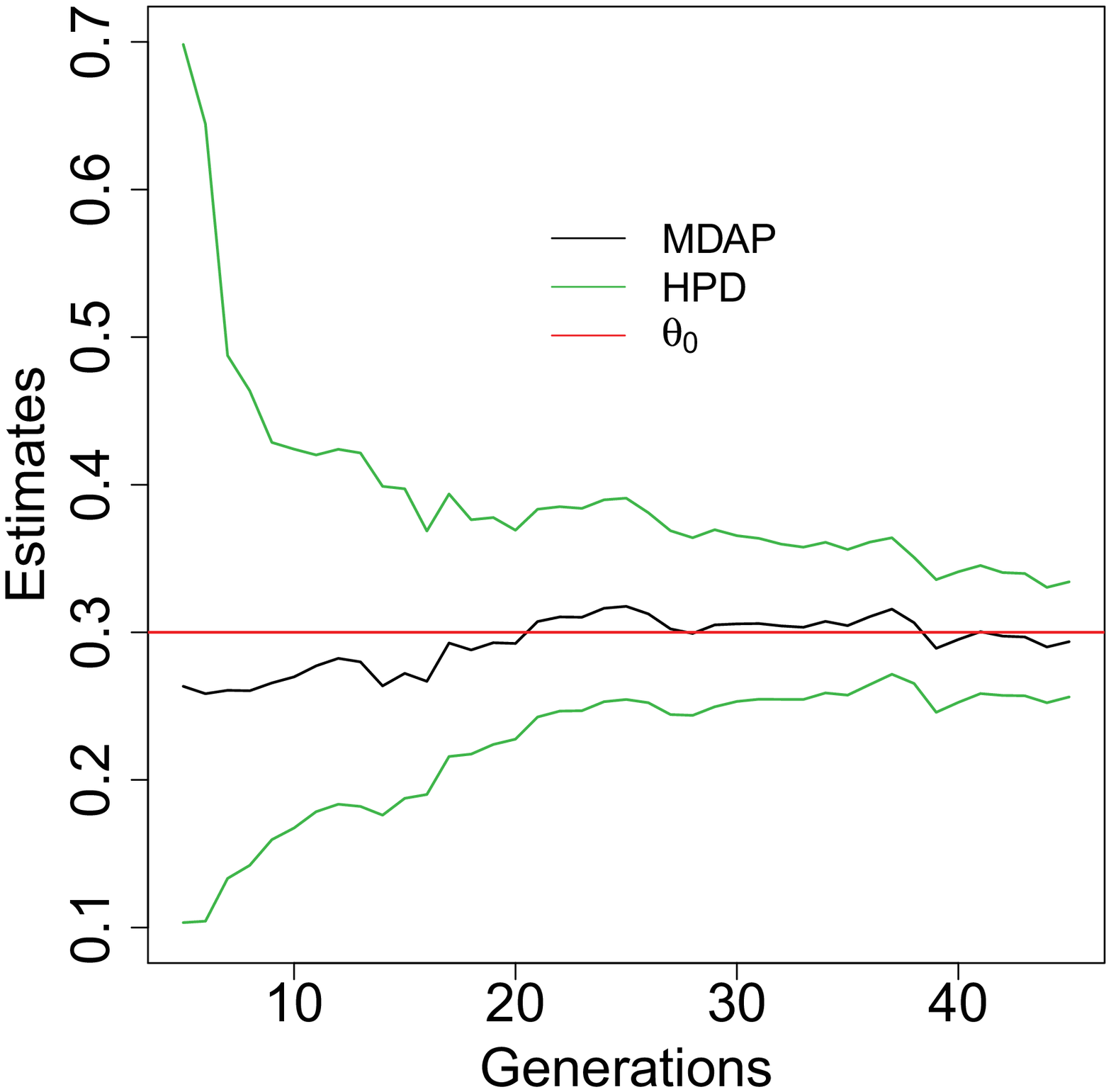}
\caption{Evolution of MDAP estimates (black line) and of the HPD intervals (green lines). Left: HD. Right: NED.  Horizontal red lines represent the true value of $\theta_0$.}\label{Dposterior:fig:consistency-MDAP}
\end{figure}

Taking into account that the offspring mean and variance can be written as continuous functions of the offspring parameter, we have used the EDAP and MDAP estimates of the offspring parameter to obtain estimates of them. The evolution of $m(\theta_n^{*D})$, and $\sigma^2(\theta_n^{*D})$, for $n=5,\ldots,45$, for $D\in\{HD,NED,KL\}$, are shown in Figure \ref{Dposterior:fig:evol-estim-m-sigma}, whereas $m(\theta_n^{+D})$, and $\sigma^2(\theta_n^{+D})$, for $n=5,\ldots,45$, for $D\in\{HD,NED,KL\}$, are shown in Figure \ref{Dposterior:fig:evol-estim-m-sigma-mdap}. Moreover, note that due to the continuity, these estimators prove to be strongly consistent estimators for the corresponding parameter.

\begin{figure}[H]
\centering\includegraphics[width=0.35\textwidth]{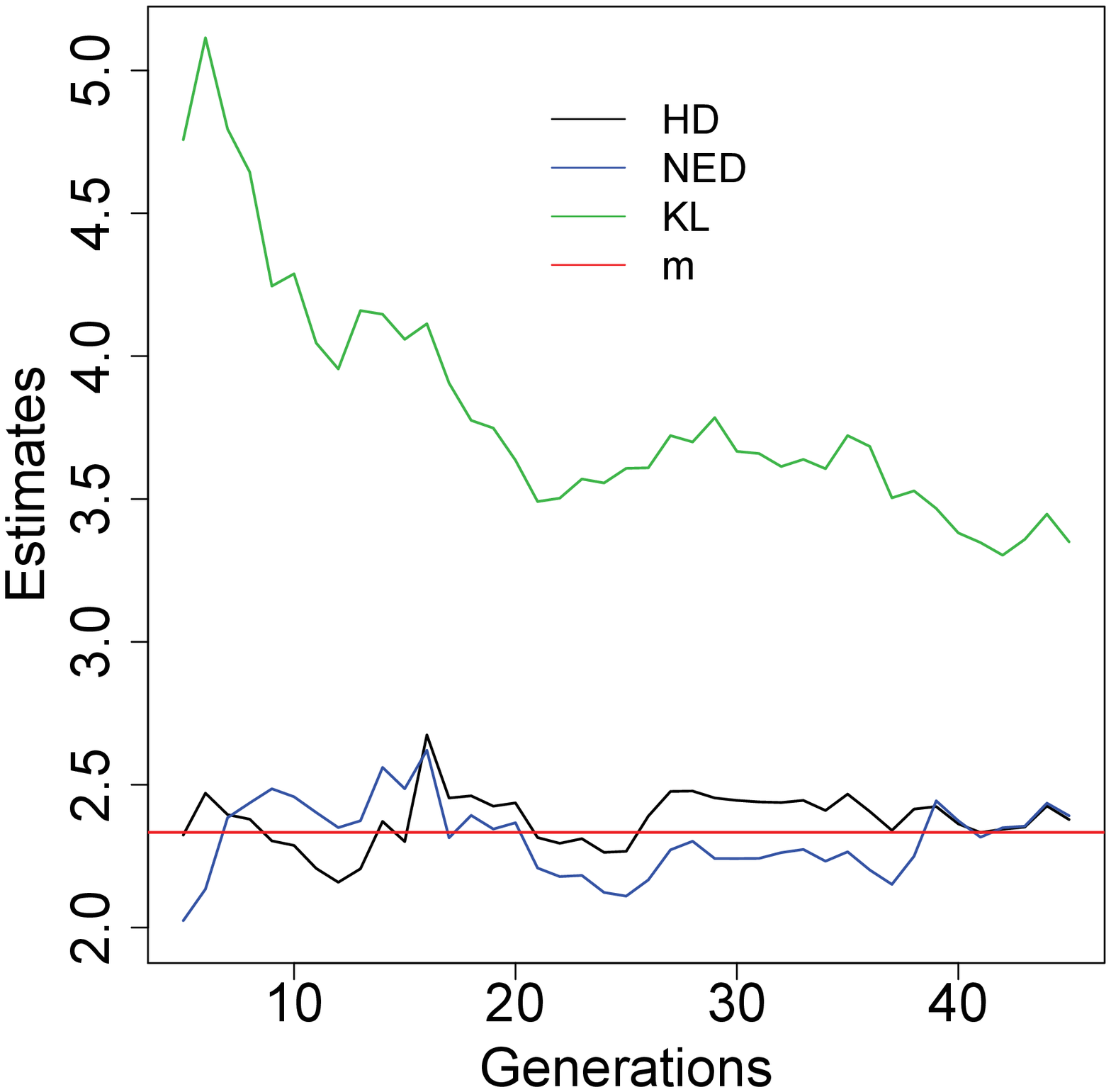}\hspace*{0.03\textwidth}
\includegraphics[width=0.35\textwidth]{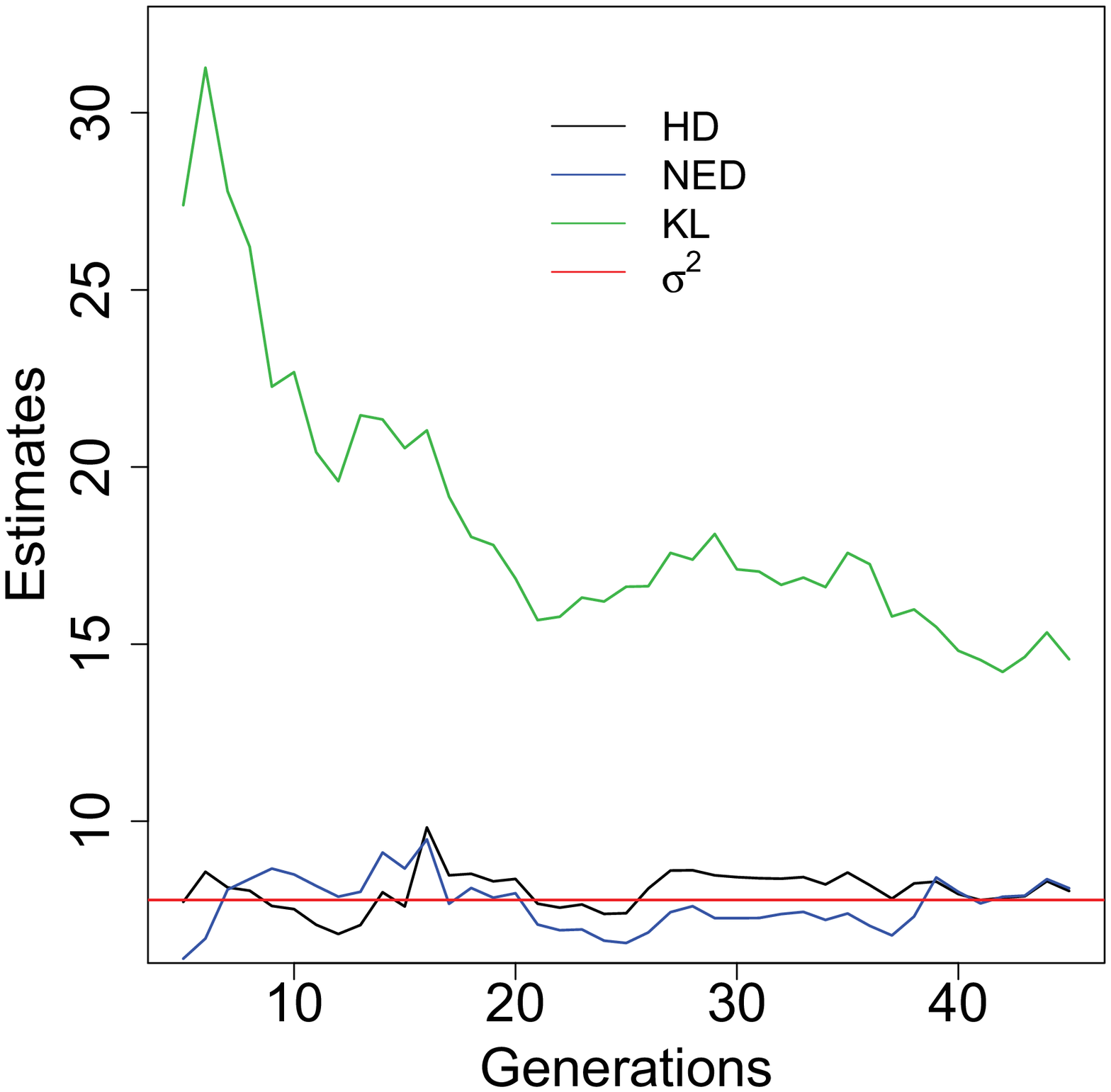}
\caption{Left: evolution of the estimates of $m$ based on the EDAP estimates  of $\theta_0$  for the different disparities. Right: evolution of the estimates of $\sigma^2$ based on the EDAP estimates  of $\theta_0$  for the different disparities. Horizontal red lines represent the true value of the corresponding parameter.}\label{Dposterior:fig:evol-estim-m-sigma}
\end{figure}

\begin{figure}[H]
\centering\includegraphics[width=0.35\textwidth]{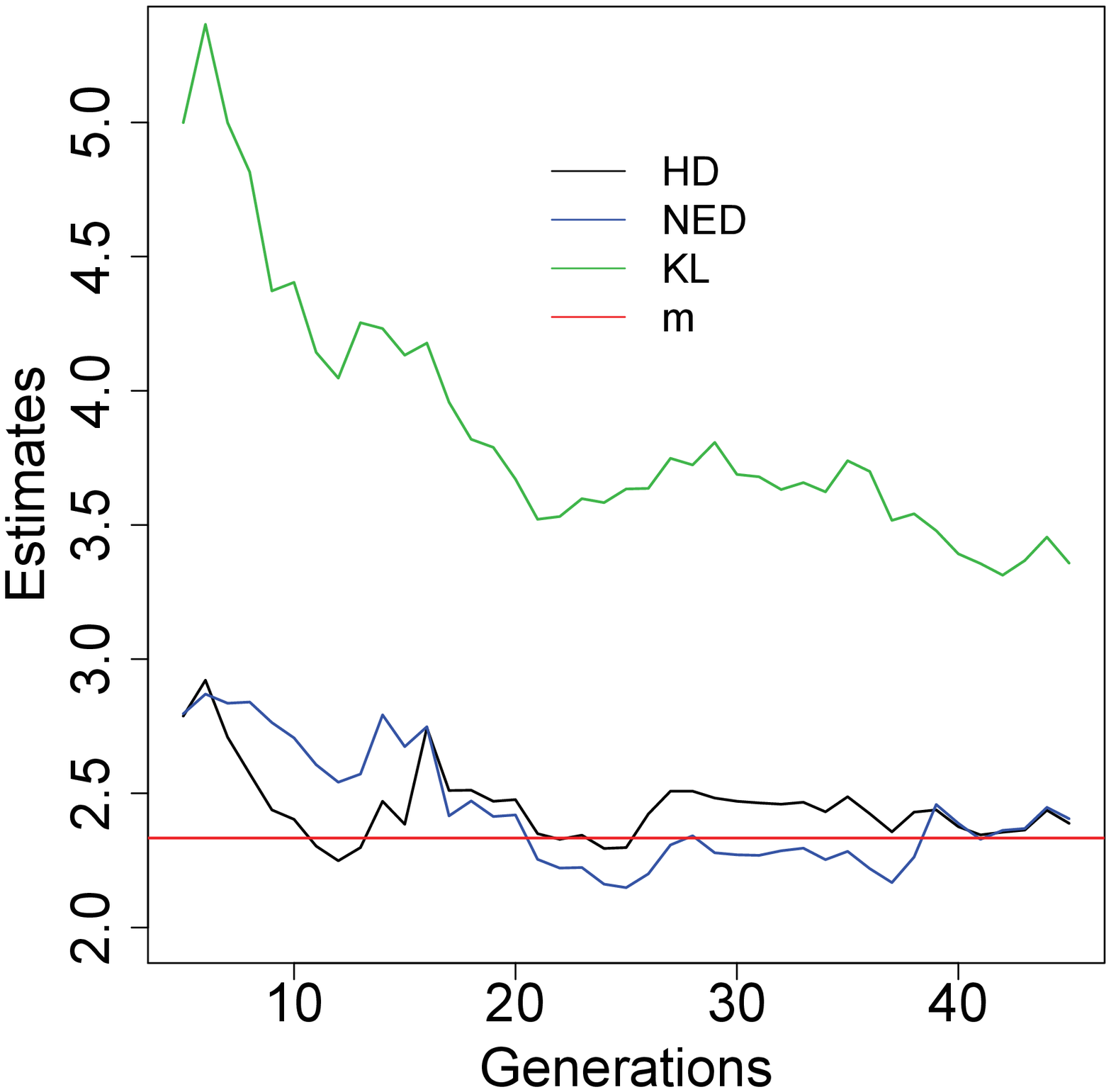}\hspace*{0.03\textwidth}
\includegraphics[width=0.35\textwidth]{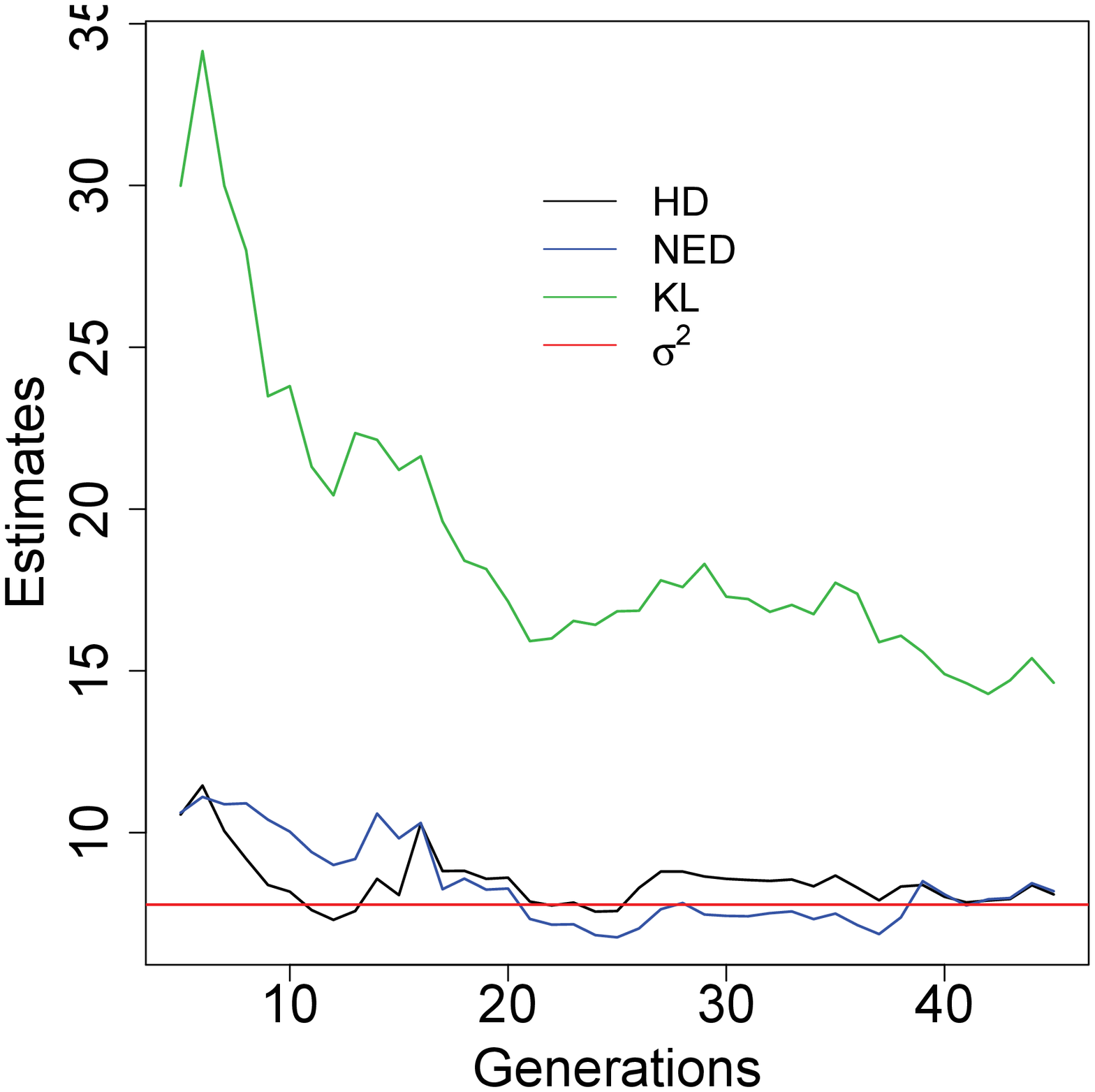}
\caption{Left: evolution of the estimates of $m$ based on the MDAP estimates  of $\theta_0$  for the different disparities. Right: evolution of the estimates of $\sigma^2$ based on the MDAP estimates  of $\theta_0$  for the different disparities. Horizontal red lines represent the true value of the corresponding parameter.}\label{Dposterior:fig:evol-estim-m-sigma-mdap}
\end{figure}

Finally, a sensitivity analysis has been performed in order to determine the influence of the choice of the prior distribution in the $D$-posterior one and the corresponding point estimators. Since the parameter to be estimated lies in the interval $[0,1]$, we use beta distributions as possible priors. More specifically, to apply the method we make use of beta distributions with parameters $(\rho,\beta)$ belonging to the grid given by the Cartesian product of the set $\{n+0.1*k:\ n=0,1,\ldots,4;\ k=1,2,\ldots10\}$ with itself as prior distributions. In Table \ref{Dposterior:table:sens} some obtained results are summarized, especially those corresponding to prior distributions which are very concentrated at extreme values far away from the  parameter of interest. Despite this fact, one can observe the slight influence of the choice of the prior distribution on the EDAP and MDAP estimates. In addition, although the results are not shown here, it is interesting to mention that the effect of the prior distribution on the estimates decreases as the generations go up (see Supplementary material for further details).

\begin{table}[!htbp] \centering
\begin{tabular}{|cccccccc|}
\hline
 $\rho$ & $\beta$ & Prior mean & Prior variance & $\theta_{45}^{*HD}$ & $\theta_{45}^{*NED}$ & $\theta_{45}^{+HD}$ & $\theta_{45}^{+NED}$ \\
\hline
$0.1$ & $5$ & $0.020$ & $0.003$ & $0.294$ & $0.292$ & $0.294$ & $0.291$ \\
$0.1$ & $1$ & $0.091$ & $0.039$ & $0.296$ & $0.294$ & $0.295$ & $0.293$ \\
$2$ & $5$ & $0.286$ & $0.026$ & $0.296$ & $0.294$ & $0.295$ & $0.293$ \\
$1$ & $2$ & $0.333$ & $0.056$ & $0.296$ & $0.295$ & $0.295$ & $0.294$ \\
$0.1$ & $0.1$ & $0.500$ & $0.208$ & $0.296$ & $0.295$ & $0.295$ & $0.293$ \\
$2.5$ & $2.5$ & $0.500$ & $0.042$ & $0.297$ & $0.296$ & $0.296$ & $0.295$ \\
$2$ & $1$ & $0.667$ & $0.056$ & $0.297$ & $0.297$ & $0.296$ & $0.295$ \\
$5$ & $2$ & $0.714$ & $0.026$ & $0.299$ & $0.300$ & $0.298$ & $0.299$ \\
$1$ & $0.1$ & $0.909$ & $0.039$ & $0.296$ & $0.296$ & $0.296$ & $0.295$ \\
$5$ & $0.1$ & $0.980$ & $0.003$ & $0.299$ & $0.301$ & $0.298$ & $0.300$ \\
\hline
\end{tabular}
  \caption{EDAP and MDAP estimates for the HD and NED considering different beta distributions as prior distributions.}
  \label{Dposterior:table:sens}
\end{table}

\begin{remark}
Regarding computational purposes, to approximate the integrals involved in \eqref{Dposterior:eq:D-posterior-tree} and \eqref{Dposterior:def:EDAP-estimator}, we have made use of the function \texttt{cotes()} of the library \texttt{pracma} (see \cite{pracma}).
\end{remark}

%%%%%%%%%%%%%%%%%%%%%%%%%%%%%%%%%%%%%%%%%%%%%%%%%%%%%%%%%%%%%%%%%%%
\section{Concluding remark}\label{Dposterior:sec:conclusion}
%%%%%%%%%%%%%%%%%%%%%%%%%%%%%%%%%%%%%%%%%%%%%%%%%%%%%%%%%%%%%%%%%%%

For controlled branching processes, this paper addresses the issue of obtaining robust estimators of the offspring distribution in a Bayesian context making use of disparity-based methods. It has been assumed that  the offspring distribution belongs to a general parametric family and  the inference on the main parameter of the reproduction law is based on the sample given by the entire family tree.

Firstly, we have defined the $D$-posterior density, a density function which is obtained by replacing the log-likelihood in Bayes rule with a conveniently scaled disparity measure. The expectation and mode of the $D$-posterior density, denoted as EDAP and MDAP estimators, respectively,   are  proposed as Bayes estimators for the offspring parameter, emulating the point estimators under the squared error loss function or under $0-1$ loss function, respectively, for the posterior density. As initial step for the analysis of the asymptotic and robustness properties of these estimators, their existence and measurability are studied. Moreover, sufficient conditions for their strong consistency and asymptotic normality, once suitably normalized, are provided.

Special attention has been paid in the research of the robustness properties of the proposed EDAP and MDAP estimators. To that end we have introduced the so-called EDAP and MDAP functions for dependent tree-structured data. These functions present the novelty and added difficulty of their random nature as a consequence of that EDAP and MDAP functions depend on the total number of progenitors up to certain generation. This leads to examine their continuity and asymptotic behaviour carefully. These features play an important role in the study of the  robustness of EDAP and MDAP estimators, where the classical measures have been examined. However, the randomness in the EDAP and MDAP functions has made necessary the introduction of some new measures for the robustness of these estimators, which have shown the robust behaviour of the EDAP and MDAP estimators when the sample size grows.

Although results are provided for a class of disparity measures, we have focused our attention on the Hellinger distance and the negative exponential disparity in the examples presented. In the first one, we applied the described methodology to real data from oligodendrocyte cell populations and estimated the main parameters of the process suggested to model such populations. The results in this example illustrate the asymptotic properties of the EDAP and MDAP estimators. The second example considers simulated data  with the purpose of showing and analysing the robustness properties of the EDAP and MDAP estimators established in this paper. Thus, both examples illustrate the good properties of EDAP and MDAP estimators, showing that they are efficient in a free-contamination context and robust to model misspecification and presence of aberrant ouliers. We finally remark that since our ultimate goal is to provide estimators of the offspring parameter, in both examples we have made use of numerical methods to approximate it instead of using a MCMC methodology (as used in \cite{Hooker-Vidyshankar-2014}), which has resulted in a reduction of the computational time to calculate the EDAP and MDAP estimators.

%%%%%%%%%%%%%%%%%%%%%%%%%%%%%%%%%%%%%%%%%%%%%%%%%%%%%%%%%%%%%%%%%%%
\section*{Acknowledgements}
%%%%%%%%%%%%%%%%%%%%%%%%%%%%%%%%%%%%%%%%%%%%%%%%%%%%%%%%%%%%%%%%%%%

This research was partially developed while Carmen Minuesa was a visiting PhD student at the Department of Statistics, George Mason University, and she is grateful for the hospitality and collaboration. The authors would like to thank Jacinto Mart\'in (University of Extremadura) for his valuable comments and suggestions. This research has been supported by the Ministerio de Educaci\'on, Cultura y Deporte (grant FPU13/03213), the Junta de Extremadura (grant GR15105 and IB16099), the Ministerio de Econom\'ia y Competitividad of Spain (grant MTM2015-70522-P) and the Fondo Europeo de Desarrollo Regional.

%%%%%%%%%%%%%%%%%%%%%%%%%%%%%%%%%%%%%%%%%%%%%%%%%%%%%%%%%%%%%%%%%%%
\section{Supplementary material}
%%%%%%%%%%%%%%%%%%%%%%%%%%%%%%%%%%%%%%%%%%%%%%%%%%%%%%%%%%%%%%%%%%%

%%%%%%%%%%%%%%%%%%%%%%%%%%%%%%%%%%%%%%%%%%%%%%%%%%%%%%%%%%%%%%%%%%
\subsection{Proof of properties of the EDAP and MDAP functions}\label{Dposterior:ap:method}
%%%%%%%%%%%%%%%%%%%%%%%%%%%%%%%%%%%%%%%%%%%%%%%%%%%%%%%%%%%%%%%%%%%

\begin{Prf}[Proposition \ref{Dposterior:prop:existence-EDAP}]
For \ref{Dposterior:prop:existence-EDAP-functional}, observe that since $G(\cdot)$ is strictly convex, the disparity $D$ is non-negative, and as a consequence, for each $\omega\in\Omega$, $|\overline{T}_n(q)(\omega)|<\infty$.

For the continuity of EDAP function, due to the facts that $e^{-\Delta_{n-1}(\omega)D(q_j,\theta)}\leq 1$, for each $\omega\in\Omega$, $j\in\N$, $\theta\in\Theta$, and $D(q_j,\theta)\to D(q,\theta)$, as $j\to\infty$, one has that $\int_\Theta e^{-\Delta_{n-1}(\omega)D(q_j,\theta)}\pi(\theta)d\theta\to\int_\Theta e^{-\Delta_{n-1}(\omega)D(q,\theta)}\pi(\theta)d\theta$, as $j\to\infty$, for each $n\in\N$, and $\omega\in\Omega$, and consequently, making use of a generalized version of the dominated convergence theorem  (see \cite{Royden}, p.92), one has that, as $j\to\infty$,
\begin{equation*}
\overline{T}_n(q_j)(\omega)=\frac{\int_\Theta\theta e^{-\Delta_{n-1}(\omega)D(q_j,\theta)}\pi(\theta)d\theta}{\int_\Theta e^{-\Delta_{n-1}(\omega)D(q_j,\theta)}\pi(\theta)d\theta}\to\frac{\int_\Theta\theta e^{-\Delta_{n-1}(\omega)D(q,\theta)}\pi(\theta)d\theta}{\int_\Theta e^{-\Delta_{n-1}(\omega)D(q,\theta)}\pi(\theta)d\theta}=\overline{T}_n(q)(\omega).
\end{equation*}
Now, by the continuity of $\overline{T}_n(\cdot)(\omega)$ for each $\omega\in\Omega$ and the measurability of $\overline{T}_n(q)$ for each $q\in\Gamma$, the measurability of $\overline{T}_n$ follows from Theorem 2 in \cite{Gowrisankaran-1972}.
\end{Prf}

\begin{Prf}[Proposition \ref{Dposterior:prop:existence-MDAP}]
\ref{Dposterior:prop:existence-MDAP-functional} For each $\omega\in\Omega$, the finiteness and existence of $\widetilde{T}_n(q)(\omega)$ is immediate from the definition of the family $\Gamma_{n,\omega}^+$ and the continuity of the function $g_n(q,\omega,\cdot)$. For the measurability  of $\widetilde{T}_n(q)$ we apply Theorem 4.5 in \cite{Debreu-1967} bearing in mind that $g_n(q,\cdot,\cdot)$ is measurable, $g_n(q,\omega,\cdot)$ is continuous for each $\omega\in\Omega$, and since $\widetilde{T}_n(q)(\omega)$ is unique, the function $S:\omega\in\Omega\mapsto S(\omega)=\arg\min_{\theta\in\Theta} g_n(q,\omega,\theta)$ is well defined.

For \ref{Dposterior:prop:existence-continuity-MDAP-functional}, let denote $t_\omega=\widetilde{T}_n(q)(\omega)$ in order to ease the notation (it exists by \ref{Dposterior:prop:existence-MDAP-functional} and is unique) and consider that all the limits are taken as $j\to\infty$ unless specified otherwise. First of all, from the continuity of $D(q,\cdot)$ and the fact that
\begin{equation}\label{Dposterior:eq:lim-sup-g-n}
\sup_{\theta\in\Theta} |D(q_j,\theta)-D(q,\theta)|\to 0,
\end{equation}
it follows that if $\{\theta_n\}_{n\in\N}$ is a sequence in $\Theta$ converging to $\theta^*\in\Theta$, given $\epsilon>0$, one can find $J=J(\epsilon)\in\N$ and $n_0=n_0(\epsilon)\in\N$ such that $|D(q_j,\theta_n)-D(q_j,\theta^*)|<\epsilon$, for each $j\geq J$ and $n\geq n_0$.

Now, fix $\omega\in\Omega$; from \eqref{Dposterior:eq:lim-sup-g-n}, it follows that $\sup_{\theta\in\Theta} |g_n(q_j,\omega,\theta)-g_n(q,\omega,\theta)|\to 0$. From this latter, it is deduced that $q_j\in\Gamma_{n,\omega}^+$ eventually. Thus, using the same arguments as in \ref{Dposterior:prop:existence-MDAP-functional}, $\widetilde{T}_n(q_j)(\omega)$ eventually exists and $\widetilde{T}_n(q_j)(\omega)\in C_{n,\omega}^+$. Let us denote $\widetilde{T}_n(q_j)(\omega)$ by $t_{j,\omega}$ to ease the notation; next we show that $t_{j,\omega}\to t_\omega$, as $j\to\infty$.

From \eqref{Dposterior:eq:lim-sup-g-n}, the convergence of $g_n(q_j,\omega,t_{j,\omega})\to g_n(q,\omega,t_\omega)$ and $|g_n(q_j,\omega,t_{j,\omega})-g_n(q,\omega,t_{j,\omega})|\to 0$,  are deduced, so $g_n(q,\omega,t_{j,\omega})\to g_n(q,\omega,t_\omega)$. If the sequence $\{t_{j,\omega}\}_{j\in\N_0}$ did not converge to $t_\omega$, then there would exist a subsequence $\{t_{j_l,\omega}\}_{l\in\N}\subseteq\{t_{j,\omega}\}_{j\in\N}$ such that $t_{j_l,\omega}\to t_\omega^*\neq t_\omega$, as $l\to\infty$. Since $g_n(q,\omega,\cdot)$ is continuous, $g_n(q,\omega,t_{j_l\omega})\to g_n(q,\omega,t_\omega^*)$, as $l\to\infty$. Due to all of the above, one would have $g_n(q,\omega,t_\omega)=g_n(q,\omega,t_\omega^*)$, which would contradict the uniqueness of $t_\omega$.

If $\widetilde{T}_n(q_j)(\omega)$ exists for each $\omega\in\Omega$, the measurability of $\widetilde{T}_n(q_j)$ is immediate with the same reasoning as in \ref{Dposterior:prop:existence-MDAP-functional}.

Finally, one obtains the measurability of $\widetilde{T}_n$ from Theorem 2 in \cite{Gowrisankaran-1972} and the facts that for each $\omega\in\Omega$, $\widetilde{T}_n(\cdot)(\omega)$ is continuous and for each $q\in\Gamma_{n,\omega}^+$, $\widetilde{T}_n(q)$ is measurable.
\end{Prf}

\begin{Prf}[Theorem \ref{Dposterior:thm:aprox-MDE}]
Throughout this proof, all statements are made on the set $\{Z_n\to\infty\}$ and for $n$ large enough. Let fix $\omega\in\{Z_n\to\infty\}$ then $\Delta_n(\omega)\to\infty$, and in the remainder of this section, we consider that all the random variables are evaluated at $\omega$, although we do not write it explicitly in order to ease the notation.

To facilitate the proof of Theorem \ref{Dposterior:thm:aprox-MDE}, we will make use of the following lemma.

\begin{lem}\label{Dposterior:lem:posterior-t}
Let $q\in\widetilde{\Gamma}$, $\pi_D^n(\theta|q)$ be the $D$-posterior density of $\theta$ at $q$, $\overline{\pi}_D^{n}(t|q)$ be the $D$-posterior density function of $t=\Delta_{n-1}^{1/2}(\theta-T(q))$ at $q$ and $R_{n}=\{\Delta_{n-1}^{1/2}(\theta-T(q)):\theta\in\Theta\}$. Under conditions of Theorem \ref{Dposterior:thm:aprox-MDE}, for each $t\in\R$
\begin{align}
\overline{\pi}_D^{n}(t|q)&=\left(\frac{I^D(T(q))}{2\pi}\right)^{1/2}e^{-\frac{t^2I^D(\theta_n'(t))}{2}}\bigg(1+\frac{tb_1}{\Delta_{n-1}^{1/2}}+\frac{t^2b_2}{2\Delta_{n-1}}+\frac{t^3 \pi'''(\theta_n^*(t))}{6\Delta_{n-1}^{3/2}\pi(T(q))}\bigg)\nonumber\\
&\phantom{=}\cdot(1+C_n)I_{R_n}(t),\label{Dposterior:eq:expansion-Dposterior-t}
\end{align}
a.s. on $\{Z_n\to\infty\}$, where $I^D(\theta)=\ddot{D}\left(q,\theta\right)$, $b_1=\frac{\pi'\left(T(q)\right)}{\pi\left(T(q)\right)}$, $b_2=\frac{\pi''\left(T(q)\right)}{\pi\left(T(q)\right)}$, $\pi'(\cdot)$ and $\pi''(\cdot)$ denote the first and the second derivative of the function $\pi(\cdot)$, $\theta_n'(t)$  and $\theta_n^*(t)$ are both points between $T(q)$ and $T(q)+\frac{t}{\Delta_{n-1}^{1/2}}$, for each $n\in\N$, and $t\in\R$, and  $\{C_n\}_{n\in\N}$ is a sequence of real valued random variables converging to 0 a.s.
\end{lem}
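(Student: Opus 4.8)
The plan is to obtain \eqref{Dposterior:eq:expansion-Dposterior-t} by a Laplace-type analysis of the numerator and of the normalizing constant of $\overline{\pi}_D^{n}(\cdot|q)$ separately. First I would pass from $\theta$ to $t=\Delta_{n-1}^{1/2}(\theta-T(q))$, so that $\theta=T(q)+t\Delta_{n-1}^{-1/2}$ and $d\theta=\Delta_{n-1}^{-1/2}\,dt$; since the normalizing integral $\int_\Theta e^{-\Delta_{n-1}D(q,\theta)}\pi(\theta)\,d\theta$ is a constant, this yields
\[
\overline{\pi}_D^{n}(t|q)=\frac{e^{-\Delta_{n-1}D(q,\,T(q)+t\Delta_{n-1}^{-1/2})}\,\pi(T(q)+t\Delta_{n-1}^{-1/2})\,\Delta_{n-1}^{-1/2}}{\int_\Theta e^{-\Delta_{n-1}D(q,\theta)}\pi(\theta)\,d\theta}\,I_{R_n}(t).
\]

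Next I would Taylor expand the two pieces of the numerator about $T(q)$. Because $T(q)$ is an interior minimizer of $D(q,\cdot)$ by \ref{Dposterior:alimit}~\ref{Dposterior:cond:MDE-ex-uniq-int} and $D(q,\cdot)$ is twice continuously differentiable by \ref{Dposterior:alimit}~\ref{Dposterior:cond:2-deriv-MDE}, the first derivative $\dot D(q,T(q))$ vanishes, so a second-order Taylor expansion with Lagrange remainder gives $\Delta_{n-1}D(q,\theta)=\Delta_{n-1}D(q,T(q))+\tfrac12 I^D(\theta_n'(t))\,t^2$ for some $\theta_n'(t)$ between $T(q)$ and $T(q)+t\Delta_{n-1}^{-1/2}$, producing the factor $e^{-t^2 I^D(\theta_n'(t))/2}$ with $I^D(\cdot)=\ddot D(q,\cdot)$. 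Expanding $\pi(\cdot)$ to third order with Lagrange remainder, using \ref{Dposterior:alimit}~\ref{Dposterior:cond:3-deriv-prior} (thrice differentiability and $\pi(T(q))>0$), and factoring out $\pi(T(q))$ produces exactly the bracketed polynomial in $t$ with coefficients $b_1,b_2$ and remainder $t^3\pi'''(\theta_n^*(t))/(6\Delta_{n-1}^{3/2}\pi(T(q)))$. The common factors $e^{-\Delta_{n-1}D(q,T(q))}\pi(T(q))\Delta_{n-1}^{-1/2}$ will then appear in both numerator and denominator.

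The core step, and the main obstacle, is to show that the normalizing constant admits the Laplace asymptotics
\[
\int_\Theta e^{-\Delta_{n-1}D(q,\theta)}\pi(\theta)\,d\theta=e^{-\Delta_{n-1}D(q,T(q))}\pi(T(q))\left(\frac{2\pi}{\Delta_{n-1}I^D(T(q))}\right)^{1/2}(1+\widetilde C_n),
\]
with $\widetilde C_n\to 0$ a.s.\ on $\{Z_n\to\infty\}$; setting $1+C_n=(1+\widetilde C_n)^{-1}$ and cancelling the common factors then yields \eqref{Dposterior:eq:expansion-Dposterior-t}. To establish this I would split $\Theta$ into a neighborhood $\{|\theta-T(q)|\le\eta\}$ and its complement. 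On the complement the separation hypothesis $q\in\Gamma^*$ in \eqref{Dposterior:cond:sep} gives $\rho>0$ with $D(q,\theta)-D(q,T(q))>\rho$, so that, using boundedness of $\pi$ (\ref{Dposterior:alimit}~\ref{Dposterior:cond:bound-prior}), this part is $O(e^{-\Delta_{n-1}\rho})$ and hence negligible against the $\Delta_{n-1}^{-1/2}$ order of the central contribution, since $\Delta_{n-1}\to\infty$ a.s.\ by \ref{Dposterior:cond:consistencyMLE}. On the central neighborhood, choosing $\eta$ small enough that $I^D(\cdot)$ stays bounded away from $0$ (possible by its continuity and $I^D(T(q))>0$ from \ref{Dposterior:alimit}~\ref{Dposterior:cond:2-deriv-MDE}), the same change of variables and Taylor expansion reduce the integral to $\int e^{-s^2 I^D(\theta_n'(s))/2}(1+\cdots)\,ds$; the perturbation monomials carry factors $\Delta_{n-1}^{-1/2},\Delta_{n-1}^{-1},\Delta_{n-1}^{-3/2}$ whose integrals are controlled by a fixed Gaussian bound, and dominated convergence (with $I^D(\theta_n'(s))\to I^D(T(q))$ as $\theta_n'(s)\to T(q)$) gives the limit $(2\pi/I^D(T(q)))^{1/2}$. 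I expect the delicate points to be the uniform-in-$n$ Gaussian domination needed to make the perturbation integrals vanish and the matching of the exponential tail estimate against the $\Delta_{n-1}^{-1/2}$ scale; both are handled by combining the $\Gamma^*$ separation with the continuity and positivity of $I^D$ near $T(q)$.
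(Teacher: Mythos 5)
Your proposal is correct and follows essentially the same route as the paper's proof: the same change of variables, a Taylor expansion of $\pi(\cdot)$ with third-order Lagrange remainder and of $D(q,\cdot)$ with second-order remainder (the linear term vanishing since $T(q)$ is an interior minimizer), and Laplace asymptotics for the normalizing constant $J_n$ obtained by splitting into $B_n=\{|t|\leq\delta\Delta_{n-1}^{1/2}\}$ and its complement, with the $\Gamma^*$ separation in \eqref{Dposterior:cond:sep} killing the tail and dominated convergence handling the central part. One blemish: for the tail estimate you invoke the boundedness of $\pi(\cdot)$, i.e.\ \ref{Dposterior:alimit}~\ref{Dposterior:cond:bound-prior}, which is not among the hypotheses in force for this lemma (it is only assumed in part \ref{Dposterior:thm:aprox-MDE-ii-MDAP} of Theorem \ref{Dposterior:thm:aprox-MDE}); it is also unnecessary, since changing variables back to $\theta$ gives, as in the paper, $J_{2n}\leq \Delta_{n-1}^{1/2}e^{-\rho\Delta_{n-1}}\int_{\{\theta\in\Theta:|\theta-T(q)|>\delta\}}\pi(\theta)\,d\theta\to 0$, using only the integrability of the prior.
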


\begin{proof}
The idea of the proof is similar to the proof of Theorem 1 in \cite{Hooker-Vidyshankar-2014} and uses arguments in \cite{Ghosh-1994} (pp. 46-47). Notice that, for each $t\in R_{n}$,
\begin{equation*}
\overline{\pi}_D^{n}(t|q)=\frac{e^{-\Delta_{n-1} D\left(q,T(q)+\frac{t}{\Delta_{n-1}^{1/2}}\right)+\Delta_{n-1} D\left(q,T(q)\right)}\pi\left(T(q)+\frac{t}{\Delta_{n-1}^{1/2}}\right)}{\bigint_{R_n} e^{-\Delta_{n-1} D\left(q,T(q)+\frac{t}{\Delta_{n-1}^{1/2}}\right)+\Delta_{n-1} D\left(q,T(q)\right)}\pi\left(T(q)+\frac{t}{\Delta_{n-1}^{1/2}}\right)dt}.\label{Dposterior:eq:posterior-t}
\end{equation*}

Moreover, since $T(q)\in int(\Theta)$, one has that $\cup_{n=1}^\infty R_n =\R$. This is immediate from the fact that there exists $\eta>0$ such that $(T(q)-\eta,T(q)+\eta)\subseteq\Theta$, consequently $(-\Delta_{n-1}^{1/2}\eta,\Delta_{n-1}^{1/2}\eta)\subseteq R_n$, and taking limit as $n\to\infty$, one obtains $\cup_{n=1}^\infty R_n =\R$.

On the one hand, using a second-order Taylor series expansion of the prior density one has that, for each $t\in R_{n}$,
\begin{equation}\label{Dposterior:eq:expansionn-prior}
\pi\left(T(q)+\frac{t}{\Delta_{n-1}^{1/2}}\right)=\pi\left(T(q)\right)+\frac{t\pi'\left(T(q)\right)}{\Delta_{n-1}^{1/2}}+\frac{t^2\pi''\left(T(q)\right)}{2\Delta_{n-1}}+\frac{t^3\pi'''(\theta_n^*(t))}{6\Delta_{n-1}^{3/2}},
\end{equation}
where $\theta_n^*(t)$ is a point between $T(q)$ and $T(q)+\frac{t}{\Delta_{n-1}^{1/2}}$. On the other hand, using a first-order Taylor series expansion of the function $D(q,\cdot)$, for each $t\in R_{n}$,
\begin{equation}\label{Dposterior:eq:expansionn-disparity}
\Delta_{n-1}D\left(q,T(q)+\frac{t}{\Delta_{n-1}^{1/2}}\right)-\Delta_{n-1}D\left(q,T(q)\right)=\frac{t^2I^D(\theta_n'(t))}{2},
\end{equation}
where $\theta_n'(t)$ is a point between $T(q)$ and $T(q)+\frac{t}{\Delta_{n-1}^{1/2}}$.

Let denote $h_n(t)=\pi\bigg(T(q)+\frac{t}{\Delta_{n-1}^{1/2}}\bigg)e^{-\Delta_{n-1}D\left(q,T(q)+\frac{t}{\Delta_{n-1}^{1/2}}\right)+\Delta_{n-1}D\left(q,T(q)\right)}I_{R_n}(t)$, for each $n\in\N$, and $t\in\R$. Combining \eqref{Dposterior:eq:expansionn-prior} and \eqref{Dposterior:eq:expansionn-disparity}, one has
\begin{equation}
h_n(t)=\pi(T(q))\bigg(1+\frac{tb_1}{\Delta_{n-1}^{1/2}}+\frac{t^2b_2}{2\Delta_{n-1}}+\frac{t^3 \pi'''(\theta_n^*(t))}{6\Delta_{n-1}^{3/2}\pi(T(q))}\bigg)e^{-\frac{t^2I^D(\theta_n'(t))}{2}}I_{R_n}(t).\label{Dposterior:eq:integrando}
\end{equation}

Let define, for each $n\in\N$, the integrals $J_n=\int h_n(t)dt$, and
\begin{eqnarray*}
I_{n}&=&\int_{R_n}\bigg(1+\frac{tb_1}{\Delta_{n-1}^{1/2}}+\frac{t^2b_2}{2\Delta_{n-1}}+\frac{t^3 \pi'''(\theta_n^*(t))}{6\Delta_{n-1}^{3/2}\pi(T(q))}\bigg)e^{-\frac{t^2I^D(T(q))}{2}}dt\\
&=&\left(\frac{2\pi}{I^D(T(q))}\right)^{1/2}\left(1+\frac{b_2}{2\Delta_{n-1}I^D(T(q))}+\frac{C}{6\Delta_{n-1}^{3/2}\pi(T(q))}\right)+o(1),
\end{eqnarray*}
where to obtain the last equality we have applied the dominated convergence theorem and that $\left(\frac{I^D(T(q))}{2\pi}\right)^{1/2}\cdot\int t^3 \pi'''(\theta_n^*(t)) e^{-\frac{t^2I^D(T(q))}{2}}dt\leq C$, for some constant $C>0$, due to the boundedness of the function $\pi'''(\cdot)$. If one proves that $J_n=\pi(T(q))I_{n}+o(1)$, then, from all the above, one deduces \eqref{Dposterior:eq:expansion-Dposterior-t}.
%\begin{equation*}
%\overline{\pi}_D^{n}(t|q)=
%%&\frac{\bigg(1+\frac{tb_1}{\Delta_{n-1}^{1/2}}+\frac{t^2b_2}{2\Delta_{n-1}}+\frac{t^3 \pi'''(\theta_n^*(t))}{6\Delta_{n-1}^{3/2}\pi(T(q))}\bigg)e^{-\frac{t^2I^D(\theta_n'(t))}{2}}}{\left(\frac{2\pi}{I^D(T(q))}\right)^{1/2}\left(1+\frac{b_2}{2\Delta_{n-1}I^D(T(q))}+\frac{C}{6\Delta_{n-1}^{3/2}\pi(T(q))}\right)+o(1)}I_{R_n}(t)\\
%\left(\frac{I^D(T(q))}{2\pi}\right)^{1/2}e^{-\frac{t^2I^D(\theta_n'(t))}{2}}\bigg(1+\frac{tb_1}{\Delta_{n-1}^{1/2}}+\frac{t^2b_2}{2\Delta_{n-1}}+\frac{t^3 \pi'''(\theta_n^*(t))}{6\Delta_{n-1}^{3/2}\pi(T(q))}\bigg)I_{R_n}(t)(1+c_n),
%\end{equation*}
%where $\{c_n\}_{n\in\N}$ is a sequence of real numbers converging to 0 (recall that we have fixed $\omega\in\{Z_n\to\infty\}$, thus $c_n=C_n(\omega)$, where $C_n$, $n\in\N$, are random variables).

To prove $J_n=\pi(T(q))I_{n}+o(1)$, let fix $0<\epsilon<I^D(T(q))$. Since $I^D(\cdot)$ is continuous at $T(q)$, there exists $\delta=\delta(\epsilon)>0$ such that if $|\theta-T(q)|\leq\delta$, then $|I^D(\theta)-I^D(T(q))|\leq \epsilon$. Let define, for each $n\in\N$, the set $B_n=\{t\in\R:|t|\leq \delta\Delta_{n-1}^{1/2}\}$, and note that $J_n=J_{1n}+J_{2n}$, where $J_{1n}=\int_{B_n} h_n(t)dt$, and $J_{2n}=\int_{B_n^c} h_n(t)dt$. As a consequence, it is enough to prove that, for each $n\in\N$,  $J_{1n}=\pi(T(q))I_{n}+a_{n}$, and $\{a_{n}\}_{n\in\N}$ and $\{J_{2n}\}_{n\in\N}$ are both sequences of real numbers converging to 0.

For the former, observe that $a_{n}=\int x_n(t)dt$, where
\begin{eqnarray}
x_n(t)&=&\bigg(1+\frac{tb_1}{\Delta_{n-1}^{1/2}}+\frac{t^2b_2}{2\Delta_{n-1}}+\frac{t^3 \pi'''(\theta_n^*(t))}{6\Delta_{n-1}^{3/2}\pi(T(q))}\bigg)\left(e^{-\frac{t^2I^D(\theta_n'(t))}{2}}-e^{-\frac{t^2I^D(T(q))}{2}}\right)\nonumber\\
&\phantom{=}&\cdot\ \pi(T(q))I_{B_n\cap R_n}(t)-\pi(T(q))\bigg(1+\frac{tb_1}{\Delta_{n-1}^{1/2}}+\frac{t^2b_2}{2\Delta_{n-1}}+\frac{t^3 \pi'''(\theta_n^*(t))}{6\Delta_{n-1}^{3/2}\pi(T(q))}\bigg)\nonumber\\
&\phantom{=}&\cdot\ e^{-\frac{t^2I^D(T(q))}{2}}I_{B_n^c\cap R_n}(t),\label{Dposterior:eq:integral-xnt}
\end{eqnarray}
consequently, to prove $a_{n}\to 0$ it is enough to prove that the integral of both terms in \eqref{Dposterior:eq:integral-xnt} converge to 0. The convergence of the first one is obtained by applying the dominated convergence theorem bearing in mind that, for each $t\in B_n\cap R_n$, $0<I^D(T(q))-\epsilon<I^D(\theta_n'(t))<I^D(T(q))+\epsilon$ (due to the fact that $\theta_n'(t)$ is a point between $T(q)$ and $T(q)+\frac{t}{\Delta_{n-1}^{1/2}}$). For the convergence of the integral of the second term, one has
\begin{align*}
\int \bigg(1+\frac{tb_1}{\Delta_{n-1}^{1/2}}&+\frac{t^2b_2}{2\Delta_{n-1}}+\frac{t^3 \pi'''(\theta_n^*(t))}{6\Delta_{n-1}^{3/2}\pi(T(q))}\bigg)e^{-\frac{t^2I^D(T(q))}{2}}I_{B_n^c\cap R_n}(t)dt\leq \\
&\leq \bigg(\frac{2\pi}{I^D(T(q))}\bigg)^{1/2}\cdot\Bigg( P[Z\in B_n^c\cap R_n]+\frac{|b_1|}{\Delta_{n-1}^{1/2}}E[|Z|I_{ B_n^c\cap R_n}] \\
&\phantom{\leq}+\frac{|b_2|}{2\Delta_{n-1}}E[Z^2I_{ B_n^c\cap R_n}]++\frac{M}{6\Delta_{n-1}^{3/2}\pi(T(q))}E[|Z|^3 I_{ B_n^c\cap R_n}]\Bigg)\to 0,
\end{align*}
where $M>0$ is an upper bound of $\pi'''(\cdot)$, and $Z$ is a random variable following a normal distribution with mean equal to 0 and variance $I^D(T(q))^{-1}$.

The convergence $J_{2n}\to 0$ follows from \eqref{Dposterior:cond:sep}. To that end, note that for the fixed $\delta>0$, there exists $\rho>0$ such that
$$\inf_{t\in B_n^c\cap R_n} D\left(q,T(q)+\frac{t}{\Delta_{n-1}^{1/2}}\right)-D\left(q,T(q)\right)= \inf_{\theta\in\Theta:|\theta-T(q)|>\delta} D\left(q,\theta\right)-D\left(q,T(q)\right)>\rho,$$
and as a result,
\begin{align*}
J_{2n}&\leq\int_{B_n^c\cap R_n} \pi\left(T(q)+\frac{t}{\Delta_{n-1}^{1/2}}\right)e^{-\rho\Delta_{n-1}}dt\\
&=\Delta_{n-1}^{1/2}e^{-\rho\Delta_{n-1}}\int_{\{\theta\in\Theta:|\theta-T(q)|>\delta\}}\pi(\theta)d\theta\to 0.
\end{align*}
\end{proof}

\vspace*{2ex}

Now, we use the notation and the approximation for the $D$-posterior density function of $t=\Delta_{n-1}^{1/2}(\theta-T(q))$ at $q$ given in the previous lemma to prove Theorem \ref{Dposterior:thm:aprox-MDE}. Throughout this proof, all the limits are taken as $n\to\infty$ unless specified otherwise.
% As in the previous lemma, all statements are made on the set $\{Z_n\to\infty\}$ and for $n$ large enough. With the notation of Lemma \ref{Dposterior:lem:posterior-t}, let fix $\omega\in\{Z_n\to\infty\}$ for which $\Delta_n(\omega)\to\infty$, and throughout this proof, we consider that all the random variables are evaluated at $\omega$.

\ref{Dposterior:thm:aprox-MDE-i-EDAP} Observe that $\Delta_{n-1}^{1/2}(\overline{T}_n(q)-T(q))=\int_{R_n} t \overline{\pi}_D^n(t|q)dt$. Hence, to finish the proof it is enough to prove that both integrals $\int_{B_n\cap R_n} t \overline{\pi}_D^n(t|q)dt$ and $\int_{B_n^c\cap R_n} t \overline{\pi}_D^n(t|q)dt$ converge to 0. For the former, one has $\int_{B_n\cap R_n} t \overline{\pi}_D^n(t|q)dt=(1+c_n)\sum_{i=0}^3 I_{in}$,
where $I_{in}=\int_{B_n\cap R_n} t f_{in}(t)dt$, $f_{in}(t)=\bigg(\frac{I^D(T(q))}{2\pi}\bigg)^{1/2}c_{in}(t)e^{-\frac{t^2I^D(\theta_n'(t))}{2}}$, for $i=0,1,2,3$, and $c_{0n}(t)=1$, $c_{1n}(t)=\frac{tb_1}{\Delta_{n-1}^{1/2}}$, $c_{2n}(t)=\frac{t^2b_2}{2\Delta_{n-1}}$, and $c_{3n}(t)=\frac{t^3\pi'''(\theta_n^*(t))}{6\Delta_{n-1}^{3/2}\pi(T(q))}$. If $h(t)=\big(\frac{I^D(T(q))}{2\pi}\big)^{1/2}e^{-\frac{t^2I^D(T(q))}{2}}$, and we prove that $I_{in}-\int_{R_n} t c_{in}(t)h(t)dt\to 0$, for $i=0,1,2,3$, then, taking into account that $\int_{R_n} t c_{in}(t)h(t)dt\to 0$, for $i=0,1,2,3$, we obtain $I_{in}\to 0$, for $i=0,1,2,3$.

To prove that $I_{in}-\int_{R_n} t c_{in}(t)h(t)dt\to 0$, for $i=0,1,2,3$, observe that, for each $i=0,1,2,3$, $t f_{in}(t)I_{B_n\cap R_n}(t)=t c_{in}(t)h(t)I_{R_n}(t)+x_n^i(t)$, with
\begin{equation}\label{Dposterior:eq:function-x-n-i-t}
x_n^i(t)=(t f_{in}(t)-t c_{in}(t)h(t))I_{B_n\cap R_n}(t)-t c_{in}(t)h(t)I_{B_n^c\cap R_n}(t).
\end{equation}
Consequently, it is sufficient to prove that the integrals of both terms in \eqref{Dposterior:eq:function-x-n-i-t} converge to 0. For the first term, similarly to Lemma \ref{Dposterior:lem:posterior-t}, we apply the dominated convergence theorem. The integrals of the second terms are bounded by absolute moments of the variable $Z I_{B_n^c\cap R_n}$, where $Z$ follows a normal distribution with mean equal to 0 and variance equal to $I^D(T(q))^{-1}$, hence, applying again the dominated convergence theorem one has $\int t c_{in}(t)h(t)I_{B_n^c\cap R_n}(t)dt\to 0$, for $i=0,1,2,3$.

Finally, to prove that $\int_{B_n^c\cap R_n} t \overline{\pi}_D^n(t|q)dt\to 0$, making use of \eqref{Dposterior:cond:sep}, \eqref{Dposterior:eq:expansion-Dposterior-t} and \eqref{Dposterior:eq:integrando}, one obtains
\begin{align*}
\bigg|\int_{B_n^c\cap R_n} t \overline{\pi}_D^n(t|q)dt\bigg| &\leq\bigg|\pi(T(q))^{-1}\left(\frac{I^D(T(q))}{2\pi}\right)^{1/2}(1+c_n)e^{-\Delta_{n-1}\rho}\\
&\phantom{\leq}\cdot\int_{B_n^c\cap R_n}t\pi\bigg(T(q)+\frac{t}{\Delta_{n-1}^{1/2}}\bigg)dt\bigg|\\
&\leq \pi(T(q))^{-1}\left(\frac{I^D(T(q))}{2\pi}\right)^{1/2}|1+c_n|\Delta_{n-1}e^{-\Delta_{n-1}\rho}\\
&\phantom{\leq}\cdot\left(\int_{\Theta}|\theta|\pi(\theta)d\theta+|T(q)|\right)\to 0.
\end{align*}

\vspace*{1ex}

For \ref{Dposterior:thm:aprox-MDE-ii-MDAP}, let denote $f_{n}(\theta)=\Delta_{n-1}^{1/2}(\theta-T(q))$ and $\widetilde{C}_{n}^{+}=f_{n}(C_{n}^+)$. Without loss of generality, we can assume that $T(q)\in C_{n}^+$ for each $n\geq n_0$. Note that $f_{n}(\cdot)$ is a strictly increasing homeomorphism between $\Theta$ and $R_{n}$. As a consequence, $\widetilde{C}_{n}^{+}$ is a compact set and $0\in \widetilde{C}_{n}^{+}$, for each $n\geq n_0$. For $n\geq n_0$, since $q\in\Gamma_{n}^+$, one obtains $\widetilde{T}_n(q)=T(q) + \frac{t_{n}^*}{\Delta_{n-1}^{1/2}}$, with $t_{n}^*=\arg\max_{t\in\widetilde{C}_{n}^{+}} \overline{\pi}_D^{n}(t|q)=\arg\max_{t\in R_n} \overline{\pi}_D^{n}(t|q)=\arg\max_{t\in\R} \overline{\pi}_D^{n}(t|q)$; hence, it suffices to prove that $t_{n}^*\to 0$.

As $h(t)$ is the density function of a normal distribution with mean equal to 0 and variance equal to $I^D(T(q))^{-1}$, it has a unique maximum at $t=0$; thus, we shall prove that $\arg\max_{t\in \R}\overline{\pi}_D^{n}(t|q)\to \arg\max_{t\in \R}h(t)$, as $n\to\infty$, by proving that
\begin{equation}\label{Dposterior:eq:lim-sup-R}
\lim_{n\to\infty}\sup_{t\in\R} |\overline{\pi}_D^{n}(t|q)-h(t)|= 0.
\end{equation}

To that end, recall that $\overline{\pi}_D^{n}(t|q)=\frac{h_n(t)}{J_n}$; consequently,
{\small\begin{align}\label{Dposterior:eq:aprox-MDE_MDAP-sup-conv}
\sup_{t\in\R} |\overline{\pi}_D^{n}(t|q)- h(t)| &\leq\sup_{t\in\R} \left|\overline{\pi}_D^{n}(t|q)- \frac{h_n(t)}{\pi(T(q))}\left(\frac{I^D(T(q))}{2\pi}\right)^{1/2}\right|I_{B_n}(t)+ \sup_{t\in\R}\overline{\pi}_D^{n}(t|q)I_{B_n^c}(t)\nonumber\\
&\phantom{=}+\sup_{t\in\R} \left|\frac{h_n(t)}{\pi(T(q))}\left(\frac{I^D(T(q))}{2\pi}\right)^{1/2}- h(t)\right|I_{B_n}(t)+\sup_{t\in\R} h(t)I_{B_n^c}(t),
\end{align}}
thus, it is enough to prove that all the terms in \eqref{Dposterior:eq:aprox-MDE_MDAP-sup-conv} converge to 0, as $n\to\infty$.

For the first term in \eqref{Dposterior:eq:aprox-MDE_MDAP-sup-conv}, one has the following inequality
\begin{align*}
\sup_{t\in\R} &\left|\overline{\pi}_D^{n}(t|q)- \frac{h_n(t)}{\pi(T(q))}\left(\frac{I^D(T(q))}{2\pi}\right)^{1/2}\right|I_{B_n}(t)\leq\left|\frac{1}{J_n}-\frac{1}{\pi(T(q))}\left(\frac{I^D(T(q))}{2\pi}\right)^{1/2}\right|\\
%&\leq\sup_{t\in\R} \left|\frac{h_n(t)}{J_n}-\frac{h_n(t)}{\pi(T(q))}\left(\frac{I^D(T(q))}{2\pi}\right)^{1/2}\right|I_{B_n}(t)\leq\\
%&\leq  \left|\frac{1}{J_n}-\frac{1}{\pi(T(q))}\left(\frac{I^D(T(q))}{2\pi}\right)^{1/2}\right| \sup_{t\in\R}h_n(t)I_{B_n}(t)\\
%&\leq \left|\frac{1}{J_n}-\frac{1}{\pi(T(q))}\left(\frac{I^D(T(q))}{2\pi}\right)^{1/2}\right|\pi(T(q))\\
%&\phantom{\leq}\cdot\sup_{t\in\R}\left(1+\frac{|b_1| |t|}{\Delta_{n-1}^{1/2}}+\frac{|b_2| t^2}{2\Delta_{n-1}}+\frac{M |t|^3}{6\Delta_{n-1}^{3/2}\pi(T(q))}\right) e^{-\frac{t^2 I^D(\theta_n'(t))}{2}}I_{B_n\cap R_n}(t)\\
%&\leq \left|\frac{1}{J_n}-\frac{1}{\pi(T(q))}\left(\frac{I^D(T(q))}{2\pi}\right)^{1/2}\right|\pi(T(q))\\
%&\phantom{\leq}\cdot\sup_{t\in\R}\left(1+\frac{|b_1| |t|}{\Delta_{n-1}^{1/2}}+\frac{|b_2| t^2}{2\Delta_{n-1}}+\frac{M |t|^3}{6\Delta_{n-1}^{3/2}\pi(T(q))}\right) e^{-\frac{t^2 (I^D(T(q))-\epsilon)}{2}}I_{B_n\cap R_n}(t)\\
&\phantom{\leq} \cdot \bigg(\pi(T(q))+\frac{|b_1| \pi(T(q))}{\Delta_{n-1}^{1/2}}\sup_{t\in\R}|t| e^{-\frac{t^2 (I^D(T(q))-\epsilon)}{2}}\\
&\phantom{\leq}+\frac{|b_2|\pi(T(q)) }{2\Delta_{n-1}}\sup_{t\in\R}t^2 e^{-\frac{t^2 (I^D(T(q))-\epsilon)}{2}}+\frac{M}{6\Delta_{n-1}^{3/2}}\sup_{t\in\R} |t|^3 e^{-\frac{t^2 (I^D(T(q))-\epsilon)}{2}}\bigg),
\end{align*}
where $M$ is an upper bound of the function $\pi'''(\cdot)$. Thus, it converges to 0 due to the fact that $\frac{1}{J_n}\to \frac{1}{\pi(T(q))}\left(\frac{I^D(T(q))}{2\pi}\right)^{1/2}$, and $|t|^i e^{-\frac{t^2 (I^D(T(q))-\epsilon)}{2}}$ is a bounded function for $i=1,2,3$.

Since $\pi(\cdot)$ is bounded, for the second term in \eqref{Dposterior:eq:aprox-MDE_MDAP-sup-conv}, one has
\begin{equation*}
\sup_{t\in\R} \overline{\pi}_D^{n}(t|q)I_{B_n^c}(t)\leq\sup_{t\in\R} \frac{e^{-\rho\Delta_{n-1}}}{J_n}\pi\left(T(q)+\frac{t}{\Delta_{n-1}}\right)I_{B_n^c\cap R_n}(t)\leq\frac{e^{-\rho\Delta_{n-1}}}{J_n}M^* \to 0,
\end{equation*}
where $M^*$ denotes an upper bound of the function $\pi(\cdot)$.

For the third term in \eqref{Dposterior:eq:aprox-MDE_MDAP-sup-conv}, one has
\begin{align*}
\sup_{t\in\R} &\left|\left(\frac{I^D(T(q))}{2\pi}\right)^{1/2}\frac{h_n(t)}{\pi(T(q))}- h(t)\right|I_{B_n}(t)\leq \sup_{t\in\R} h(t)I_{R_n^c}(t)\\
&\phantom{\leq} +\left(\frac{I^D(T(q))}{2\pi}\right)^{1/2} \sup_{t\in\R} \left|e^{-\frac{t^2 I^D(\theta_n'(t))}{2}}-e^{-\frac{t^2 I^D(T(q))}{2}}\right|I_{B_n\cap R_n}(t)\\
&\phantom{\leq}+\left(\frac{I^D(T(q))}{2\pi}\right)^{1/2} \cdot \bigg(\frac{|b_1|}{\Delta_{n-1}^{1/2}}\sup_{t\in\R}|t| e^{-\frac{t^2 (I^D(T(q))-\epsilon)}{2}}+\frac{|b_2|}{2\Delta_{n-1}}\sup_{t\in\R}t^2 e^{-\frac{t^2 (I^D(T(q))-\epsilon)}{2}}\\
&\phantom{\leq}+\frac{M}{6\Delta_{n-1}^{3/2}\pi(T(q))}\sup_{t\in\R} |t|^3 e^{-\frac{t^2 (I^D(T(q))-\epsilon)}{2}}\bigg).
\end{align*}
As a result, since $\sup_{t\in\R} h(t)I_{R_n^c}(t)\leq h(\eta\Delta_{n-1}^{1/2})\to 0$, $\Delta_n\to\infty$, and $|t|^i e^{-\frac{t^2 (I^D(T(q))-\epsilon)}{2}}$ is a bounded function for $i=1,2,3$, to establish the convergence of the third term in \eqref{Dposterior:eq:aprox-MDE_MDAP-sup-conv}, we shall prove
\begin{equation}
\sup_{t\in\R} \left|e^{-\frac{t^2 I^D(\theta_n'(t))}{2}}-e^{-\frac{t^2 I^D(T(q))}{2}}\right|I_{B_n\cap R_n}(t)\to 0.\label{Dposterior:eq:aprox-MDE_MDAP-sup-1a}
\end{equation}
To that end, we prove that for each $0<\epsilon<I^D(T(q))$,
\begin{equation}\label{Dposterior:eq:lim-MDAP}
\lim_{n\to\infty}\sup_{t\in\R} \left|e^{-\frac{t^2 I^D(\theta_n'(t))}{2}}-e^{-\frac{t^2 I^D(T(q))}{2}}\right|I_{B_n\cap R_n}(t)\leq \frac{\epsilon}{I^D(T(q))-\epsilon},
\end{equation}
thus, by taking limit as $\epsilon\to 0$,  \eqref{Dposterior:eq:aprox-MDE_MDAP-sup-1a} follows. For each $t\in B_{n}\cap R_n$, let consider the function $h_t(x)=e^{-\frac{t^2}{2}x}$, $x\geq 0$, which satisfies $h_t'(x)=-\frac{t^2}{2}e^{-\frac{t^2}{2}x}$. From the mean value theorem, for each $t\in B_{n}\cap R_n$,
\begin{equation*}
\left|e^{-\frac{t^2 I^D(\theta_n'(t))}{2}}-e^{-\frac{t^2 I^D(T(q))}{2}}\right|\leq \frac{t^2}{2}e^{-\frac{t^2}{2}a_{t,n}} |I^D(\theta_n'(t))-I^D(T(q))|,
\end{equation*}
for $a_{t,n}$ between $I^D(\theta_n'(t))$ and $I^D(T(q))$. Thus, $a_{t,n}>I^D(T(q))-\epsilon$, and $e^{-\frac{t^2}{2}a_{t,n}}\leq e^{-\frac{t^2}{2}(I^D(T(q))-\epsilon)}$, where $\frac{t^2}{2}e^{-\frac{t^2}{2}(I^D(T(q))-\epsilon)}$ is a function bounded by $\frac{1}{I^D(T(q))-\epsilon}$. As a consequence,
\begin{equation*}
\sup_{t\in B_{n}\cap R_n}\left|e^{-\frac{t^2 I^D(\theta_n'(t))}{2}}-e^{-\frac{t^2 I^D(T(q))}{2}}\right|\leq \frac{\epsilon}{I^D(T(q))-\epsilon},
\end{equation*}
and \eqref{Dposterior:eq:lim-MDAP} follows.

For the forth term in \eqref{Dposterior:eq:aprox-MDE_MDAP-sup-conv}, one has $\sup_{t\in\R} h(t)I_{B_n^c}(t)= h(\delta\Delta_{n-1}^{1/2})\to 0$.

From \eqref{Dposterior:eq:lim-sup-R}, one has that $|\overline{\pi}_D^{n}(t_{n}^*|q)- h(t_{n}^*)|\to 0$, and $\overline{\pi}_D^{n}(t_{n}^*|q)\to h(0)$. The former is immediate from the inequality $|\overline{\pi}_D^{n}(t_{n}^*|q)- h(t_{n}^*)|\leq\sup_{t\in\R} |\overline{\pi}_D^{n}(t|q)- h(t)|$, and the latter follows from $|\overline{\pi}_D^{n}(t_{n}^*|q)- h(0)|\leq \sup_{t\in\R} |\overline{\pi}_D^{n}(t|q)- h(t)|$; now, it is straightforward that $h(t_{n}^*)\to h(0)$.

If $\{t_{n}^*\}_{n\in\N}$ does not converge to 0, then there exists $\varepsilon >0$, such that for each $N_0\in\N$, $|t_{N}^*|>\varepsilon$, for some $N>N_0$. Consequently, $|h(t_{N}^*)-h(0)|>h(0)-h(\varepsilon)>0$. Therefore, for $\nu=h(0)-h(\varepsilon)$, and each $N_0\in\N$, there exists $N>N_0$ satisfying $|h(t_{N}^*)-h(0)|>\nu$, which contradicts $h(t_{n}^*)\to h(0)$. This completes the proof of the theorem.

\end{Prf}

%%%%%%%%%%%%%%%%%%%%%%%%%%%%%%%%%%%%%%%%%%%%%%%%%%%%%%%%%%%%%%%%%%%
\subsection{Proof of asymptotic properties of the EDAP and MDAP estimators}\label{Dposterior:ap:asymp}
%%%%%%%%%%%%%%%%%%%%%%%%%%%%%%%%%%%%%%%%%%%%%%%%%%%%%%%%%%%%%%%%%%%

For simplicity, we will assume that $P[Z_n\to\infty]=1$. Moreover, unless specified otherwise, we will assume that all the limits are taken as $n\to\infty$ and we shall keep the same notation throughout this section. Let fix $\omega\in\{Z_n\to\infty\}$ and henceforth, in order to lighten the notation, we will consider that all the random variables are evaluated at $\omega$.

\vspace{0.75cm}
%%%%%%%%%%%%%%%%%%%%%%%%%%%%%%%%%%%%%%%%%%%%%%%%%%%%%%%%%%%%%%%%%%%
\begin{Prf}[Theorem \ref{Dposterior:thm:consistency}]
In order to prove this theorem, we are to make use of next lemma.

\begin{lem}\label{Dposterior:lemma:wn}
Under the hypotheses of Theorem \ref{Dposterior:thm:consistency}, let write for $t\in\R$,
\begin{equation*}
w_n(t)=\pi\left(\hat{\theta}_n^D + \frac{t}{\Delta_{n-1}^{1/2}}\right)e^{-\Delta_{n-1}\left(D\left(\hat{p}_n,\hat{\theta}_n^D+\frac{t}{\Delta_{n-1}^{1/2}}\right)- D(\hat{p}_n,\hat{\theta}_n^D)\right)}-\pi(\theta_p)e^{-\frac{1}{2}t^2 I^D(\theta_p)}.
\end{equation*}
Then, as $n\to\infty$,
\begin{enumerate}[label=(\roman*),ref=\emph{(\roman*)}]
\item $\int |w_n(t)|dt\rightarrow 0$\quad a.s. on $\{Z_n\to\infty\}$.\label{Dposterior:lemma:i}
\item $\int |t| |w_n(t)|dt\rightarrow 0$\quad a.s. on $\{Z_n\to\infty\}.$\label{Dposterior:lemma:ii}
\end{enumerate}
\end{lem}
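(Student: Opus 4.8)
The plan is to reproduce the local expansion of Lemma~\ref{Dposterior:lem:posterior-t}, but now centred at the random minimiser $\hat{\theta}_n^D=T(\hat{p}_n)$ and driven by the random distribution $\hat{p}_n$, and then to upgrade the resulting pointwise convergence $w_n(t)\to0$ to the two $\mathcal{L}^1$-type statements by a dominated-convergence argument. First I would record that, since $\hat{\theta}_n^D\in int(\Theta)$ is the minimiser of $D(\hat{p}_n,\cdot)$ (by \ref{Dposterior:alimit}~\ref{Dposterior:cond:MDE-ex-uniq-int} applied to $\hat{p}_n\in\Gamma$) and $D(\hat{p}_n,\cdot)$ is twice differentiable, one has $\dot{D}(\hat{p}_n,\hat{\theta}_n^D)=0$, so a second-order Taylor expansion gives, for $t\in R_n=\{\Delta_{n-1}^{1/2}(\theta-\hat{\theta}_n^D):\theta\in\Theta\}$,
\begin{equation*}
\Delta_{n-1}\Big(D\big(\hat{p}_n,\hat{\theta}_n^D+t\Delta_{n-1}^{-1/2}\big)-D(\hat{p}_n,\hat{\theta}_n^D)\Big)=\tfrac{1}{2}t^2 I_n^D(\theta_n'(t)),
\end{equation*}
with $\theta_n'(t)$ lying between $\hat{\theta}_n^D$ and $\hat{\theta}_n^D+t\Delta_{n-1}^{-1/2}$ and $I_n^D=\ddot{D}(\hat{p}_n,\cdot)$. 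For fixed $t$, the consistency $\hat{\theta}_n^D\to\theta_p$ from \eqref{Dposterior:eq:consistencyMDE} and $\Delta_{n-1}\to\infty$ force $\theta_n'(t)\to\theta_p$, whence $I_n^D(\theta_n'(t))\to I^D(\theta_p)$ by \ref{Dposterior:dis:assum}~\ref{Dposterior:cond:cont-second-deriv-D} and $\pi(\hat{\theta}_n^D+t\Delta_{n-1}^{-1/2})\to\pi(\theta_p)$ by continuity of the prior at $\theta_p$; this yields $w_n(t)\to0$ a.s. on $\{Z_n\to\infty\}$.

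Second, to pass to $\mathcal{L}^1$ in \ref{Dposterior:lemma:i} I would split $\int|w_n(t)|\,dt$ over $B_n\cap R_n$, $B_n^c\cap R_n$ and $R_n^c$, where $B_n=\{t\in\R:|t|\le\delta\Delta_{n-1}^{1/2}\}$; note that $\theta_p\in int(\Theta)$ makes $R_n\uparrow\R$. On $B_n\cap R_n$ one has $|\theta_n'(t)-\hat{\theta}_n^D|\le\delta$, so for $\delta$ small and $n$ large \ref{Dposterior:dis:assum}~\ref{Dposterior:cond:cont-second-deriv-D} gives $I_n^D(\theta_n'(t))\ge I^D(\theta_p)-\epsilon>0$; combined with the local boundedness of $\pi$ near $\theta_p$, this dominates $|w_n(t)|$ by the integrable function $\widetilde{M}e^{-t^2(I^D(\theta_p)-\epsilon)/2}$, and the dominated convergence theorem yields $\int_{B_n\cap R_n}|w_n|\to0$. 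On $R_n^c$ the first term of $w_n$ vanishes and only the Gaussian tail remains, which tends to $0$.

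The main obstacle is the region $B_n^c\cap R_n$, where the disparity must be controlled far from the minimiser and the assumption $p\in\Gamma^*$ has to be transferred from the fixed pair $(p,\theta_p)$ to the random pair $(\hat{p}_n,\hat{\theta}_n^D)$. Given $\delta>0$, \eqref{Dposterior:cond:sep} provides $\rho_0>0$ with $\inf_{|\theta-\theta_p|>\delta/2}D(p,\theta)-D(p,\theta_p)>\rho_0$. Using $\hat{\theta}_n^D\to\theta_p$, the continuity of $D(p,\cdot)$, the componentwise consistency of $\hat{p}_n$ from \ref{Dposterior:cond:consistencyMLE} (which gives $\|\hat{p}_n-p\|_1\to0$ by Scheffé's lemma), and the uniform bound $\sup_{\theta}|D(\hat{p}_n,\theta)-D(p,\theta)|\le C\|\hat{p}_n-p\|_1$ from \ref{Dposterior:dis:assum}~\ref{Dposterior:cond:unif-cont-disparity}, I would deduce that for $n$ large and every $\theta$ with $|\theta-\hat{\theta}_n^D|>\delta$,
\begin{equation*}
D(\hat{p}_n,\theta)-D(\hat{p}_n,\hat{\theta}_n^D)>\rho>0
\end{equation*}
for a fixed $\rho$ (e.g.\ $\rho=\rho_0/4$). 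Hence on $B_n^c\cap R_n$ the first term of $|w_n|$ is at most $\pi(\hat{\theta}_n^D+t\Delta_{n-1}^{-1/2})e^{-\rho\Delta_{n-1}}$, and changing variables back to $\theta$ bounds its integral by $\Delta_{n-1}^{1/2}e^{-\rho\Delta_{n-1}}\int_\Theta\pi(\theta)\,d\theta\le\Delta_{n-1}^{1/2}e^{-\rho\Delta_{n-1}}\to0$, while the Gaussian term again contributes a vanishing tail; combining the three regions proves \ref{Dposterior:lemma:i}.

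Finally, for \ref{Dposterior:lemma:ii} I would repeat the decomposition with the extra weight $|t|$: on $B_n\cap R_n$ the dominating function becomes $|t|\widetilde{M}e^{-t^2(I^D(\theta_p)-\epsilon)/2}$, still integrable, so dominated convergence applies, and on $R_n^c$ the weighted Gaussian tail vanishes. On $B_n^c\cap R_n$ the change of variables now produces a factor $\Delta_{n-1}e^{-\rho\Delta_{n-1}}\int_{|\theta-\hat{\theta}_n^D|>\delta}|\theta-\hat{\theta}_n^D|\,\pi(\theta)\,d\theta$, and the finiteness of this last integral is guaranteed by the finite first moment of the prior, assumption~\ref{Dposterior:mean:prior}, so the polynomial factor is again killed by $e^{-\rho\Delta_{n-1}}$. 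This completes the plan.
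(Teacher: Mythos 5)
Your proposal is correct and follows essentially the same route as the paper's proof of Lemma \ref{Dposterior:lemma:wn}: the same second-order Taylor expansion at $\hat{\theta}_n^D$ with $\dot{D}(\hat{p}_n,\hat{\theta}_n^D)=0$, the same three-region decomposition over $B_n\cap R_n$, $B_n^c\cap R_n$ and $R_n^c$ with dominated convergence on the central region, the same transfer of the $\Gamma^*$ separation \eqref{Dposterior:cond:sep} to the random pair $(\hat{p}_n,\hat{\theta}_n^D)$ via \ref{Dposterior:dis:assum}~\ref{Dposterior:cond:unif-cont-disparity} and the consistency \eqref{Dposterior:eq:consistencyMDE}, and the same use of \ref{Dposterior:mean:prior} after changing variables in part \ref{Dposterior:lemma:ii}. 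Your only addition---deriving $\|\hat{p}_n-p\|_1\to 0$ from componentwise consistency via Scheff\'e's lemma---is a detail the paper takes as already established.
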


\begin{proof}
Let denote $R_n=\{\Delta_{n-1}^{1/2}(\theta-\hat\theta_n^D):\theta\in\Theta\}$, $n\in\N$. Since $\theta_p\in int(\Theta)$, from \eqref{Dposterior:eq:consistencyMDE}, one has that $\hat{\theta}_n^D$ is eventually in $int(\Theta)$ and, as a result, in an analogous manner to that in Lemma \ref{Dposterior:lem:posterior-t}, one obtains $\cup_{n=1}^\infty R_n =\R$.

\ref{Dposterior:lemma:i} Observe that $w_n(t)I_{R_n^c}(t)=-\pi(\theta_p)e^{-\frac{1}{2}t^2 I^D(\theta_p)}I_{R_n^c}(t)$, consequently, using the dominated convergence theorem, since $I_{R_n^c}(t)\to 0$, $|w_n(t)I_{R_n^c}(t)|\leq \pi(\theta_p)e^{-\frac{1}{2}t^2 I^D(\theta_p)}$, for each $t\in\R$, and $\int e^{-\frac{1}{2}t^2 I^D(\theta_p)} dt<\infty$, one has that $\int_{R_n^c}w_n(t)dt\to 0$.

On the other hand, for each fixed $t\in\R$, using a first-order Taylor series expansion, one has
\begin{equation*}
D\left(\hat{p}_n,\hat{\theta}_n^D+\frac{t}{\Delta_{n-1}^{1/2}}\right)=D(\hat{p}_n,\hat{\theta}_n^D)+\frac{t}{\Delta_{n-1}^{1/2}}\dot{D}(\hat{p}_n,\hat{\theta}_n^D)+\frac{1}{2}\left(\frac{t}{\Delta_{n-1}^{1/2}}\right)^2\ddot{D}(\hat{p}_n,\theta_n'(t)),
\end{equation*}
where $\theta_n'(t)$ is a real number between $\hat{\theta}_n^D$ and $\hat{\theta}_n^D+\frac{t}{\Delta_{n-1}^{1/2}}$. Observe that $\theta_n'(t)\in\Theta$ if and only if $t\in R_n$; hence, for each $t\in\R$ fixed, there exists $n_0=n_0(t)$ such that $\theta_n'(t)\in\Theta$ for each $n\geq n_0$. Thus,
\begin{equation*}
w_n(t)=\pi\left(\hat{\theta}_n^D+\frac{t}{\Delta_{n-1}^{1/2}}\right)e^{-\frac{t^2}{2}I_n^D(\theta_n'(t))}-\pi\left(\theta_p\right)e^{-\frac{t^2}{2}I^D(\theta_p)}.
%&=&\pi\left(\hat{\theta}_n^D+\frac{t}{\Delta_{n-1}^{1/2}}\right)\left(e^{-\frac{t^2}{2}I_n^D(\theta_n')}-e^{-\frac{t^2}{2}I^D(\theta_p)}\right)\\
%&\phantom{=}&+\left(\pi\left(\hat{\theta}_n^D+\frac{t}{\Delta_{n-1}^{1/2}}\right)-\pi(\hat{\theta}_n^D)\right)e^{-\frac{t^2}{2}I^D(\theta_p)}\\
%&\phantom{=}&+\left(\pi(\hat{\theta}_n^D)-\pi(\theta_p)\right)e^{-\frac{t^2}{2}I^D(\theta_p)}.
\end{equation*}

For each $t\in\R$, from \eqref{Dposterior:eq:consistencyMDE}, one has that $\{\theta_n'(t)\}_{n\in\N}$ is a sequence that converges to $\theta_p$, and by \ref{Dposterior:dis:assum}~\ref{Dposterior:cond:cont-second-deriv-D}, $I_n^D(\theta_n'(t))\to I^D(\theta_p)$. Moreover, the continuity of $\pi(\cdot)$ and the fact that $\hat{\theta}_n^D+\frac{t}{\Delta_{n-1}^{1/2}}\to\theta_p$, for each $t\in\R$ fixed, guarantee that $\pi\left(\hat{\theta}_n^D+\frac{t}{\Delta_{n-1}^{1/2}}\right)\to\pi(\theta_p)$; hence, $w_n(t)\to 0$, for each $t\in\R$.

Let $0<\epsilon<I^D(\theta_p)$. Since $\pi(\cdot)$ is continuous, there exists $\delta_1=\delta_1(\epsilon)$ such that if $|\theta-\theta_p|\leq\delta_1$, then $|\pi(\theta)-\pi(\theta_p)|\leq\epsilon$. From \ref{Dposterior:dis:assum}¬ \ref{Dposterior:cond:cont-second-deriv-D}, one also has that there exist $n_1=n_1(\epsilon)$, and $\delta_2=\delta_2(\epsilon)$ such that if $|\theta-\theta_p|\leq\delta_2$, and $n\geq n_1$, then $|I_n^D(\theta)-I^D(\theta_p)|\leq\epsilon$. In addition, we can safely assume that $\delta_1=\delta_2$. On the other hand, from \eqref{Dposterior:eq:consistencyMDE}, the existence of $n_2=n_2(\delta)\in\N$ such that $|\hat{\theta}_n^D-\theta_p|\leq\delta_1/2$, for each $n\geq n_2$, is guaranteed.

Let fix $\delta=\delta_1/2$, $N_0=\max(n_1,n_2)=N_0(\delta_1,\epsilon)$, and define the sets $B_n=\{t\in\R:|t|\leq \delta\Delta_{n-1}^{1/2}\}$, $n\in\N$. Firstly, we shall prove $\int_{B_n\cap R_n}|w_n(t)|dt\to 0$.

Note that for each $t\in B_n\cap R_n$, $n\geq N_0$, $\left|\pi\left(\hat{\theta}_n^D+\frac{t}{\Delta_{n-1}^{1/2}}\right)-\pi(\theta_p)\right|\leq\epsilon$, and $|I_n^D(\theta_n'(t))-I^D(\theta_p)|\leq\epsilon$. Consequently, for each $n\geq N_0$,
\begin{eqnarray*}
|w_n(t)|I_{B_{n}\cap R_n}(t)&\leq&\left(\pi\left(\hat{\theta}_n^D+\frac{t}{\Delta_{n-1}^{1/2}}\right)e^{-\frac{t^2}{2}I_n^D(\theta_n'(t))}+\pi\left(\theta_p\right)e^{-\frac{t^2}{2}I^D(\theta_p)}\right)I_{B_{n}\cap R_n}(t)\\
%&\leq& \left(\pi(\theta_p)+\epsilon\right)e^{-\frac{t^2}{2}\left(I^D(\theta_p)-\epsilon\right)}+\pi(\theta_p)e^{-\frac{1}{2}t^2 I^D(\theta_p)}\\
&\leq& 2\left(\pi(\theta_p)+\epsilon\right)e^{-\frac{t^2}{2}\left(I^D(\theta_p)-\epsilon\right)},
\end{eqnarray*}
with $\int e^{-\frac{t^2}{2}\left(I^D(\theta_p)-\epsilon\right)} dt =\left(\frac{2\pi}{I^D(\theta_p)-\epsilon}\right)^{1/2}<\infty$.

Since $w_n(t)\to 0$, in particular, $|w_n(t)|I_{B_n\cap R_n}(t)\to 0$, and by applying the dominated convergence theorem one obtains $\int_{B_n\cap R_n}|w_n(t)|dt\to 0$.

Now, we shall prove $\int_{B_n^c\cap R_n}|w_n(t)|dt\to 0$. From the previous Taylor series expansion, for $t\in\R$, one has
\begin{equation*}
D\left(\hat{p}_n,\hat{\theta}_n^D+\frac{t}{\Delta_{n-1}^{1/2}}\right)-D(\hat{p}_n,\hat{\theta}_n^D)=\frac{I_n^D(\theta_n'(t))}{2}\frac{t^2}{\Delta_{n-1}}.
\end{equation*}

Bearing in mind \eqref{Dposterior:cond:sep}, \eqref{Dposterior:eq:consistencyMDE}, \ref{Dposterior:dis:assum}~\ref{Dposterior:cond:unif-cont-disparity}, and the fact that $\hat{p}_n\to p$ in $l_1$, the existence of some $\rho>0$ and $N_*\in\N$ satisfying
\begin{equation}\label{Dposterior:eq:inf-lem-i}
\inf_{t\in B_n^c\cap R_n} D\left(\hat{p}_n,\hat{\theta}_n^D+\frac{t}{\Delta_{n-1}^{1/2}}\right)-D(\hat{p}_n,\hat{\theta}_n^D)>\rho,
\end{equation}
for $n\geq N_*$ is guaranteed, and as a consequence, for each $t\in B_n^c\cap R_n$, $n\geq N_*$, $I_n^D(\theta_n'(t))>\frac{2}{t^2}\rho\Delta_{n-1}$, therefore $e^{-\frac{t^2}{2}I_n^D(\theta_n'(t))}\leq e^{-\rho\Delta_{n-1}}$. Hence, for each $n\geq N_*$, one has
{\small\begin{eqnarray*}
\int_{B_n^c\cap R_n} |w_n(t)|dt&\leq& e^{-\Delta_{n-1}\rho}\int_{B_n^c\cap R_n}\pi\left(\hat{\theta}_n^D + \frac{t}{\Delta_{n-1}^{1/2}}\right) dt+\pi(\theta_p)\int_{B_n^c\cap R_n}e^{-\frac{1}{2}t^2 I^D(\theta_p)} dt\\
&\leq&e^{-\Delta_{n-1}\rho}\Delta_{n-1}^{1/2}+ \pi(\theta_p)\left(\frac{2\pi}{I^D(\theta_p)}\right)^{1/2} P[|Z|>\delta\Delta_{n-1}^{1/2}],
\end{eqnarray*}}
where $Z$ follows a normal distribution with mean equal to 0 and variance equal to $I^D(\theta_p)^{-1}$; therefore, $\int_{B_n^c\cap R_n} |w_n(t)|dt\rightarrow 0$.

\ref{Dposterior:lemma:ii} Firstly, note that for each $t\in\R$ fixed, $|t| |w_n(t)|\to 0$. As done before, $\int_{R_n^c} |t| |w_n(t)|dt\rightarrow 0$, thus, it is enough to prove that $\int_{B_n\cap R_n} |t| |w_n(t)|dt\rightarrow 0$, and $\int_{B_n^c\cap R_n} |t| |w_n(t)|dt\rightarrow 0$.

On the one hand, for each $n\geq N_0$,
\begin{equation*}
|t||w_n(t)|I_{B_n\cap R_n}(t)\leq 2|t| \left(\pi(\theta_p)+\epsilon\right)e^{-\frac{t^2}{2}\left(I^D(\theta_p)-\epsilon\right)},
\end{equation*}
and $\int_{B_n\cap R_n} |t| |w_n(t)|dt\rightarrow 0$ follows from dominated convergence theorem and bearing in mind that
\begin{equation*}
\int |t|e^{-\frac{t^2}{2}\left(I^D(\theta_p)-\epsilon\right)} dt = \frac{2}{I^D(\theta_p)-\epsilon}.
\end{equation*}

On the other hand, for each $n\geq N_*$, one has
\begin{eqnarray*}
\int_{B_n^c\cap R_n} |t||w_n(t)|dt&\leq & e^{-\Delta_{n-1}\rho}\int_{B_n^c\cap R_n}|t|\pi\left(\hat{\theta}_n^D + \frac{t}{\Delta_{n-1}^{1/2}}\right) dt\\
&\phantom{=}&+\pi(\theta_p)\int_{B_n^c\cap R_n}|t|e^{-\frac{1}{2}t^2 I^D(\theta_p)} dt\\
&\leq & e^{-\Delta_{n-1}\rho}\Delta_{n-1}^{1/2}\int_\Theta|\theta|\pi\left(\theta\right) d\theta+e^{-\Delta_{n-1}\rho}\Delta_{n-1}|\hat{\theta}_n^D|\\
&\phantom{=}&+ \pi(\theta_p)\left(\frac{2\pi}{I^D(\theta_p)}\right)^{1/2} E[|Z|I_{B_n^c}],
\end{eqnarray*}
where $Z$ follows a normal distribution with mean equal to 0 and variance equal to $I^D(\theta_p)^{-1}$; therefore, $\int_{B_n^c\cap R_n} |w_n(t)|dt\rightarrow 0$.
\end{proof}

\vspace{2ex}

Now, we show the proof of Theorem \ref{Dposterior:thm:consistency}.

\ref{Dposterior:thm:consistency-l1-param} From the fact that $\hat{p}_n\to p$ in $l_1$, \ref{Dposterior:dis:assum}~\ref{Dposterior:cond:unif-cont-disparity} and the continuity of $D(q,\cdot)$ on $\Theta$ for each $q\in\Gamma$, one has that $\sup_{\theta\in\Theta}|D(\hat{p}_n,\theta)-D(p,\theta)|\to 0$, hence $D(\hat{p}_n,\hat{\theta}_n^D)\to D(p,\theta_p)$.

Recall that $\dot{D}(\hat{p}_n,\hat{\theta}_n^D)= \dot{D}(p,\theta_p)=0$, and $I_n^D(\hat{\theta}_n^D)\to I^D(\theta_p)$.

For each $t\in R_n$, we can write
\begin{equation*}
\overline{\pi}_D^{n}(t|\hat{p}_n)=K_n^{-1}\pi\left(\hat{\theta}_n^D + \frac{t}{\Delta_{n-1}^{1/2}}\right)e^{-\Delta_{n-1}\left(D\left(\hat{p}_n,\hat{\theta}_n^D+\frac{t}{\Delta_{n-1}^{1/2}}\right)-D\left(\hat{p}_n,\hat{\theta}_n^D\right)\right)},
\end{equation*}
where $K_n=\bigint_{R_n} \pi\left(\hat{\theta}_n^D + \frac{t}{\Delta_{n-1}^{1/2}}\right)e^{-\Delta_{n-1}\left(D\left(\hat{p}_n,\hat{\theta}_n^D+\frac{t}{\Delta_{n-1}^{1/2}}\right)- D(\hat{p}_n,\hat{\theta}_n^D)\right)} dt$, and \linebreak $\overline{\pi}_D^{n}(t|\hat{p}_n)=0$, for $t\in R_n^c$. As a consequence,
\begin{align*}
\int &\Big|\overline{\pi}_D^{n}(t|\hat{p}_n)-\left(\frac{I^D(\theta_p)}{2\pi}\right)^{1/2} e^{-\frac{t^2}{2}I^D(\theta_p)}\Big|dt=\int_{R_n^c}\left(\frac{I^D(\theta_p)}{2\pi}\right)^{1/2} e^{-\frac{t^2}{2}I^D(\theta_p)}dt\\
&\phantom{=} +K_n^{-1}\int_{R_n}\Big|\pi\left(\hat{\theta}_n^D + \frac{t}{\Delta_{n-1}^{1/2}}\right)e^{-\Delta_{n-1}\left(D\left(\hat{p}_n,\hat{\theta}_n^D+\frac{t}{\Delta_{n-1}^{1/2}}\right)- D(\hat{p}_n,\hat{\theta}_n^D)\right)}\\
&\phantom{=}-K_n\left(\frac{I^D(\theta_p)}{2\pi}\right)^{1/2} e^{-\frac{t^2}{2}I^D(\theta_p)}\Big|dt\\
&\leq \int_{R_n^c}\left(\frac{I^D(\theta_p)}{2\pi}\right)^{1/2} e^{-\frac{t^2}{2}I^D(\theta_p)}dt+K_n^{-1}(I_{1n}+I_{2n}),
\end{align*}
where the first integral converges to 0 by applying the dominated convergence theorem, and

\begin{eqnarray*}
I_{1n}&=&\int_{R_n} |w_n(t)|dt,\\
I_{2n}&=& \int_{R_n} \Big|\pi(\theta_p)e^{-\frac{1}{2}t^2 I^D(\theta_p)}-K_n\left(\frac{I^D(\theta_p)}{2\pi}\right)^{1/2} e^{-\frac{t^2}{2}I^D(\theta_p)}\Big|dt.
\end{eqnarray*}

In the proof of Lemma \ref{Dposterior:lemma:wn}, we showed that $K_n\to \pi(\theta_p)\left(\frac{2\pi}{I^D(\theta_p)}\right)^{1/2}\neq 0$; hence, it is enough to prove that $I_{in}\to 0$, for $i=1,2$. From Lemma \ref{Dposterior:lemma:wn}, we also have that $I_{1n}\to 0$, and $\Big|\pi(\theta_p)-K_n\left(\frac{I^D(\theta_p)}{2\pi}\right)^{1/2}\Big|\bigint_{R_n} e^{-\frac{t^2}{2} I^D(\theta_p)}dt\to 0$.

\ref{Dposterior:thm:consistency-l1-param-t} We have
\begin{align*}
\int&|t|\Big|\overline{\pi}_D^{n}(t|\hat{p}_n)-\left(\frac{I^D(\theta_p)}{2\pi}\right)^{1/2} e^{-\frac{t^2}{2}I^D(\theta_p)}\Big|dt=\int_{R_n^c}|t|\left(\frac{I^D(\theta_p)}{2\pi}\right)^{1/2} e^{-\frac{t^2}{2}I^D(\theta_p)}dt\\
&\phantom{=} +K_n^{-1}\int_{R_n}|t|\Big|\pi\left(\hat{\theta}_n^D + \frac{t}{\Delta_{n-1}^{1/2}}\right)e^{-\Delta_{n-1}\left(D\left(\hat{p}_n,\hat{\theta}_n^D+\frac{t}{\Delta_{n-1}^{1/2}}\right)- D(\hat{p}_n,\hat{\theta}_n^D)\right)}\\
&\phantom{=}-K_n\left(\frac{I^D(\theta_p)}{2\pi}\right)^{1/2} e^{-\frac{t^2}{2}I^D(\theta_p)}\Big|dt\\
&\leq \int_{R_n^c}|t|\left(\frac{I^D(\theta_p)}{2\pi}\right)^{1/2} e^{-\frac{t^2}{2}I^D(\theta_p)}dt+K_n^{-1}(\bar{I}_{1n}+\bar{I}_{2n}),
\end{align*}
where again the first integral converges to 0 by applying the dominated convergence theorem, and
\begin{eqnarray*}
\bar{I}_{1n}&=&\int_{R_n} |t||w_n(t)|dt\\
\bar{I}_{2n}&=& \int_{R_n} |t|\Big|\pi(\theta_p)e^{-\frac{1}{2}t^2 I^D(\theta_p)}-K_n\left(\frac{I^D(\theta_p)}{2\pi}\right)^{1/2} e^{-\frac{t^2}{2}I^D(\theta_p)}\Big|dt.
\end{eqnarray*}

From Lemma \ref{Dposterior:lemma:wn}~\ref{Dposterior:lemma:ii}, we also have that $\bar{I}_{1n}\to 0$, and bearing in mind that $K_n\to \pi(\theta_p)\left(\frac{2\pi}{I^D(\theta_p)}\right)^{1/2}\neq 0$, we obtain $\bar{I}_{2n}=\Big|\pi(\theta_p)-K_n\left(\frac{I^D(\theta_p)}{2\pi}\right)^{1/2}\Big|\int_{R_n} |t|e^{-\frac{t^2}{2} I^D(\theta_p)}dt\to 0$.

\ref{Dposterior:thm:consistency-l1-estim} Note that
\begin{align*}
\int \bigg|\overline{\pi}_D^{n}(t|\hat{p}_n) &-\left(\frac{I_n^D(\hat{\theta}_n^D)}{2\pi}\right)^{1/2}e^{-\frac{1}{2}t^2 I_n^D(\hat{\theta}_n^D)}\bigg|dt\leq \\
&\leq\int \bigg|\overline{\pi}_D^{n}(t|\hat{p}_n) -\left(\frac{I^D(\theta_p)}{2\pi}\right)^{1/2}e^{-\frac{1}{2}t^2 I^D(\theta_p)}\bigg|dt\\
&\phantom{\leq}+\int\bigg|\left(\frac{I^D(\theta_p)}{2\pi}\right)^{1/2}e^{-\frac{1}{2}t^2 I^D(\theta_p)} -\left(\frac{I_n^D(\hat{\theta}_n^D)}{2\pi}\right)^{1/2}e^{-\frac{1}{2}t^2 I_n^D(\hat{\theta}_n^D)}\bigg|dt
\end{align*}

From \ref{Dposterior:thm:consistency-l1-param}, we have that the first integral converges to 0. Since $I_n^D(\hat{\theta}_n^D)\to I^D(\theta_p)$, using Glick's theorem (see \cite{Devroye-Gyorfi}, p.10), we also have that the second integral converges to 0.

\end{Prf}

%%%%%%%%%%%%%%%%%%%%%%%%%%%%%%%%%%%%%%%%%%%%%%%%%%%%%%%%%%%%%%%%%%%
\begin{Prf}[Theorem \ref{Dposterior:thm:consistency-EDAP}]
Let $t=\Delta_{n-1}^{1/2}(\theta-\hat{\theta}^D_n)$ and $\overline{\pi}_D^{n}(t|\hat{p}_n)$ its $D$-posterior density function at $\hat{p}_n$.

\ref{Dposterior:thm:conv-MDE-EDAP} Bearing in mind that $\theta_n^{*D}=\int_\Theta\theta \pi_D^n(\theta|\hat{p}_n)d\theta=\int_{R_n}\left(\hat{\theta}_n^D+\frac{t}{\Delta_{n-1}^{1/2}}\right)\overline{\pi}_D^{n}(t|\hat{p}_n)dt$, and Theorem \ref{Dposterior:thm:consistency}~\ref{Dposterior:thm:consistency-l1-param-t}, we obtain
\begin{equation*}
\Delta_{n-1}^{1/2}(\theta_n^{*D}-\hat\theta_n^D)=\int_{R_n} t \overline{\pi}_D^{n}(t|\hat{p}_n)dt\rightarrow\left(\frac{I^D(\theta_p)}{2\pi}\right)^{1/2}\int te^{-\frac{1}{2}t^2 I^D(\theta_p)}dt=0\quad a.s.,
\end{equation*}
on $\{Z_n\to\infty\}$.

\ref{Dposterior:thm:normality-EDAP} Note that $\Delta_{n-1}^{1/2}(\theta_n^{*D}-\theta_p)=\Delta_{n-1}^{1/2}(\theta_n^{*D}-\hat\theta_n^D)+\Delta_{n-1}^{1/2}(\hat\theta_n^D-\theta_p)$. From \eqref{Dposterior:eq:normalityMDE} and \ref{Dposterior:thm:conv-MDE-EDAP}, making use of Slutsky's theorem, one has
$$\Delta_{n-1}^{1/2}(\theta_n^{*D}-\theta_p)\xrightarrow[n\to\infty]{d} N(0,I^D(\theta_p)^{-1}).$$
\end{Prf}

%%%%%%%%%%%%%%%%%%%%%%%%%%%%%%%%%%%%%%%%%%%%%%%%%%%%%%%%%%%%%%%%%%%

\begin{Prf}[Theorem \ref{Dposterior:thm:lim-sup-D-post-estim}]
The proof is analogous of that in Theorem \ref{Dposterior:thm:aprox-MDE}~\ref{Dposterior:thm:aprox-MDE-ii-MDAP}. Let $R_{n}=\{\Delta_{n-1}^{1/2}(\theta-\hat{\theta}_n^D):\theta\in\Theta\}$. With analogous arguments to those in Lemma \ref{Dposterior:lem:posterior-t}, one can prove that for each $t\in\R$,
\begin{align*}
\overline{\pi}_D^{n}(t|\hat{p}_n)&=\left(\frac{I_n^D(\hat{\theta}_n^D)}{2\pi}\right)^{1/2}e^{-\frac{t^2 I_n^D(\theta_n'(t))}{2}}\bigg(1+\frac{tb_{1n}}{\Delta_{n-1}^{1/2}}+\frac{t^2b_{2n}}{2\Delta_{n-1}}+\frac{t^3 \pi'''(\theta_n^*(t))}{6\Delta_{n-1}^{3/2}\pi(\hat{\theta}_n^D))}\bigg)\\
&\phantom{=}\cdot(1+C_n)I_{R_n}(t),
\end{align*}
a.s. on $\{Z_n\to\infty\}$, where $b_{1n}=\frac{\pi'\left(\hat{\theta}_n^D\right)}{\pi\left(\hat{\theta}_n^D\right)}$, $b_{2n}=\frac{\pi''\left(\hat{\theta}_n^D\right)}{\pi\left(\hat{\theta}_n^D\right)}$, $\theta_n'(t)$ and $\theta_n^*(t)$ are both points between $\hat{\theta}_n^D$ and $\hat{\theta}_n^D+\frac{t}{\Delta_{n-1}^{1/2}}$, for each $n\in\N$, and $t\in\R$, and  $\{C_n\}_{n\in\N}$ is a sequence of real valued random variables converging to 0 a.s.

\vspace{0.5cm}

Moreover, the following inequality holds
\begin{align}
\sup_{t\in\R} |\overline{\pi}_D^{n}(t|\hat{p}_n)- \varphi(t;\theta_p)| &\leq\sup_{t\in\R} \left|\overline{\pi}_D^{n}(t|\hat{p}_n)-\frac{h_n(t)}{\pi(\hat{\theta}_n^D)}\left(\frac{I_n^D(\hat{\theta}_n^D)}{2\pi}\right)^{1/2}\right|I_{B_n}(t)\nonumber\\
&\phantom{\leq}+\sup_{t\in\R} \left|\left(\frac{I_n^D(\hat{\theta}_n^D)}{2\pi}\right)^{1/2}\frac{h_n(t)}{\pi(\hat{\theta}_n^D)}- \varphi(t;\theta_p)\right|I_{B_n}(t)\nonumber\\
&\phantom{\leq}+ \sup_{t\in\R} \overline{\pi}_D^{n}(t|\hat{p}_n)I_{B_n^c}(t)+ \sup_{t\in\R} \varphi(t;\theta_p)I_{B_n^c}(t),\label{Dposterior:eq:aprox-MDE_MDAP-sup-conv-asymp}
\end{align}
where $h_n(t)=\pi\bigg(\hat{\theta}_n^D+\frac{t}{\Delta_{n-1}^{1/2}}\bigg)e^{-\Delta_{n-1}D\left(\hat{p}_n,\hat{\theta}_n^D+\frac{t}{\Delta_{n-1}^{1/2}}\right)+\Delta_{n-1}D\left(\hat{p}_n,\hat{\theta}_n^D\right)}I_{R_n}(t)$, for each $n\in\N$ and $t\in\R$.

Following the same arguments as in Theorem \ref{Dposterior:thm:aprox-MDE}~\ref{Dposterior:thm:aprox-MDE-ii-MDAP}, one can verify that all the terms in \eqref{Dposterior:eq:aprox-MDE_MDAP-sup-conv-asymp} converge to 0 using that $\Delta_n\to\infty$, $\pi(\hat{\theta}_n^D)\to\pi(\theta_p)$, $I_n^D(\hat{\theta}_n^D)\to I^D(\theta_p)$, $J_n=\int h_n(t) dt$, $n\in\N$, satisfies $\frac{1}{J_n}-\frac{1}{\pi(\hat{\theta}_n^D)}\left(\frac{I_n^D(\hat{\theta}_n^D)}{2\pi}\right)^{1/2}\to 0$, $\pi(\cdot)$ is bounded, $\sup_{t\in\R} e^{-\frac{t^2 I^D(\theta_p)}{2}}=1$, and $|t|^i e^{-\frac{t^2 (I^D(\theta_p)-\epsilon)}{2}}$ is a bounded function for $i=1,2,3$.

\end{Prf}

%%%%%%%%%%%%%%%%%%%%%%%%%%%%%%%%%%%%%%%%%%%%%%%%%%%%%%%%%%%%%%%%%%%
\begin{Prf}[Theorem \ref{Dposterior:thm:consistency-MDAP}]
The proof is analogous of that in Theorem \ref{Dposterior:thm:aprox-MDE}~\ref{Dposterior:thm:aprox-MDE-ii-MDAP}.

To prove \ref{Dposterior:thm:conv-MDE-MDAP}, observe that there exists $n_0\in\N$ such that $\widetilde{T}_n(\hat{p}_n)$ exists and is unique for $n\geq n_0$.
Since $\hat{p}_n\in\Gamma_{n}^+$ for $n\geq n_0$, one has $\widetilde{T}_n(\hat{p}_n)=\hat{\theta}_n^D + \frac{t_{n}^*}{\Delta_{n-1}^{1/2}}$,
with $t_{n}^*=\arg\max_{t\in R_n} \overline{\pi}_D^{n}(t|\hat{p}_n)=\arg\max_{t\in\R} \overline{\pi}_D^{n}(t|\hat{p}_n)$; hence, it is sufficient to prove that $t_{n}^*\to 0$.

Note that, since $\varphi(t;\theta_p)$ is the density function of a normal distribution with mean equal to 0 and variance equal to $I^D(\theta_p)^{-1}$, it has a unique maximum at $t=0$; thus, we shall prove that $\arg\max_{t\in \R}\overline{\pi}_D^{n}(t|\hat{p}_n)\to\arg\max_{t\in \R}\varphi(t;\theta_p)$.

From Theorem \ref{Dposterior:thm:lim-sup-D-post-estim}, one has that $|\overline{\pi}_D^{n}(t_{n}^*|\hat{p}_n)- \varphi(t_{n}^*;\theta_p)|\to 0$, and $|\overline{\pi}_D^{n}(t_{n}^*|\hat{p}_n)-\varphi(0;\theta_p)|\to 0$. Now, it is straightforward that $|\varphi(t_{n}^*;\theta_p)- \varphi(0;\theta_p)|\to 0$, and as a result, $t_{n}^*\to 0$.

\ref{Dposterior:thm:normality-MDAP} is straightforward, using Slutsky's theorem, from \ref{Dposterior:dis:assum}~\ref{Dposterior:cond:MDE}, \ref{Dposterior:thm:conv-MDE-MDAP} and
\begin{equation*}
\Delta_{n-1}^{1/2}(\theta_n^{+D}-\theta_p)=\Delta_{n-1}^{1/2}(\theta_n^{+D}-\hat\theta_n^D)+\Delta_{n-1}^{1/2}(\hat\theta_n^D-\theta_p).
\end{equation*}
\end{Prf}

%%%%%%%%%%%%%%%%%%%%%%%%%%%%%%%%%%%%%%%%%%%%%%%%%%%%%%%%%%%%%%%%%%%
\subsection{Proof of robustness properties of the EDAP and MDAP estimators}\label{Dposterior:ap:robust}
%%%%%%%%%%%%%%%%%%%%%%%%%%%%%%%%%%%%%%%%%%%%%%%%%%%%%%%%%%%%%%%%%%%

As it was done in Section \ref{Dposterior:ap:asymp}, we will assume that $P[Z_n\to\infty]=1$ and we shall keep the same notation throughout this section. Let fix $\omega\in\{Z_n\to\infty\}$ and henceforth, in order to lighten the notation, we will consider that all the random variables are evaluated at $\omega$.

%%%%%%%%%%%%%%%%%%%%%%%%%%%%%%%%%%%%%%%%%%%%%%%%%%%%%%%%%%%%%%%%%%%%
\newpage

\begin{Prf}[Theorem \ref{Dposterior:thm:influence-function}]
First of all, note that for each $L\in\N_0$ and $n\in\N$, the influence function satisfies
\begin{align*}
IF(L,\overline{T}_n,p)=
%&\lim_{\alpha\to 0}\alpha^{-1}[\overline{T}_n(p(\theta_0,\alpha,L))-\overline{T}_n(p(\theta_0))]\nonumber\\
\frac{\partial}{\partial\alpha}\left(\frac{\int_\Theta\theta e^{-\Delta_{n-1}D(p(\theta_0,\alpha,L),\theta)}\pi(\theta)d\theta}{\int_\Theta e^{-\Delta_{n-1}D(p(\theta_0,\alpha,L),\theta)}\pi(\theta)d\theta}\right)_{\big|\alpha=0}.\label{Dposterior:eq:IF-deriv}
\end{align*}

Let denote $M=\sup_{\delta\in[-1,\infty)}|G(\delta)|<\infty$, $\overline{M}=\sup_{\delta\in[-1,\infty)}|G'(\delta)|<\infty$, and $F_{\theta_0}(\theta,\alpha,L)=e^{-\Delta_{n-1}D(p(\theta_0,\alpha,L),\theta)}\pi(\theta)$, for each $\theta\in\Theta$, $\alpha\in(0,1)$, $L\in\N_0$. Observe that, $F_{\theta_0}(\theta,\alpha,L)\geq e^{-\Delta_{n-1} M}\pi(\theta)$, and as a consequence,
\begin{equation*}
\int_{\Theta} F_{\theta_0}(\theta,\alpha,L)d\theta \geq e^{-\Delta_{n-1} M} \int_{\Theta}\pi(\theta)d\theta=e^{-\Delta_{n-1} M},
\end{equation*}
moreover,
\begin{eqnarray*}
\bigg|\frac
{\partial}{\partial\alpha}\left(\frac{\theta F_{\theta_0}(\theta,\alpha,L)}{\int_\Theta F_{\theta_0}(\theta,\alpha,L)d\theta}\right)\bigg|&=&\bigg|\frac{\theta\frac{\partial}{\partial\alpha} F_{\theta_0}(\theta,\alpha,L)}{\int_\Theta F_{\theta_0}(\theta,\alpha,L)d\theta}\\
&\phantom{=}& -\frac{\theta F_{\theta_0}(\theta,\alpha,L)\frac{\partial}{\partial\alpha}\left(\int_\Theta F_{\theta_0}(\theta,\alpha,L) d\theta\right)}{\left(\int_\Theta F_{\theta_0}(\theta,\alpha,L)d\theta\right)^2}\bigg|\\
&\leq& 4\overline{M}\Delta_{n-1}|\theta|\pi(\theta)e^{2\Delta_{n-1}M}.
\end{eqnarray*}
Let denote $C_L(p(\theta_0),\theta)=\frac{\partial}{\partial\alpha}D(p(\theta_0,\alpha,L),\theta)_{|_{\alpha=0}}$, and the expectation with respect the $D$-posterior of $\theta$ given the probability distribution $p=p(\theta_0)$ by $E_{\pi_D^n(\theta|p)}[\cdot]$. One obtains
\begin{align*}
IF(L,\overline{T}_n,p)&=\Delta_{n-1}\frac{\int_\Theta -\theta  C_L(p(\theta_0),\theta)e^{-\Delta_{n-1}D(p,\theta)}\pi(\theta)d\theta}{\int_\Theta e^{-\Delta_{n-1}D(p,\theta)}\pi(\theta)d\theta}\\
&\phantom{=}+\Delta_{n-1}\frac{\left(\int_\Theta C_L(p(\theta_0),\theta)e^{-\Delta_{n-1}D(p,\theta)}\pi(\theta)d\theta\right)\left(\int_\Theta\theta e^{-\Delta_{n-1}D(p,\theta)}\pi(\theta)d\theta\right)}{\left(\int_\Theta e^{-\Delta_{n-1}D(p,\theta)}\pi(\theta)d\theta\right)^2}\\
&=-\Delta_{n-1} E_{\pi_D^n(\theta|p)}\left[\theta C_L(p(\theta_0),\theta)\right]\\
&\phantom{=}+\Delta_{n-1} E_{\pi_D^n(\theta|p)}\left[\theta\right]E_{\pi_D^n(\theta|p)}\left[C_L(p(\theta_0),\theta)\right].
\end{align*}

Thus, it is enough to prove that, for each $n\in\N$, $|E_{\pi_D^n(\theta|p)}\left[C_L(p(\theta_0),\theta)\right]|<\infty$, and $|E_{\pi_D^n(\theta|p)}\left[\theta C_L(p(\theta_0),\theta)\right]|<\infty$. For the former, one has
\begin{equation*}
|E_{\pi_D^n(\theta|p)}\left[C_L(p(\theta_0),\theta)\right]|\leq\frac{\int_\Theta |C_L(p(\theta_0),\theta)|e^{-\Delta_{n-1}r}\pi(\theta)d\theta}{\int_\Theta e^{-\Delta_{n-1}R}\pi(\theta)d\theta} \leq 2\overline{M}e^{\Delta_{n-1}(R-r)}<\infty,
\end{equation*}
where $\inf_{\theta\in\Theta} D(p,\theta)= r\geq 0$, and $\sup_{\theta\in\Theta} D(p,\theta)= R\leq M <\infty$. Analogously,
\begin{align*}
|E_{\pi_D^n(\theta|p)}\left[\theta C_L(p(\theta_0),\theta)\right]|&\leq\frac{\int_\Theta |\theta| |C_L(p(\theta_0),\theta)|e^{-\Delta_{n-1}r}\pi(\theta)d\theta}{\int_\Theta e^{-\Delta_{n-1}R}\pi(\theta)d\theta}\\
&\leq 2\overline{M}e^{\Delta_{n-1}(R-r)}\int_\Theta |\theta|\pi(\theta)d\theta<\infty,
\end{align*}
hence, the results follows.
\end{Prf}

%%%%%%%%%%%%%%%%%%%%%%%%%%%%%%%%%%%%%%%%%%%%%%%%%%%%%%%%%%%%%%%%%%%
\begin{Prf}[Theorem \ref{Dposterior:thm:breakdown-point-EDAP-MDAP}]
To prove this theorem and some of the following results, we will make use of the following lemma.

\begin{lem}\label{Dposterior:lem:1-breakdown-EDAP}
Let $\alpha\in (0,1)$ and consider a family of probability distributions $\{\bar{q}_L\}_{L\in\N_0}$ satisfying \ref{Dposterior:cond:contam-distr}~\ref{Dposterior:cond:ortog-distrib-contam}-\ref{Dposterior:cond:ortog-contam-family}, and assume that conditions \ref{Dposterior:eq:con:bound-disp}~\ref{Dposterior:cond:G-disparity} and \ref{Dposterior:cond:contam-distr}~\ref{Dposterior:cond:ortog-distrib-family} hold. Then, for any $\delta>0$:
\begin{enumerate}[label=(\roman*),ref=\emph{(\roman*)}]
\item There exist $L_1\in\N_0$, and $M_1>0$ such that
$$\sup_{|\theta|>M_1}\sup_{L>L_1} |D((1-\alpha)p+\alpha\bar{q}_{L},\theta)-D(\alpha\bar{q}_L,\theta)|<\delta.$$\label{Dposterior:lem:1-breakdown-EDAP-i}
\item For all $M_2>0$, there exists $L_2\in\N_0$ such that
$$\sup_{|\theta|<M_2}\sup_{L>L_2} |D((1-\alpha)p+\alpha\bar{q}_{L},\theta)-D((1-\alpha)p,\theta)|<\delta.$$\label{Dposterior:lem:1-breakdown-EDAP-ii}
\end{enumerate}
The results also hold for the Hellinger distance.
\end{lem}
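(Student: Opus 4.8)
The plan is to work directly from the series definition $D(q,\theta)=\sum_{k}G\bigl(q_k/p_k(\theta)-1\bigr)p_k(\theta)$, where the sum may be restricted to $\{k:p_k(\theta)>0\}$ since the remaining terms vanish (the Pearson residual is set to $0$ and $G(0)=0$), and to estimate each summand by a quantity of the form $\min(p_k,p_k(\theta))$ or $\min(\bar q_{L,k},p_k(\theta))$, whose total is controlled by the orthogonality hypotheses in \ref{Dposterior:cond:contam-distr}. Set $M=\sup_{\delta\ge-1}|G(\delta)|$ and $\overline M=\sup_{\delta\ge-1}|G'(\delta)|$, both finite by \ref{Dposterior:eq:con:bound-disp}~\ref{Dposterior:cond:G-disparity}, and write $C=\max(2M,\overline M)$.

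For \ref{Dposterior:lem:1-breakdown-EDAP-i}, the two disparities $D((1-\alpha)p+\alpha\bar q_{L},\theta)$ and $D(\alpha\bar q_{L},\theta)$ differ only through replacing $\alpha\bar q_{L,k}$ by $(1-\alpha)p_k+\alpha\bar q_{L,k}$, that is, through a shift of the Pearson residual by $(1-\alpha)p_k/p_k(\theta)$. I would bound the $k$-th summand in two complementary ways: by the mean value theorem it is at most $\overline M(1-\alpha)p_k$, while by boundedness of $G$ it is at most $2M\,p_k(\theta)$. Taking the smaller of the two and using $\min(\overline M(1-\alpha)p_k,2M p_k(\theta))\le C\min(p_k,p_k(\theta))$ (valid since $1-\alpha\le1$) yields $|D((1-\alpha)p+\alpha\bar q_{L},\theta)-D(\alpha\bar q_{L},\theta)|\le C\sum_{k}\min(p_k,p_k(\theta))$, a bound independent of $L$. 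Since the right-hand side tends to $0$ as $|\theta|\to\infty$ by \ref{Dposterior:cond:contam-distr}~\ref{Dposterior:cond:ortog-distrib-family}, choosing $M_1$ so that this sum falls below $\delta/C$ for $|\theta|>M_1$ settles \ref{Dposterior:lem:1-breakdown-EDAP-i} with $L_1$ arbitrary.

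For \ref{Dposterior:lem:1-breakdown-EDAP-ii}, symmetrically $D((1-\alpha)p+\alpha\bar q_{L},\theta)$ and $D((1-\alpha)p,\theta)$ differ through a residual shift of $\alpha\bar q_{L,k}/p_k(\theta)$, so the same two-sided estimate gives $|D((1-\alpha)p+\alpha\bar q_{L},\theta)-D((1-\alpha)p,\theta)|\le C\sum_{k}\min(\bar q_{L,k},p_k(\theta))$. The relevant quantity is now the contamination-versus-family overlap over the compact range $|\theta|\le M_2$, which tends to $0$ uniformly there as $L\to\infty$ by \ref{Dposterior:cond:contam-distr}~\ref{Dposterior:cond:ortog-contam-family}; taking $L_2$ accordingly gives the claim.

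The main obstacle is the Hellinger case, where $G(\delta)=2(\sqrt{\delta+1}-1)^2$ is unbounded above and $G'$ blows up as $\delta\to-1^{+}$, so neither elementary bound above is available. Here I would abandon the residual-shift estimate and instead exploit the closed form $HD(q,\theta)=2\sum_{k}(\sqrt{q_k}-\sqrt{p_k(\theta)})^2=2\bigl(\|q\|_1+1-2\rho(q,p(\theta))\bigr)$, where $\rho(u,v)=\sum_{k}\sqrt{u_k v_k}$ is the affinity, combined with $\sqrt{a+b}\le\sqrt a+\sqrt b$ and the affinity/total-variation bound $\rho(u,v)\le\bigl(2\sum_{k}\min(u_k,v_k)\bigr)^{1/2}$. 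This again reduces matters to the overlap sums $\sum_{k}\min(p_k,p_k(\theta))$ and $\sum_{k}\min(\bar q_{L,k},p_k(\theta))$ governed by \ref{Dposterior:cond:contam-distr}. The delicate point is that the affinity terms involving $\bar q_L$ do not cancel out automatically as in the bounded-$G$ argument, so the bookkeeping must be arranged carefully to recover an estimate that is genuinely uniform in $L$ for \ref{Dposterior:lem:1-breakdown-EDAP-i} and uniform in $|\theta|\le M_2$ for \ref{Dposterior:lem:1-breakdown-EDAP-ii}; this is the step I expect to demand the most care.
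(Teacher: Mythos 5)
Your bounded-$G$ argument is correct and complete, and it is precisely the intended proof: the paper omits it, referring to Lemma~1 of \cite{Hooker-Vidyshankar-2014}, whose proof rests on the same two-sided estimate you use --- the mean value bound $\overline{M}(1-\alpha)p_k$ (resp.\ $\overline{M}\alpha\bar{q}_{L,k}$) played against the crude bound $2Mp_k(\theta)$, the inequality $\min(ax,by)\le\max(a,b)\min(x,y)$, and then \ref{Dposterior:cond:contam-distr}~\ref{Dposterior:cond:ortog-distrib-family} for part \ref{Dposterior:lem:1-breakdown-EDAP-i} (with $L_1$ arbitrary, since your bound is uniform in $L$) and \ref{Dposterior:cond:contam-distr}~\ref{Dposterior:cond:ortog-contam-family} for part \ref{Dposterior:lem:1-breakdown-EDAP-ii}; your observation that the terms with $p_k(\theta)=0$ contribute zero to both disparities is also the right way to dispose of that boundary case.

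The step you flagged as ``demanding the most care'' --- the Hellinger case --- is in fact not completable, and your own affinity expansion shows why: the statement as literally written fails for $HD$. Writing $HD(q,\theta)=2\|q\|_1+2-4\rho(q,p(\theta))$ for a family with full support, the mass terms do not cancel. In \ref{Dposterior:lem:1-breakdown-EDAP-i} one has $\|(1-\alpha)p+\alpha\bar{q}_L\|_1-\|\alpha\bar{q}_L\|_1=1-\alpha$, while $0\le\rho((1-\alpha)p+\alpha\bar{q}_L,p(\theta))-\rho(\alpha\bar{q}_L,p(\theta))\le\sqrt{1-\alpha}\,\rho(p,p(\theta))\to0$ as $|\theta|\to\infty$ uniformly in $L$ (by $\sqrt{a+b}-\sqrt{b}\le\sqrt{a}$ and your affinity/overlap bound together with \ref{Dposterior:cond:contam-distr}~\ref{Dposterior:cond:ortog-distrib-family}), so the difference converges to $2(1-\alpha)>0$ rather than to $0$; similarly in \ref{Dposterior:lem:1-breakdown-EDAP-ii} it converges to $2\alpha$ uniformly on $\{|\theta|\le M_2\}$. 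A point mass $\bar{q}_L=\eta_L$ makes this concrete: the difference in \ref{Dposterior:lem:1-breakdown-EDAP-ii} equals
\begin{equation*}
2\Bigl\{\bigl(\sqrt{(1-\alpha)p_L+\alpha}-\sqrt{p_L(\theta)}\bigr)^2-\bigl(\sqrt{(1-\alpha)p_L}-\sqrt{p_L(\theta)}\bigr)^2\Bigr\}\longrightarrow 2\alpha.
\end{equation*}
Hence for the Hellinger distance the lemma must be read with the constants $2(1-\alpha)$ (part \ref{Dposterior:lem:1-breakdown-EDAP-i}) and $2\alpha$ (part \ref{Dposterior:lem:1-breakdown-EDAP-ii}) subtracted inside the absolute values --- and in that corrected form your affinity route proves it in two lines, with no delicate bookkeeping at all. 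The correction is harmless downstream: the additive constants only produce factors $e^{-2(1-\alpha)\Delta_{n-1}}$ and $e^{-2\alpha\Delta_{n-1}}$ that cancel between numerator and denominator of the normalized $D$-posterior, so the uses of the lemma in Theorems \ref{Dposterior:thm:breakdown-point-EDAP-MDAP}\ref{Dposterior:thm:breakdown-point-ii-MDAP}, \ref{Dposterior:thm:asympt-breakdown} and \ref{Dposterior:thm:convergence-Dposterior} go through for $HD$ unchanged.
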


\noindent The proof is analogous to that for Lemma 1 in \cite{Hooker-Vidyshankar-2014} and it is omitted.
\vspace{0.5cm}

Now, we shall prove Theorem \ref{Dposterior:thm:breakdown-point-EDAP-MDAP}.

\ref{Dposterior:thm:breakdown-point-i-EDAP} We follow the same steps as in the proof of Corollary 2 in \cite{Hooker-Vidyshankar-2014}. Since $G(\cdot)$ is strictly convex, $\inf_{\theta\in\Theta,q\in\Gamma} D(q,\theta)= r\geq 0$, and under the assumptions, $\sup_{\theta\in\Theta,q\in\Gamma} D(q,\theta)= R<\infty$. For each $\theta\in\Theta$, $\alpha\in [0,1]$, and $q\in\Gamma$, one has that  $e^{-\Delta_{n-1}R}\leq e^{-\Delta_{n-1}D((1-\alpha)p+\alpha q,\theta)} \leq e^{-\Delta_{n-1}r}$, and as a consequence,
\begin{equation*}
e^{-\Delta_{n-1}(R-r)} \int_\Theta|\theta|\pi(\theta)d\theta\leq |\overline{T}_n((1-\alpha)p+\alpha q)|\leq e^{-\Delta_{n-1}(r-R)}\int_\Theta|\theta|\pi(\theta)d\theta,
\end{equation*}
hence, the result yields.

\ref{Dposterior:thm:breakdown-point-ii-MDAP} %We follow similar steps to those in \cite{Park-Basu-2004} for Theorem 4.1. T
Throughout this proof, all the limits are taken as $L\to\infty$ unless specified otherwise. Let fix $\alpha\in (0,1)$, and assume that there is a breakdown at level $\alpha$, that is, $|\theta_L|\to\infty$, with $\theta_L=\arg\max_{\theta\in\Theta}\pi_D^n(\theta | (1-\alpha)p+\alpha \overline{q}_L)$.
%Since $G(\cdot)$ is convex and thrice differentiable, $G''(\cdot)\geq 0$ and $G'(\cdot)$ is an increasing function. Now, from this and the fact that $G'(0)=0$, one has $G'(\cdot)\geq 0$ in $(0,\infty)$. As a consequence, $G(\cdot)$ is an increasing function, and since $G(0)=0$, one obtains $G(\cdot)$ is non-negative in $(0,\infty)$, thus $0 \leq G'(\infty)=\lim_{t\to\infty} G'(t)$. From  \ref{Dposterior:cond:G-disparity} and \ref{Dposterior:cond:G-disparity-RAF}, one has that there exists $G^*=\sup_{t\in [-1,\infty)}\{\max(|G(t)|,|G'(t)|,|A(t)|)\}$. In particular, observe that $G'(\infty)<G^*$.
From the definition of $\theta_L$,
\begin{equation}\label{Dposterior:eq:ineq-breakdown-point-MDAP}
\pi_D^n(\theta_0 | (1-\alpha)p+\alpha \overline{q}_L)\leq \pi_D^n(\theta_L | (1-\alpha)p+\alpha \overline{q}_L).
\end{equation}
%If we prove that
From Lemma \ref{Dposterior:lem:1-breakdown-EDAP}~\ref{Dposterior:lem:1-breakdown-EDAP-i} and \ref{Dposterior:lem:1-breakdown-EDAP-ii} one obtains, respectively,
\begin{align}
\lim_{L\to\infty} D((1-\alpha)p+\alpha \overline{q}_L,\theta_L)&=\lim_{L\to\infty}D(\alpha\overline{q}_L,\theta_L),\label{Dposterior:eq:lim1-breakdown-point-MDAP}\\
\lim_{L\to\infty} D((1-\alpha)p+\alpha \overline{q}_L,\theta)&=D((1-\alpha)p,\theta),\quad \text{ for each }\theta\in\Theta,\label{Dposterior:eq:lim2-breakdown-point-MDAP}
\end{align}
then, taking limit as $L\to\infty$ in \eqref{Dposterior:eq:ineq-breakdown-point-MDAP} we obtain a contradiction, and as a consequence, there is no breakdown at level $\alpha$, for each $\alpha\in (0,1)$. Indeed, from \eqref{Dposterior:eq:lim2-breakdown-point-MDAP} and the dominated convergence theorem one has
\begin{equation*}
\lim_{L\to\infty} \int_{\Theta}e^{-\Delta_{n-1}D((1-\alpha)p+\alpha \overline{q}_L,\theta)}\pi(\theta)d\theta= \int_{\Theta}e^{-\Delta_{n-1}D((1-\alpha)p,\theta)}\pi(\theta)d\theta,
\end{equation*}
thus, for the first term in \eqref{Dposterior:eq:ineq-breakdown-point-MDAP}, since $D((1-\alpha)p,\theta_0)=G(-\alpha)$, we have the following limit
\begin{align*}
\lim_{L\to\infty} \pi_D^n(\theta_0 | (1-\alpha)p+\alpha \overline{q}_L)=\frac{e^{-\Delta_{n-1}G(-\alpha)}\pi(\theta_0)}{\int_{\Theta}e^{-\Delta_{n-1}D((1-\alpha)p,\theta)}\pi(\theta)d\theta}>0.
\end{align*}
For the right hand side in \eqref{Dposterior:eq:ineq-breakdown-point-MDAP}, from Jensen's inequality, one has $D(\alpha\overline{q}_L,\theta)\geq G(\alpha-1)$, and $D((1-\alpha)p,\theta)\geq G(-\alpha)$, for each $\theta\in\Theta$. Consequently, using \eqref{Dposterior:eq:lim1-breakdown-point-MDAP},
\begin{align*}
\lim_{L\to\infty} \pi_D^n(\theta_L | (1-\alpha)p+\alpha \overline{q}_L)
&=\frac{\lim_{L\to\infty} e^{-\Delta_{n-1}D(\alpha\overline{q}_L,\theta_L)}\pi(\theta_L)}{\int_{\Theta}e^{-\Delta_{n-1}D((1-\alpha)p,\theta)}\pi(\theta)d\theta}\\
&\leq \frac{e^{-\Delta_{n-1} G(\alpha-1)}}{\int_{\Theta}e^{-\Delta_{n-1}D((1-\alpha)p,\theta)}\pi(\theta)d\theta} \lim_{L\to\infty}\pi(\theta_L)=0,
\end{align*}
where we have applied that $\lim_{|\theta|\to\infty}\pi(\theta)=0$.
\end{Prf}

%%%%%%%%%%%%%%%%%%%%%%%%%%%%%%%%%%%%%%%%%%%%%%%%%%%%%%%%%%%%%%%%%%%

\begin{Prf}[Theorem \ref{Dposterior:thm:asympt-breakdown}]
In order to prove this theorem, we will make use of the following lemma.

\begin{lem}\label{Dposterior:lem:2-breakdown-EDAP}
Let us fix $\alpha\in (0,1)$. Under conditions of Lemma \ref{Dposterior:lem:1-breakdown-EDAP}, if there exists $\delta>0$ such that
\begin{equation*}
\inf_{L\in\N_0} \inf_{\theta\in\Theta} D(\alpha\bar{q}_{L},\theta) > \inf_{\theta\in\Theta} D((1-\alpha)p,\theta)+\delta,
\end{equation*}
then, there exists $L_0\in\N_0$ such that for every $\lambda\in(0,\delta)$, there is $M^*>0$ satisfying
\begin{equation*}\label{Dposterior:eq:lem2-breakdown-EDAP}
\inf_{L>L_0} \inf_{|\theta|>M^*} \left(D((1-\alpha)p+\alpha\bar{q}_L,\theta)-D\left((1-\alpha)p+\alpha\bar{q}_L,T\left((1-\alpha)p+\alpha\bar{q}_L\right)\right)\right)\geq\lambda.
\end{equation*}
\end{lem}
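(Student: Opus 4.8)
The plan is to control $D((1-\alpha)p+\alpha\bar{q}_L,\cdot)$ through the two limiting profiles furnished by Lemma \ref{Dposterior:lem:1-breakdown-EDAP}: on bounded parameter sets it is uniformly close to $D((1-\alpha)p,\cdot)$, while for large $|\theta|$ it is uniformly close to $D(\alpha\bar{q}_L,\cdot)$. Write $p_{\alpha,L}=(1-\alpha)p+\alpha\bar{q}_L$, $A=\inf_{\theta\in\Theta}D((1-\alpha)p,\theta)$ and $B_L=\inf_{\theta\in\Theta}D(\alpha\bar{q}_L,\theta)$, and recall that by definition of the disparity function $D(p_{\alpha,L},T(p_{\alpha,L}))=\inf_{\theta\in\Theta}D(p_{\alpha,L},\theta)$. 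Since $G(\cdot)$ is bounded by \ref{Dposterior:eq:con:bound-disp}~\ref{Dposterior:cond:G-disparity}, every disparity in sight is bounded, so $A$ and $B_L$ are finite; the hypothesis of the lemma reads $\inf_L B_L>A+\delta$, whence the quantity $\delta^{*}:=\inf_L B_L-A$ satisfies $\delta^{*}>\delta$ and $B_L\ge A+\delta^{*}$ for every $L\in\N_0$. The surplus $\delta^{*}>\delta$ is the feature I will exploit.

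First I would bound the minimum of $D(p_{\alpha,L},\cdot)$ from above. By definition of the infimum $A$, I fix $\theta_A\in\Theta$ with $D((1-\alpha)p,\theta_A)\le A+(\delta^{*}-\delta)/4$ and pick $M_2>|\theta_A|$. Applying Lemma \ref{Dposterior:lem:1-breakdown-EDAP}~\ref{Dposterior:lem:1-breakdown-EDAP-ii} on the bounded set $\{|\theta|<M_2\}$ with tolerance $(\delta^{*}-\delta)/4$ produces $L_0'\in\N_0$ such that, for $L>L_0'$,
\[
D(p_{\alpha,L},T(p_{\alpha,L}))\le D(p_{\alpha,L},\theta_A)\le D((1-\alpha)p,\theta_A)+\tfrac{\delta^{*}-\delta}{4}\le A+\tfrac{\delta^{*}-\delta}{2}.
\]

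Next I would bound $D(p_{\alpha,L},\theta)$ from below for large $|\theta|$. Applying Lemma \ref{Dposterior:lem:1-breakdown-EDAP}~\ref{Dposterior:lem:1-breakdown-EDAP-i} with tolerance $(\delta^{*}-\delta)/2$ yields $L_1\in\N_0$ and $M^{*}>0$ such that, for $L>L_1$ and $|\theta|>M^{*}$,
\[
D(p_{\alpha,L},\theta)>D(\alpha\bar{q}_L,\theta)-\tfrac{\delta^{*}-\delta}{2}\ge B_L-\tfrac{\delta^{*}-\delta}{2}\ge A+\delta^{*}-\tfrac{\delta^{*}-\delta}{2}.
\]
Setting $L_0=\max(L_0',L_1)$ and subtracting the upper bound of the previous paragraph, I obtain for every $L>L_0$ and every $|\theta|>M^{*}$
\[
D(p_{\alpha,L},\theta)-D(p_{\alpha,L},T(p_{\alpha,L}))\ge\Big(A+\delta^{*}-\tfrac{\delta^{*}-\delta}{2}\Big)-\Big(A+\tfrac{\delta^{*}-\delta}{2}\Big)=\delta.
\]
Because $\delta>\lambda$ for every $\lambda\in(0,\delta)$, this single uniform estimate dominates each admissible $\lambda$, so the claimed inequality holds with these $L_0$ and $M^{*}$, which in fact may be chosen independently of $\lambda$. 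The Hellinger case is handled identically, using that $HD$ is bounded and that Lemma \ref{Dposterior:lem:1-breakdown-EDAP} also applies to it.

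I expect the main obstacle to be the bookkeeping of the two one-sided approximations together with the gap hypothesis; the decisive point is to notice that the hypothesis delivers slack $\delta^{*}>\delta$ rather than exactly $\delta$, since this excess is precisely what lets a fixed tolerance split (here $(\delta^{*}-\delta)/2$ for each side) yield a clean bound of $\delta$ that beats every $\lambda<\delta$ at once and thereby dissolves the apparent awkwardness in the quantifier order of the statement. A secondary care point is ensuring that the near-minimizer $\theta_A$ is placed inside a fixed bounded set, so that Lemma \ref{Dposterior:lem:1-breakdown-EDAP}~\ref{Dposterior:lem:1-breakdown-EDAP-ii}, which applies only on sets $\{|\theta|<M_2\}$, can legitimately be invoked.
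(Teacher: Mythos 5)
Your proof is correct and follows essentially the intended route: the paper omits this proof, stating only that it is similar to Lemma 2 in \cite{Hooker-Vidyshankar-2014}, and the expected argument is precisely yours — bound $D((1-\alpha)p+\alpha\bar{q}_L,T((1-\alpha)p+\alpha\bar{q}_L))$ from above via Lemma \ref{Dposterior:lem:1-breakdown-EDAP}~\ref{Dposterior:lem:1-breakdown-EDAP-ii} applied at a fixed near-minimizer of $D((1-\alpha)p,\cdot)$ placed in a bounded set, bound $D((1-\alpha)p+\alpha\bar{q}_L,\theta)$ from below for $|\theta|$ large via Lemma \ref{Dposterior:lem:1-breakdown-EDAP}~\ref{Dposterior:lem:1-breakdown-EDAP-i} together with the gap hypothesis, and subtract. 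Your exploitation of the strict inequality in the hypothesis (the slack $\delta^{*}>\delta$) is sound and even yields $L_0$ and $M^{*}$ independent of $\lambda$, a conclusion slightly stronger than the one stated.
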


\noindent The proof is similar to that for Lemma 2 in \cite{Hooker-Vidyshankar-2014} and it is omitted.

\vspace{0.5cm}

We shall prove Theorem \ref{Dposterior:thm:asympt-breakdown}.

\ref{Dposterior:thm:asympt-breakdown-EDAP} To that end, let denote
\begin{align*}
B_1&=\{\alpha\in (0,1):b(\alpha,T,p)<\infty\},\\
B_2&=\{\alpha\in (0,1):\limsup_{n\to\infty} b(\alpha,\overline{T}_n,p)<\infty\}.
\end{align*}
We shall prove that $B_1=B_2$ and, as a result, we obtain $B(T,p)=B(\{\overline{T}_n\}_{n\in\N},p)$.

\vspace{1ex}

Let assume that $\alpha\in B_1^c$; then, we can find a sequence of probability distributions $\{\bar{q}_{L}\}_{L\in\N}$ such that $|T((1-\alpha)p+\alpha\bar{q}_{L})|\to\infty$, as $L\to\infty$.  From Theorem \ref{Dposterior:thm:aprox-MDE}~\ref{Dposterior:thm:aprox-MDE-i-EDAP}, one deduces that for each $L\in\N$,
\begin{equation*}
\forall \epsilon>0,\ \exists k_L=k_L(\epsilon)\in\N:\  |\overline{T}_n((1-\alpha)p+\alpha\bar{q}_{L})-T((1-\alpha)p+\alpha\bar{q}_{L})|<\epsilon,\quad \forall n\geq k_L,
\end{equation*}
and with $k_L\to\infty$, as $L\to\infty$. Let $M>0$, and let fix $\epsilon=1$, and $k_L=k_L(1)$, $L\in\N$. From the convergence $|T((1-\alpha)p+\alpha\bar{q}_{L})|\to\infty$, as $L\to\infty$, one has
\begin{equation*}
\exists L_0\in\N:\  |T((1-\alpha)p+\alpha\bar{q}_{L})|>M+1,\quad \forall L\geq L_0.
\end{equation*}
Now, by using the triangle inequality, for $L\geq L_0$ and $n\geq k_L$,
\begin{align*}
|\overline{T}_{n}((1-\alpha)p+\alpha\bar{q}_{L})|&\geq |T((1-\alpha)p+\alpha\bar{q}_{L})|\\
&\phantom{\geq}-|\overline{T}_{n}((1-\alpha)p+\alpha\bar{q}_{L})-T((1-\alpha)p+\alpha\bar{q}_{L})|>M.
\end{align*}
From all the above, and the fact that  $\overline{T}_n(p)\to \theta_0$, as $n\to\infty$, one has that $\limsup_{n\to\infty} b(\alpha,\overline{T}_n,p)=\infty$, and $\alpha\in B_2^c$.

\vspace{1ex}

Now, we shall prove that if $\alpha\in B_1$, then $\alpha\in B_2$. To that end, we prove that for each $n\in\N$, and any sequence of contaminating distributions $\{\bar{q}_L\}_{L\in\N}$ in $\overline{\Gamma}$, there exists $L_0\in\N$ such that $|\overline{T}_n((1-\alpha)p+\alpha\bar{q}_{L})|\leq M_n$, if $L\geq L_0$, where $\overline{\Gamma}$ denotes the family of contaminating distributions satisfying \ref{Dposterior:cond:contam-distr}~\ref{Dposterior:cond:ortog-distrib-contam}-\ref{Dposterior:cond:ortog-contam-family}, and $M_n\to M$, for some $M>0$.

On the one hand, since $\alpha\in B_1$, $\sup_{\bar{q}\in\overline{\Gamma}} |T((1-\alpha)p+\alpha\bar{q})|<\infty$.

Let fix $\lambda\in (0,\delta)$. From Lemma \ref{Dposterior:lem:2-breakdown-EDAP}, one can find $L_0\in\N_0$ and $M^*>0$ satisfying
\begin{equation*}\label{Dposterior:eq:asympt-break-point-EDAP}
D((1-\alpha)p+\alpha\bar{q}_{L},\theta)\geq D\left((1-\alpha)p+\alpha\bar{q}_{L},T\left((1-\alpha)p+\alpha\bar{q}_{L}\right)\right)+\lambda,
\end{equation*}
for each $L\geq L_0$ and each $|\theta|\geq M^*$, consequently,
$$e^{-\Delta_{n-1}D((1-\alpha)p+\alpha\bar{q}_{L},\theta)}\leq e^{-\Delta_{n-1}D\left((1-\alpha)p+\alpha\bar{q}_{L},T\left((1-\alpha)p+\alpha\bar{q}_{L}\right)\right)-\Delta_{n-1}\lambda},$$
for each $L\geq L_0$ and each $|\theta|\geq M^*$.

Moreover, since $\alpha\in B_1$, one can deduced, using the mean value theorem, condition (A4) in \cite{art-MDE} and \ref{Dposterior:cond:bound-RAF} (or alternatively condition (A6) in \cite{art-MDE} for the Hellinger distance), that for the fixed $\lambda$, there exists $\epsilon>0$ such that if $|\theta-T((1-\alpha)p+\alpha\bar{q}_{L})|<\epsilon$, then
$$D((1-\alpha)p+\alpha\bar{q}_{L},\theta)<D((1-\alpha)p+\alpha\bar{q}_{L},T((1-\alpha)p+\alpha\bar{q}_{L}))+\lambda/2,$$
for any $L\in N_0$, and hence
$$e^{-\Delta_{n-1}D((1-\alpha)p+\alpha\bar{q}_{L},\theta)}>e^{-\Delta_{n-1}D((1-\alpha)p+\alpha\bar{q}_{L},T((1-\alpha)p+\alpha\bar{q}_{L}))-\Delta_{n-1}\lambda/2}.$$

Let denote $B=\{\theta\in\Theta:|\theta|<M^*\}$ and $K(x)=\left(\int_{x-\epsilon}^{x+\epsilon} \pi(\theta)d\theta\right)^{-1}\int_{M^*}^{\infty} |\theta|\pi(\theta)d\theta$. Then, for $L\geq L_0$,
\begin{align*}
|\overline{T}_n&((1-\alpha)p+\alpha\bar{q}_{L})|
=\int_{B} |\theta| \pi_D^n(\theta|(1-\alpha)p+\alpha\bar{q}_{L})d\theta\\
&\phantom{=} +\int_{B^c} |\theta| \pi_D^n(\theta|(1-\alpha)p+\alpha\bar{q}_{L})d\theta\\
&\leq  M^* +\frac{\int_{B^c} |\theta| e^{-\Delta_{n-1}D((1-\alpha)p+\alpha\bar{q}_{L},\theta)}\pi(\theta)d\theta}{\int_{T((1-\alpha)p+\alpha\bar{q}_{L})-\epsilon}^{T((1-\alpha)p+\alpha\bar{q}_{L})+\epsilon} e^{-\Delta_{n-1}D((1-\alpha)p+\alpha\bar{q}_{L},\theta)}\pi(\theta)d\theta}\\
&\leq  M^* +\frac{e^{-\Delta_{n-1}D((1-\alpha)p+\alpha\bar{q}_{L},T((1-\alpha)p+\alpha\bar{q}_{L}))-\Delta_{n-1}\lambda} \int_{B^c} |\theta| \pi(\theta)d\theta}{e^{-\Delta_{n-1}D((1-\alpha)p+\alpha\bar{q}_{L},T((1-\alpha)p+\alpha\bar{q}_{L}))-\Delta_{n-1}\lambda/2}\int_{T((1-\alpha)p+\alpha\bar{q}_{L})-\epsilon}^{T((1-\alpha)p+\alpha\bar{q}_{L})+\epsilon} \pi(\theta)d\theta}\\
&\leq  M^*+e^{-\Delta_{n-1}\lambda/2}K^*,
\end{align*}
where $K^*=\sup\{K(t):|t|\leq \sup_{L^*>L_0} |T((1-\alpha)p+\alpha\bar{q}_{L^*})|\}$. Note that $K^*<\infty$ due to the fact that $\alpha\in B_1$. Thus, $M^*+e^{-\Delta_{n-1}\lambda/2}K^*\to M^*$, as $n\to\infty$.

\ref{Dposterior:thm:asympt-breakdown-MDAP} Let write $\theta^*=\arg\max_{\theta\in\Theta}\pi(\theta)$, $\overline{g}_n(q,\theta)=\Delta_{n-1}D(q,\theta)-\log (\pi(\theta^*))$, and
\begin{align*}
%B_1&=\{\alpha\in (0,1):b(\alpha,T,p)<\infty\},\\
\widetilde{B}_2=\{\alpha\in (0,1):\limsup_{n\to\infty} b(\alpha,\widetilde{T}_n,p)<\infty\},
\end{align*}
and, as done before, we shall prove that $B_1=\widetilde{B}_2$ to obtain $B(T,p)=B(\{\widetilde{T}_n\}_{n\in\N},p)$.

Assume that $\alpha\in B_1$, and observe that for each $q\in\Gamma$, $g_n(q,\theta)\geq\overline{g}_n(q,\theta)$, for each $\theta\in\Theta$, and $g_n(q,\theta^*)=\overline{g}_n(q,\theta^*)$,
hence, for $n\in\N$ sufficiently large $|\widetilde{T}_n(q)-\theta^*|\leq|T(q)-\theta^*|$. As a consequence,
\begin{eqnarray*}
|\widetilde{T}_n((1-\alpha)p+\alpha\bar{q})-\widetilde{T}_n(p)|&\leq&|\widetilde{T}_n((1-\alpha)p+\alpha\bar{q})-\theta^*|+|\theta^*-\widetilde{T}_n(p)|\\
&\leq &|T((1-\alpha)p+\alpha\bar{q})-T(p)|+|T(p)-\theta^*|\\
&\phantom{\leq}&+|\theta^*-T(p)|,
\end{eqnarray*}
and $b(\alpha,\widetilde{T}_n,p)\leq b(\alpha,T,p)+2|T(p)-\theta^*|$; thus, $\alpha\in\widetilde{B}_2$.

The fact that if $\alpha\in B_1^c$, then $\alpha\in \widetilde{B}_2^c$, can be proved in an identical way to that in \ref{Dposterior:thm:asympt-breakdown-EDAP}.
\end{Prf}

%%%%%%%%%%%%%%%%%%%%%%%%%%%%%%%%%%%%%%%%%%%%%%%%%%%%%%%%%%%%%%%%%%%

\begin{Prf}[Theorem \ref{Dposterior:thm:convergence-Dposterior}]
\ref{Dposterior:thm:convergence-Dposterior-i} is immediate using the same arguments as in the proof of Theorem \ref{Dposterior:thm:breakdown-point-EDAP-MDAP}~\ref{Dposterior:thm:breakdown-point-ii-MDAP}.

\ref{Dposterior:thm:convergence-Dposterior-ii} Let denote
\begin{align*}
J&=\int_{\Theta} e^{-\Delta_{n-1}D((1-\alpha)p,\theta)}\pi(\theta)d\theta,\\
J_L&=\int_\Theta e^{-\Delta_{n-1}D((1-\alpha)p+\alpha \bar{q}_{L},\theta)}\pi(\theta)d\theta,\quad \text{  for each }L\in\N_0.
\end{align*}
With this notation, for each $\theta\in\Theta$, one has
\begin{align*}
\big|\pi_D^n(\theta|(1-\alpha)p+\alpha \bar{q}_{L})&-\pi_D^n(\theta|(1-\alpha)p)\big|
=\frac{\pi(\theta)}{J\cdot J_L}\\
&\phantom{\leq}\cdot\left|J e^{-\Delta_{n-1}D((1-\alpha)p+\alpha \bar{q}_{L},\theta)}- J_L e^{-\Delta_{n-1}D((1-\alpha)p,\theta)}\right|\\
&\leq \frac{1}{J_L}  \left|e^{-\Delta_{n-1}D((1-\alpha)p+\alpha \bar{q}_{L},\theta)}- e^{-\Delta_{n-1}D((1-\alpha)p,\theta)}\right|\pi(\theta)\\
&\phantom{\leq} +\frac{|J-J_L|}{J\cdot J_L} e^{-\Delta_{n-1}D((1-\alpha)p,\theta)}\pi(\theta).
\end{align*}
Observe that $J_L\to J$, as $L\to\infty$, and using Jensen's inequality, one has that\linebreak $D((1-\alpha)p,\theta)\geq G(-\alpha)$, and consequently $\int_\Theta e^{-\Delta_{n-1}D((1-\alpha)p,\theta)}\pi(\theta) d\theta\leq e^{-\Delta_{n-1} G(-\alpha)}$; hence, it suffices to prove that $I_L\to 0$, as $L\to\infty$, where
\begin{equation*}
I_L=\int_\Theta \left|e^{-\Delta_{n-1}D((1-\alpha)p+\alpha \bar{q}_{L},\theta)}- e^{-\Delta_{n-1}D((1-\alpha)p,\theta)}\right| \pi(\theta) d\theta.
\end{equation*}

Let consider the function $f: x\in [c,\infty)\rightarrow f(x)=e^{-\Delta_{n-1}x}$, with \linebreak$c=\min(0,G(-\alpha))$, and let fix $\epsilon>0$. Since $f(\cdot)$ is uniformly continuous, there exists $\delta=\delta(\epsilon)$ such that if $|x-y|<\delta$, then $|e^{-\Delta_{n-1}x}-e^{-\Delta_{n-1}y}|<\epsilon$. Given $M>0$, from Lemma \ref{Dposterior:lem:1-breakdown-EDAP}~\ref{Dposterior:lem:1-breakdown-EDAP-ii}, one has that there exists $L_0=L_0(M,\delta)$ such that
\begin{equation*}
|D((1-\alpha)p+\alpha \bar{q}_{L},\theta)-D((1-\alpha)p,\theta)|<\delta,\quad \forall\theta\in\Theta : |\theta|<M, \forall L>L_0.
\end{equation*}
Let write
\begin{align*}
I_{L,1}^{(M)}&=\int_{\{|\theta|<M\}} \left|e^{-\Delta_{n-1}D((1-\alpha)p+\alpha \bar{q}_{L},\theta)}- e^{-\Delta_{n-1}D((1-\alpha)p,\theta)}\right| \pi(\theta) d\theta,\\
I_{L,2}^{(M)}&=\int_{\{|\theta|\geq M\}} \left|e^{-\Delta_{n-1}D((1-\alpha)p+\alpha \bar{q}_{L},\theta)}- e^{-\Delta_{n-1}D((1-\alpha)p,\theta)}\right| \pi(\theta) d\theta,
\end{align*}
thus, $I_L=I_{L,1}^{(M)}+I_{L,2}^{(M)}$. On the one hand, $I_{L,1}^{(M)}\leq \epsilon\int_{\{|\theta|<M\}} \pi(\theta)d\theta\leq \epsilon$, for each $L>L_0$. On the other hand, $I_{L,2}^{(M)}\leq \left(1+e^{-\Delta_{n-1} G(-\alpha)}\right)\int_{\{|\theta|\geq M\}} \pi(\theta)d\theta$, for each $L\in\N_0$. Combining all the above, one has that for any $M>0$, and $\epsilon>0$,
\begin{equation*}
\limsup_{L\to\infty} I_L\leq \epsilon+\left(1+e^{-\Delta_{n-1} G(-\alpha)}\right)\int_{\{|\theta|\geq M\}} \pi(\theta)d\theta.
\end{equation*}
Thus, since $\lim_{M\to\infty} \int_{\{|\theta|\geq M\}} \pi(\theta)d\theta=0$, one obtains $\limsup_{L\to\infty} I_L\leq \epsilon$ and taking limit as $\epsilon\to 0$, one has that $\lim_{L\to\infty} I_L=0$.
\end{Prf}

\subsection{Additional details of the example in Subsection \ref{Dposterior:ex:oligo}}

\subsubsection{Sensitivity analysis}

A brief description of the sensitivity analysis carried out for the example on oligodendrocyte cell populations is presented in this subsection in Tables \ref{Dposterior:tab:real-ex-sens-edap-1}, \ref{Dposterior:tab:real-ex-sens-mdap-1}, \ref{Dposterior:tab:real-ex-sens-edap-2} and \ref{Dposterior:tab:real-ex-sens-mdap-2}.

\subsection{Additional details of the simulated example in Subsection \ref{Dposterior:ex:sim}}\label{sup:sec:ex-sim}

\subsubsection{Simulated data}

The data for the simulated example are provided in Tables \ref{Dposterior:table:sim-data-1} and \ref{Dposterior:table:sim-data-2}. Recall that in this example, the initial number of individuals is $Z_0=1$, the reproduction law is a geometric distribution with parameter $\theta_0=0.3$ and which is contaminated by outliers, which can happen at the point 11 with probability 0.15. The control variables $\phi_n(k)$ has Poisson distributions with mean $\lambda k$, for each $k,n\in\N_0$.

\subsubsection{Sensitivity analysis}\label{sup:subsec:ex-sim-sensi}

A summary of the sensitivity analysis performed for the simulated example is provided in this subsection. For sake of brevity, we show the results for the generations 25 (in Tables \ref{Dposterior:table-sens25-I}, \ref{Dposterior:table-sens25-II}, \ref{Dposterior:table-sens25-III} and \ref{Dposterior:table-sens25-IV}), 35  (in Tables \ref{Dposterior:table-sens35-I}, \ref{Dposterior:table-sens35-II}, \ref{Dposterior:table-sens35-III} and \ref{Dposterior:table-sens35-IV}) and 45  (in Tables \ref{Dposterior:table-sens45-I}, \ref{Dposterior:table-sens45-II}, \ref{Dposterior:table-sens45-III} and \ref{Dposterior:table-sens45-IV}).

\newpage
%%%%%%%%%%%%%%%%%%%%%%%%%%%%%%%%%%%%%%%%%%%%%%%%%%%%%%%%%%%%%%%%%
% Cambio de interlineado en las tablas giradas para que quepan
%%%%%%%%%%%%%%%%%%%%%%%%%%%%%%%%%%%%%%%%%%%%%%%%%%%%%%%%%%%%%%%%%
\renewcommand{\arraystretch}{1}

\begin{table}[H]\centering
\vspace*{2cm}
\rotatebox{90}{\centering\scalebox{0.8}{% [inline block 0: 18 envs, 54460 chars -> data_tex | \begin{tabular}{|ccc|ccc|ccc|ccc|} \cline{4-12}...]
}
\caption{Sensitivity analysis for generation $n=45$ (continuation).}\label{Dposterior:table-sens45-IV}
\end{table}

%%%%%%%%%%%%%%%%%%%%%%%%%%%%%%%%%%%%%%%%%%%%%%%%%%%%%%%%%%%%%%%%%%%%
%\section*{References \vspace{-1cm}}
%\bibliographystyle{plain}
%\bibliography{bibtes_tesis}
%%\bibliographystyle{plain}
%%%Para cargar la bibliografía
%%\bibliography{bibtes}
%%%%%%%%%%%%%%%%%%%%%%%%%%%%%%%%%%%%%%%%%%%%%%%%%%%%%%%%%%%%%%%%%%%%

\end{document}